\newtheorem{Theorem}{Theorem}
\newtheorem{Remark}{Remark}
\newtheorem{Assumption}{Assumption}
\newtheorem{Lemma}{Lemma}
\newtheorem{Definition}{Definition}
\newtheorem{Corollary}{Corollary}
\newenvironment{proof}{{\emph {Proof.}}}{\hfill $\square$\par}
\begin{document}
%
\title{Invariance and Contraction in Geometrically Periodic Systems with Differential Inclusions}
%
%
%

\author{Chen~Qian and
        Yongchun~Fang,~\IEEEmembership{Senior Member,~IEEE}
\thanks{This work was supported by National Natural Science Foundation
of China under Grant 61873132, Grant 61633012. (Corresponding author: Yongchun
Fang) 

The authors are with the Institute of Robotics and Automatic Information
Systems, College of Artificial Intelligence, and also with the
Tianjin Key Laboratory of Intelligent Robotics, Nankai University, Tianjin
300353, China (e-mail: chainplain@mail.nankai.edu.cn; fangyc@nankai.edu.cn).
}
        }
%
%

\markboth{}%
{Shell \MakeLowercase{\textit{et al.}}: Bare Demo of IEEEtran.cls for IEEE Journals}
%



\def\Revised#1{{\textcolor[rgb]{0.2,0.2,0.65}{#1}}}
\def\Origin#1{{\textcolor[rgb]{0.65,0.2,0.2}{#1}}}

\maketitle

\begin{abstract}
The objective of this paper is to derive the essential invariance and contraction properties for the geometric periodic systems, which can be formulated as a category of differential inclusions, and primarily rendered in the phase coordinate, or the cycle coordinate.
First, we introduce the geometric averaging method for this category of systems, and also analyze the accuracy of its averaging approximation. 
Specifically, we delve into the details of the geometrically periodic system through the tunnel of considering the convergence between the system and its geometrically averaging approximation.
Under different corresponding conditions, the approximation on infinite time intervals can achieve certain accuracies, such that one can use the stability result of either the original system or the averaging system to deduce the stability of the other. 
After that, we employ the graphical stability to investigate the ``pattern stability'' with respect to the phase-based system.
Finally, by virtue of the contraction analysis on the Finsler manifold, the idea of accentuating the periodic pattern incremental stability and convergence is nurtured in the phase-based differential inclusion system, and comes to its preliminary  fruition in application to biomimetic mechanic robot control problem. 
\end{abstract}

\begin{IEEEkeywords}
Periodic system, differential inclusion, averaging method, invariance, and contraction theory.
\end{IEEEkeywords}

%
\IEEEpeerreviewmaketitle

{

\section{Introduction}
Periodic systems and periodic motions exist ubiquitously in every field of science and everywhere in real life \cite{Farkas}, which can be observed in celestial movements, chemical reactions, ecological systems, robotic locomotions and so on and so forth.
Generally, periodicity indicates the ordered recurrence, where some specific patterns are naturally or manually embedded.
Through the lens of the control community, the periodicity has long been considered as one of the central issues in both theory and practice.
One of the most common concerns of periodicity lies in handling periodic disturbances, and recent researches have been dedicated to dealing with varying periods \cite{Tomizuka,Yao}.
Researches search for control strategy which aims at striking a balance between periodic sampled-data and event-triggered control \cite{Heemels,Tabuada}.
There also exist researches finding control strategy or specific conditions which enable stable limit cycles or periodic solutions \cite{Xiao,ManchesterT}.  

As the periodicity endows living organisms with positive functional advantages \cite{Rapp,Marder}, it also prevails in biomimetic locomotions. The famous central pattern generator produces fruitful results in the interaction of neuroscience and robotics, which helps in achieving rhythmic locomotions under complicated situations \cite{Ijspeert}. 
Averaging method is also a very effective approach to develop general feedback control strategies for underactuated dynamical systems, especially for biomimetic locomotors, which are generically underactuated dynamical systems driven by periodic inputs \cite{Vela,Maggia}. 
Beyond these techniques, the combination of the Poincar\'e return map and the hybrid invariance theory excels in finding robust and stabilizing continuous-time feedback laws for walking gaits of bipedal robots \cite{MorrisB,Chevallereau2009,Hamed}.

In this paper, we describe the geometrically periodic system with the differential inclusions, whose concept is borrowed from the averaging analysis over angles given in \cite{Sanders}. The geometrical characteristics are addressed to distinguish from the time-based periodic system.
As advocated in \cite{Farkas}, periodic phenomenon sometimes can be characterized by the phase coordinate, and the system is accordingly called cylindrical. Moreover, the utilization of differential inclusion enables us to describe the \emph{uncertainty} in the periodic systems \cite{AubinDI}, which further makes the analyses and theorems cultivated in this soil more practical and applicable especially for economy, biology, quantum theorem, to name just a few. 
The invariance principle states the convergence of bounded solutions to the invariant sets, which has already been generalized to various systems, such as hybrid systems \cite{Hinvari}, switched systems \cite{Bacciotti}.

It is also reported that a common candidate Lyapunov function with a negative semidefinite generalized time derivative along the trajectories of the subsystems suffices to establish invariance-like results \cite{KamalapurkarA,KamalapurkarB}.
Moreover, we use the contraction theory on Finsler manifold to analyze the convergent behavior between any pair of trajectories or solutions, which unfolds the incremental stability analysis to a more expansive and essential extend.
By differentially analyzing the incremental stability of the system, the contraction theory provides an elegant approach to offer an extraordinary glimpse of the overall system only from every local subtle system properties \cite{WangSlo}.
Since all the solutions of a contractive system converge to a unique steady-state solution, it forgets the differences in initial conditions or past inputs \cite{Lohmiller,Forni}.

One of the important motivations of this paper is to attack the uncertain periodic system control problem, such as biomimetic robotic dynamic locomotions.
\Revised{Indeed, the typical description of periodicity normally enjoys time as its coordinate, especially when the periodicity is implanted by exact exogenous forcing. 
However, the periodicity may also behave autonomously and endogenously, or the system is simply perturbed by uncertain exogenous forcing, 
such that the uncertainty arises, bringing us into the circumstance focused in the present paper, which is more pervasive, especially for organisms and autonomous systems, since we are generally reluctant or incapable to measure time to the accuracy required by dynamical motions, nor arbitrarily impose extensive control efforts against the general trend of system evolution, but rather leverage targeted strategies according to a key observation of the trend.
Moreover, due to the lack of the system exact knowledge and the possible uncertain perturbations, the period time measurement becomes inconsistent and nearly impossible. For example, regarding the pendulum system with varying or uncertain length, it is difficult to find its natural frequency and exhausting to specify its periodic behavior with the time coordinate. 
Generally, the uncertainty emerges and we have to accommodate.
Therefore, we seek for other intrinsic coordinates which can geometrically describe the universal behavior of the system, 
which are collectively called phase coordinates.}
To this end, we further employ graphical stability to describe the periodic patterns on this manifold, and further seek the criteria of their invariance and contraction.

The main contributions are concluded as follows:
\begin{enumerate}
  \item The exact definition of the geometrically periodic systems and their averaging approximations are given. With certain conditions specified, we also obtain the accuracy of this geometrically averaging approximation.
  \item The stabilities of the original system and its geometrically averaged one are studied. The corresponding stability of one of the two systems can be verified when the other is guaranteed asymptotically stable or forward invariant.
  \item Both the single-valued and the set-valued periodic pattern contractions can be assured by developing candidate Finsler-Lyapunov functions, whose derivatives need to be guaranteed smaller than certain negative terms. 
\end{enumerate}


Through these three contributions, we successfully build a basic theoretical framework for analyzing the geometrically periodic systems with differential inclusions.

The remainder of the paper is organized in the following manner. In Section {\uppercase\expandafter{\romannumeral2}}, we give the averaging approximations of the geometrically periodic systems. Then we investigate the convergence between the geometrically periodic system and its averaging approximation in Section {\uppercase\expandafter{\romannumeral3}}. In Section {\uppercase\expandafter{\romannumeral4}}, we perform the contraction analysis of the phase-based system on the Finsler manifold. And in Section {\uppercase\expandafter{\romannumeral5}}, we give a general conclusion and a brief outlook.

\section{Geometric Averaging Approximation}

The nature of some specific systems makes it suitable to perform the averaging process in a geometric way.
These systems can be formulated by the following differential inclusions:
\begin{align}
\label{eq:ProbForm}
\begin{array}{*{20}{c}}
   {\dot {\bm x} \in \varepsilon \bm X\left( {\phi ,{\bm x}} \right),}  \\
   {\dot \phi  \in \Omega \left( {\bm x} \right),}  \\
\end{array}{\rm{     }}\begin{array}{*{20}{c}}
   {{\bm x}\left( t_0 \right) = {{\bm x}_0},}  \\
   {\phi \left( t_0 \right) = {\phi _0},}  \\
\end{array}
\end{align}
where $\bm x \in D \subset {\mathbb R^n}$ represents the available
state vector in the $n$-dimensional Euclidean space ${\mathbb R^n}$, with $D$ being the compact, convex, and path-connected domain of the set-valued map $\bm X$, \emph{i.e.} ${\rm dom} \bm X$, and
$\dot \star = {{{\rm{d}}\star} \mathord{\left/
 {\vphantom {{{\rm{d}}\star} {{\rm{d}}t}}} \right.
 \kern-\nulldelimiterspace} {{\rm{d}}t}}$,
$\phi  \in {\mathcal  S^1}$ is the 2$\pi$-periodic angle
which describes the periodic characteristic of the system, and $\varepsilon>0$ denotes a small parameter.
Moreover,  $t_0$ is the initial time instant,
 $\phi_0$ is the initial phase angle on $1$-dimensional sphere $\mathcal S^1$, and $\bm x_0$ is the initial state.
In the discussion of the periodic orbit, $\phi$ can be viewed as a scalar coordinate traveling along the orbit.
\Revised{There are numerous approaches to choose the scalar $\phi$: it may be a natural generalized coordinate for the standard mechanical system such that it can be viewed as a decomposable 1-dimensional state, a measure of an orientation-preserving evolution in the phase space, or yield from the phase description of specific virtual holonomic or nonholonomic constraints (constraints that only related to generalized coordinates and achieved by feedback control rather than physical constraints) for specific gaits.
In certain circumstances, $\phi$ does not only rely on the instantaneous value of the state $\bm x$, but also on
 its history, such that the states with the same value may be assigned to different $\phi$, due to their different evolution histories.}
Moreover, $\bm X:{ {\mathcal  S^1} } \times {\mathbb R^n} \to \rm{comp}\left( {{\mathbb R^n}} \right)$ is the set-valued map to describe the system, with $\rm{comp}\left( {{\mathbb R^n}} \right)$
being the set of all the nonempty compact subsets of ${{\mathbb R^n}}$.
On this set, the Hausdorff metric between those subsets is defined:
\begin{equation}
\label{eq:HausMetric}
{d_{{H}}}\left( {S_0,S_1} \right) = \max \left\{ {\mathop {\sup }\limits_{\bm x_0 \in  S_0} \underline d\left( {\bm x_0, S_1} \right),\mathop {\sup }\limits_{\bm x_1 \in  S_1} \underline d\left( {\bm x_1,S_0} \right)} \right\},\nonumber
\end{equation}
where $\underline d\left( {\bm x_0, S_1} \right) = \mathop {\inf }_{\bm x_1 \in S_1} d\left( {\bm x_0,\bm x_1} \right)$, $\bm x_0, \bm x_1 \in \mathbb R^n$, $S_0, S_1 \in {\rm{comp}}\left( {{\mathbb R^n}} \right)$ and $d_H:\rm{comp}\left( {{\mathbb R^n}} \right) \times \rm{comp}\left( {{\mathbb R^n}} \right) \to \mathbb R_{\ge 0}$ is a pseudometric on $\rm{comp}\left( {{\mathbb R^n}} \right)$, $\mathbb R_{\ge 0}:=\left[0,+\infty\right)$.
Additionally, $ \Omega: {\mathbb R^n} \to \rm{comp}\left( {{\mathbb R}} \right)$ is another set-valued map for angular velocity, whose range, defined as ${{\rm rge}}~\Omega : = \bigcup_{\bm x \in D} { \Omega\left(\bm x \right)}$, is supposed to be a nonempty compact subset of 1-dimensional Banach space.
The original system~\eqref{eq:ProbForm} can be associated with an averaged system, which is formulated as
\begin{align}
\label{eq:avProbForm}
\begin{array}{*{20}{c}}
   {\dot {\bm y} \in \varepsilon \bar {\bm X} \left( \bm y \right),}  \\
   {\dot \psi  \in \Omega \left( \bm y \right),}  \\
\end{array}\begin{array}{*{20}{c}}
   {\bm y\left( t_0 \right) = {\bm x_0},}  \\
   {\psi \left( t_0 \right) = {\phi _0},}  \\
\end{array}
\end{align}
where the averaged right-hand side is given by
\begin{align}
\label{eq:avrInt}
\bar {\bm X} \left( \bm x \right) = \frac{1}{{2\pi }}\oint_{\mathcal  S^1} \bm X \left( {\phi ,\bm x} \right){\rm{d}}\sigma_{\mathcal S^1},
\end{align}
with $\sigma_{\mathcal S^1}$ being the $\sigma$-finite surface measure of ${\mathcal S^1}$, and the integral of differential inclusion performed in the sense of Aumann \cite{Aumman}, where the measure is non-atomic 
(\emph{c.f.} \cite{Aubin} for more details).
The averaging is analog to the single-valued version discussed in \cite{Sanders}, and its scope is confined to the so-called regular case, requiring the following assumption.

\begin{Assumption}\label{as:Assumption1}
For $\forall \bm x \in D$, $0 {\rm{ < }}m \le \mathop {\inf }_{\omega  \in \Omega \left( \bm x \right)} \left\| {\omega } \right\| \le \mathop {\sup }_{\omega  \in \Omega\left( \bm x \right) } \left\| {\omega } \right\| \le M < \infty $, with $m$ and $M$ being positive constants.
The set-valued map $\Omega \left( \bm x \right)$ is convex-valued and continuous, which implies that either ${\rm{rge}}~\Omega \subset \left[ {m,M} \right]$, or ${\rm{rge}}~\Omega \subset \left[ {-M,-m} \right]$ and the angle $\phi$ is strictly monotonic.
\end{Assumption}
\Revised{
\begin{Remark}
\label{Re:Haus}
The convergence \emph{w.r.t.} Hausdorff metric is not equivalent to the ordinary set convergence without the boundedness restriction \cite{Rockafellar}.
The latter notion is more generalized comparing to the former one.
Thus, since the Hausdorff metric is employed as the metric for the space of nonempty and closed sets in this paper, we actually suppose that the sets equipped with Hausdorff metric in subsequent discussions are bounded.
\end{Remark}}

The most natural solution of the differential inclusion, dedicated to describe physical systems, is probably the absolutely continuous solution \cite{SFilippov}, being differentiable almost everywhere (\emph{a.e.}), whose specific definition is given as follows.

\begin{Definition}[Carath\'eodory trajectory]
\label{def:AC-solution}
Carath\'eodory trajectories generated by the differential inclusion
\begin{equation}
\dot {\bm x} \in \bm X\left( {t,\bm x} \right),\bm x\left( {t_0} \right) = {\bm x_0} \in R_0 \in {\rm{comp}}(\mathbb{R}^n),\nonumber
\end{equation}
is the set-valued map $R:\mathbb{R}_{\ge {t_0}}\to {\rm comp}(\mathbb{R}^n)$, $R({t_0})=R_0$ that collects all the viable absolutely continuous, \emph{i.e.} Lipschitz everywhere and differentiable with Lebesgue-integrable derivatives \emph{a.e.}, single-valued trajectories $\bm x(t)$ which satisfies $\dot {\bm x} \in \bm X\left( {t,\bm x} \right)$ \emph{a.e.} on $t \in [t_0,T]$, and $\bm x\left( t_0 \right) = {\bm x_0}$.
\end{Definition}

\Revised{\begin{Definition}[Measurable \& Locally integrable selection]
\label{def:Measurable-selection}
Considering a set-valued map from $X$ to the power set of $Y$, $F:X \to {\mathcal P}\left( Y \right)$, a function $f:X \to Y$ is a selection of $F$, if $\forall \bm x \in X,f\left(\bm x \right) \in F\left(\bm x \right)$, whose existence can be validated by the axiom of choice. If this function is measurable, \emph{i.e.} the preimage of any open set in $Y$ is measurable, then the selection is called measurable selection. In this work, the measurable selection from the set-valued map solution is \emph{w.r.t.} the time $t$, or the phase angle $\phi$. Furthermore, the measurable selection $f$ is called locally integrable selection, if it is locally integrable, \emph{i.e.} for every compact set $U$ in $X$, $f$ is integrable on $U$.
\end{Definition}}

\Revised{\begin{Definition}[Lipschitz continuity of set-valued maps]
\label{def:LC-diffIn}
A set-valued map $\bm S:{\mathbb R^n}\to{\rm{comp}\left(\mathbb R^m\right)}$ is Lipschitz on $D$, if 
\begin{equation}
\bm S\left( {\bm x_0} \right) \subset \bm S\left( \bm x_1 \right) + \lambda_S d\left({\bm x_0, \bm x_1}\right)\mathbb B,\nonumber
 \end{equation}
 for all $\bm x_0, \bm x_1 \in D$, where $\mathbb B$ denotes the closed unit ball whose center is the origin, and $\lambda_S$ is the corresponding Lipschitz constant. And \emph{w.r.t.} $\bm X:{ {\mathcal  S^1} } \times {\mathbb R^n} \to \rm{comp}\left( {{\mathbb R^n}} \right)$, the set-valued map is called Lipschitz in $\bm x$ if 
 \begin{equation}
\bm X\left( \phi,{\bm x_0} \right) \subset \bm X\left( \phi,\bm x_1 \right) + \lambda  d\left({\bm x_0, \bm x_1}\right) \mathbb B,\nonumber
 \end{equation}
 for all $\bm x_0, \bm x_1 \in D$, for \emph{a.e.} $\phi \in \mathcal S^1$, where $\lambda$ is the corresponding Lipschitz constant.
\end{Definition}}

\Revised{\begin{Definition}[Upper semicontinuity of set-valued maps]
\label{def:USS-diffIn}
A set-valued map $\bm S:{\mathbb R^n}\to{\rm{comp}\left(\mathbb R^m\right)}$ is called upper semicontinuous on $D$, if,
 for $\forall \bm x_0, \bm x_1  \in D$, and $\forall \varepsilon > 0$, there exists $\delta > 0$ satisfying that, provided $d ({\bm x_0, \bm x_1})  < \delta$, we have ${\bm S}\left(\bm x_0 \right) \subset{\bm S}\left( \bm x_1  \right) + \varepsilon \mathbb B$.
\end{Definition}}

\begin{Remark}
\label{Re:CCB}
If $\bm X \left( {\phi ,\bm x} \right)$ is upper semicontinuous in both $t$ and $\bm x$, based on the work of Filippov \cite{Filippov},
and the Arzel\`{a}-Ascoli theorem, the Carath\'eodory trajectory sets of the inclusions \eqref{eq:ProbForm} and its geometric averaging \eqref{eq:avProbForm} exist and possess the compactness, connectedness, and boundary properties\footnote{Please refer to \emph{Corollaries} 4.5 - 4.7 in \cite{SmirnovIntro.} for the definition and the proof of these properties.}.
\end{Remark}

\begin{Theorem}\label{th:Theorem1}
With the aforementioned system \eqref{eq:ProbForm} being on the domain $Q: =  \mathcal S^1 \times D$, and \emph{Assumption~\ref{as:Assumption1}} satisfied, if the following conditions hold:

\Revised{\romannumeral1) the mapping $\bm X \left( {\phi ,\bm x} \right)$ is nonempty, compact,
convex-valued, upper semi-continuous for all $\left( {\phi ,\bm x} \right) \in Q$, \emph{i.e.} satisfying the \emph{continuous time basic conditions} in \cite{Kellett}, or the hypothesis of \emph{Proposition S2} in \cite{Cortes}, and is
bounded with constant $M_{ X}$, Lipschitz continuous in $\bm x$ with constant $\lambda$;}

\romannumeral2) in addition to \emph{Assumption~\ref{as:Assumption1}}, the mapping $\Omega \left( {\bm x} \right)$ satisfies the Lipschitz condition with constant $\lambda_\omega$;

\Revised{\romannumeral3) the compact nonempty path-connected set $D$ contains all the initial conditions, from which the solution starts remains in $D$ \emph{w.r.t.} both the original system \eqref{eq:ProbForm} and the averaged system \eqref{eq:avProbForm}, \emph{i.e.} integrally bounded in $\mathbb R^n$ and forward complete in $D$ (Hereafter, the above three conditions and \emph{Assumption~\ref{as:Assumption1}} are concluded as \emph{Condition A}.);}


\romannumeral4) the metric $d$ equipped in the state space is the euclidean metric, \emph{i.e.} $d\left( {\bm x,\bm y} \right) = \left\| {\bm x - \bm y} \right\|$,

then for any  $L > 0$, there exists a constant ${\varepsilon _0}$, \emph{s.t.}, for $\forall \varepsilon  \in \left( {0,{\varepsilon _0}} \right]$, $\forall \bm x_0 \in D$, and $\forall t \in \left[ {t_0,{t_0+L \mathord{\left/
 {\vphantom {t_0+L \varepsilon }} \right.
 \kern-\nulldelimiterspace} \varepsilon }} \right],$
\begin{align}
\label{eq:Th1}
{d_H}\left( {{\rm{cl}}\mathcal R_ t \left( {\varepsilon \bm X,\Omega ,{\bm x_0}} \right),\mathcal R_ t \left( {\varepsilon \bar {\bm X} ,{\bm x_0}} \right)} \right) \le c \varepsilon,
\end{align}
where $c$ is a positive constant, and the time cross-section
$\mathcal R_ t \left( {\varepsilon \bm X,\Omega ,{\bm x_0}} \right): = \left\{ {\bm x\left( t \right)\left| {\bm x\left( t \right) \in {\mathcal S_{\left[ {t_0,{t_0+L \mathord{\left/
 {\vphantom {L \varepsilon }} \right.
 \kern-\nulldelimiterspace} \varepsilon }} \right]}}\left( {\varepsilon \bm X,\Omega ,{\bm x_0}} \right)} \right.} \right\}$,
with ${\mathcal S_{\left[ {t_0,t_0+{L \mathord{\left/
 {\vphantom {L \varepsilon }} \right.
 \kern-\nulldelimiterspace} \varepsilon }} \right]}}$ being the Carath\'eodory trajectory set of \eqref{eq:ProbForm} on $t \in \left[ {t_0,{t_0+L \mathord{\left/
 {\vphantom {L \varepsilon }} \right.
 \kern-\nulldelimiterspace} \varepsilon }} \right]$
 and ${\rm{cl}}\mathcal R_ t $ denoting the closure of $\mathcal R_ t $.
\end{Theorem}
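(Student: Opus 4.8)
The plan is to establish the two set inclusions that together yield the Hausdorff bound \eqref{eq:Th1}: that every point reachable by \eqref{eq:ProbForm} at time $t$ lies within $c\varepsilon$ of a point reachable by \eqref{eq:avProbForm}, and conversely. By \emph{Remark~\ref{Re:CCB}} the two Carath\'eodory trajectory sets are compact and connected, so it suffices to argue selection-by-selection and then take unions and closures. First I would fix an arbitrary Carath\'eodory trajectory $\bm x(\cdot)$ of \eqref{eq:ProbForm}, which by Filippov's implicit-function lemma corresponds to measurable selections $\bm f(s)\in\bm X(\phi(s),\bm x(s))$ and $\omega(s)\in\Omega(\bm x(s))$, and then construct a companion trajectory of the averaged system that shadows it; the reverse construction is dual.

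The core is an averaging estimate. Since \emph{Assumption~\ref{as:Assumption1}} forces $\lvert\omega\rvert\ge m>0$, the phase $\phi(t)$ is strictly monotone, and I would reparametrize by $\phi$ to expose the standard $2\pi$-periodic-in-$\phi$ structure $\mathrm{d}\bm x/\mathrm{d}\phi\in\varepsilon\,\bm X(\phi,\bm x)/\Omega(\bm x)$ on the circle. The fast oscillation is absorbed by the antiderivative $\bm U(\phi,\bm x)=\int_0^{\phi}[\bm X(\sigma,\bm x)-\bar{\bm X}(\bm x)]\,\mathrm{d}\sigma$, which is $2\pi$-periodic because the integrand has zero mean by the definition \eqref{eq:avrInt} of $\bar{\bm X}$, and uniformly bounded because $\lVert\bm X\rVert\le M_X$ (condition i). A near-identity transformation built from $\bm U/\omega$ (the $1/\omega$ factor being supplied by the regular case) then reduces the slow dynamics to $\dot{\bm y}\in\varepsilon\bar{\bm X}(\bm y)$ up to an $O(\varepsilon^2)$ remainder.

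With the Lipschitz data in hand — $\lambda$ for $\bm X$ (condition i), $\lambda_\omega$ for $\Omega$ (condition ii), and the induced Lipschitz constant of $\bar{\bm X}$ inherited through the Aumann integral — I would close the argument with a Gronwall/Filippov comparison for differential inclusions. Since the transformed and averaged right-hand sides differ in Hausdorff distance by $O(\varepsilon^2)$ while the effective Lipschitz constant is $\varepsilon$-scaled, propagating over $t\in[t_0,t_0+L/\varepsilon]$ gives an accumulated deviation of order $\tfrac{\varepsilon C}{\lambda}(e^{\lambda L}-1)=O(\varepsilon)$; the forward invariance of $D$ (condition iii) and the Euclidean metric (condition iv) keep every constant uniform. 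Undoing the transformation adds a further $\varepsilon\lVert\bm U\rVert=O(\varepsilon)$, so each trajectory is shadowed to within $c\varepsilon$; taking the union over selections and the closure yields \eqref{eq:Th1}.

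The hard part, I expect, will be the set-valued bookkeeping rather than the scalar estimates. Because the near-identity transformation carries the selection-dependent factor $1/\omega$, it cannot be applied uniformly to the whole inclusion at once: the oscillating integral $\int[\bm f-\bar{\bm f}]$ must be controlled for \emph{arbitrary} measurable selections $\bm f$, which may jump within $\bm X(\phi,\bm x)$ rather than trace a single-valued periodic profile, and the matching averaged (resp.\ original) trajectory must be recovered as a genuine Carath\'eodory solution through a measurable-selection argument. Reconciling the $\phi$-average \eqref{eq:avrInt} with the $t$-domain Aumann integral while preserving the compactness and connectedness of \emph{Remark~\ref{Re:CCB}} is where the convexity of $\bm X$ and the continuity of $\Omega$ do the essential work.
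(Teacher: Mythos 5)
Your proposal is, at its core, the classical single-valued averaging argument (a near-identity transformation built from the antiderivative of the oscillating part), and the step on which it rests is precisely the step that fails for differential inclusions. The claimed key property --- that $\bm U(\phi,\bm x)=\int_0^{\phi}\left[\bm X(\sigma,\bm x)-\bar{\bm X}(\bm x)\right]{\rm d}\sigma$ is $2\pi$-periodic ``because the integrand has zero mean by the definition \eqref{eq:avrInt}'' --- is false in the set-valued setting: interpreting the integral in Aumann's sense, $\int_0^{2\pi}\left[\bm X(\sigma,\bm x)-\bar{\bm X}(\bm x)\right]{\rm d}\sigma = 2\pi\left(\bar{\bm X}(\bm x)-\bar{\bm X}(\bm x)\right)$, and the algebraic (Minkowski) difference of a compact convex set with itself is not $\{0\}$ unless $\bar{\bm X}(\bm x)$ is a singleton; it is a symmetric set whose radius equals the diameter of $\bar{\bm X}(\bm x)$, an $O(1)$ object. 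Selection-wise the property fails as well, since a measurable selection $\bm f(s)\in\bm X(\phi(s),\bm x(s))$ need not trace any periodic profile, so $\int\left[\bm f-\bar{\bm f}\right]$ can grow linearly in time. You flag exactly this as ``the hard part,'' but you supply no mechanism to overcome it, so what you have is a proof sketch of the single-valued case together with an honest admission that the set-valued case --- which is the theorem --- remains open. A second, independent problem is the reparametrization ${\rm d}\bm x/{\rm d}\phi\in\varepsilon\,\bm X(\phi,\bm x)/\Omega(\bm x)$: when $\Omega$ is genuinely set-valued, the speed selection $\omega(\cdot)$ may correlate with $\phi$, so time averages of the original system correspond to $1/\omega$-weighted phase averages, which do not reduce to the uniform average $\bar{\bm X}$ of \eqref{eq:avrInt}; your transformed slow inclusion is therefore not an $O(\varepsilon^2)$ perturbation of \eqref{eq:avProbForm}, and the Gronwall/Filippov comparison has nothing to compare against.

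The paper avoids the transformation entirely, and its machinery is exactly what your sketch is missing. It discretizes time by the phase-recurrence instants $t_k$ with $\phi(t_k)=\phi_0+2k\pi$ (a Poincar\'e-section device) and works revolution by revolution: for the inclusion \eqref{eq:Th1aaa} it realizes any averaged velocity $\bar{\bm v}\in\bar{\bm X}\left(\bm y'(t_k^*)\right)$ as the exact per-revolution integral of a selection of $\bm X$ with the state frozen (equation \eqref{eq:aveexist}, which is nothing but the definition of the Aumann integral), builds piecewise approximating trajectories through the optimized selections \eqref{eq:argmin}, \eqref{eq:Xselct}, \eqref{eq:phiselct} --- including an explicit selection of the rotation speed to control the $1/\omega$ distortion --- and sums the per-round errors by a discrete Gronwall iteration; completeness over the whole solution set is then argued via a covering-space construction over recurrent-time sequences, and the dual inclusion \eqref{eq:Th1bbb} is obtained by reversing the roles. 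If you want to repair your argument, the missing idea is this realization/selection matching performed period by period, not a change of variables: a near-identity transformation cannot be applied uniformly to an inclusion, and for arbitrary selections there is no oscillation to transform away.
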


\begin{proof}
In view of the definition of the geometric averaging provided in \eqref{eq:avrInt}, which is essentially uniform and non-autonomous, then,
with condition \romannumeral1) and condition \romannumeral3) satisfied, utilizing the properties of Aumann's integral, it can be drawn that, the averaged set-valued map $\bar {\bm X}$ is
convex and compact, bounded with constant $M_{ X}$, and Lipschitz in $\bm x$ with constant $\lambda$ in the domain $Q$.
According to \emph{Assumption~\ref{as:Assumption1}}, provided that the system \eqref{eq:ProbForm} is kept in $Q$, the angle $\phi$ uniformly rotates in the same direction with a bounded speed. Due to the symmetrical characteristic in rotating forward and backward, it is fair to suppose ${\rm{rge}}~\Omega \subset \left[ {m,M} \right]$.

In order to prove \emph{Theorem~\ref{th:Theorem1}}, let us prove the following statement first:
\begin{align}
\label{eq:Th1aaa}
\mathcal R_ t \left( {\varepsilon \bar {\bm X} ,{\bm x_0}} \right) \subset {\rm{cl}}\mathcal R_ t \left( {\varepsilon {\bm X},\Omega ,{\bm x_0}} \right) + c\varepsilon \mathbb B,
\end{align}
where $\mathbb B$ denotes the closed unit ball with its center at the origin.

\Revised{The proof of this statement is divided into 4 steps, whose sketch is provided to facilitate further explanation. In the first step, the distance between the solution of the averaged system $\bm y$ and an approximation trajectory $\bm y'$ is considered. And in the second step, the distance between two different approximation trajectories $\bm x'$ and $\bm y'$ is discussed. Further, in the third step, we consider the distance between the solution of the original system $\bm x$ and approximation trajectory $\bm x'$. Finally in the fourth step, we perform the analysis regarding the completeness, and the conclusion is reached by implementing the subadditivity.}

\emph{\textbf{Step 1:}}
For convenience, we accumulate $\phi$ on $\mathbb R_{\ge \phi_0}$, which suggests that $\phi$ can be $0,~2\pi,~4\pi \cdots $ or any other nonnegative value. In the enlightenment of Poincar\'e map, or more probably by the observation that the angle $\phi$ unavoidably increases and consecutively passes every single point on $\mathcal S^1$, we divide the time interval $\left[ {t_0,{t_0+L \mathord{\left/
 {\vphantom {L \varepsilon }} \right.
 \kern-\nulldelimiterspace} \varepsilon }} \right]$ by the time points $t_k$ that $\phi(t_k) {\rm{ = }}{\phi _{\rm{0}}}{\rm{ + 2}}k\pi$, where $k \in \mathbb{Z}_+$, and $\mathbb{Z}_+$ is the set of positive integers. Here, we define the recurrent-time sequence as $R{t}_{\left[ {t_0,{t_0+L \mathord{\left/
 {\vphantom {L \varepsilon }} \right.
 \kern-\nulldelimiterspace} \varepsilon }} \right]}: = \left\{ {\delta {t_1},\delta {t_2} \cdots \delta {t_{{N_r}}}} \right\}$, where ${N_r}\in \mathbb{Z}_+$, for $\forall k < {N_r},{t_k} = \sum\nolimits_1^k {\delta {t_i}}$, $t_{Nr} = {L \mathord{\left/
 {\vphantom {L \varepsilon }} \right.
 \kern-\nulldelimiterspace} \varepsilon }$, and for $\forall t \in \left( {{t_{{N_r} - 1}},{t_{{N_r}}}} \right],$ with $I_{\phi t}$ being the interval between $\phi_0$ and $\phi(t)$ on $\mathcal S^1$, and $I_{\phi tc}$ the complement of $I_{\phi t}$, both ${\sigma _{\mathcal S^1}}\left( {{I_{\phi t}}} \right) > 0$, and ${\sigma _{\mathcal S^1}}\left( {{I_{\phi tc}}} \right) > 0$.
Furthermore, $\mathcal{C}_{\left[ {t_0,{t_0+L \mathord{\left/
 {\vphantom {L \varepsilon }} \right.
 \kern-\nulldelimiterspace} \varepsilon }} \right]}$ is defined as the collection of all the viable recurrent-time sequences \emph{w.r.t.} the system~\eqref{eq:ProbForm} and \emph{Assumption~\ref{as:Assumption1}}.
Consider an arbitrary solution $\bm y(t)$ of system~\eqref{eq:avProbForm} which satisfies condition \romannumeral3).
Withal, we also select an arbitrary recurrent-time sequence $R{t^*}_{\left[ {t_0,{t_0+L \mathord{\left/
 {\vphantom {L \varepsilon }} \right.
 \kern-\nulldelimiterspace} \varepsilon }} \right]}$ from $\mathcal{C}_{\left[ {t_0,{t_0+L \mathord{\left/
 {\vphantom {L \varepsilon }} \right.
 \kern-\nulldelimiterspace} \varepsilon }} \right]}$, whose points correspond to the time separation points $t_k^*$ , with $k \in \mathbb{Z}_+$ and $k < N_r^*$.
Therefore, there exists measurable selection $\bm v_y(t)\in \varepsilon \bar {\bm X} \left( \bm y \left( t \right)\right)$, such that 
\begin{align}
\label{eq:Intv}
&\bm y\left( t \right) = \bm y\left( {t_k^*} \right) + \int_{t_k^*}^t {{\bm v_y}} \left( \tau  \right){\rm{d}}\tau ,~~t \in \left( {t_k^*,t_{k + 1}^*} \right],\nonumber\\
&{\bm y\left( t_0 \right) = {\bm x_0}}.~~~~~~~~~~~~~~~~~~~~~~~~~~~~~~~~~~~~~~~~~~
\end{align}
Accordingly, consider the following function:
\begin{align}
\label{eq:Intv2}
 &\bm y'\left( t \right) = \bm y'\left( {t_k^ * } \right) + \bm v_y' \left( {t_k^ * } \right)\cdot\left( {t - t_k^ * } \right),~~t \in \left( {t_k^*,t_{k + 1}^*} \right], \nonumber\\
 &\bm y'\left( t_0 \right) = {\bm x_0},~~~~~~~~~~~~~~~~~~~~~~~~~~~~~~~~~~~~~~~~~~~~~~~
\end{align}
where $\bm v_y' \left( {t_k^ * } \right) \in \varepsilon \bar{\bm X}\left( {\bm y'\left( {t_k^ * } \right)} \right)$, more specifically,
\begin{align}
\label{eq:argmin}
\bm v_y' \left( {t_k^ * } \right) = \arg \min \{ {d ({\delta t_{k+1}^ * \cdot \bm v , \int_{t_k^ * }^{t_{k + 1}^ * } {\bm v_y \left( \tau  \right){\rm{d}}\tau } } ) }\nonumber\\
\left. {\left| {\bm v \in \varepsilon \bar {\bm X} \left( {\bm y'\left( {t_k^ * } \right)} \right)} \right.} \right\}.
\end{align}
In addition, for the sake of consistency, we define $t_0^* \buildrel \Delta \over = t_0$, even though $t_0^*$ need not be a time separation point.

With these two specific functions \eqref{eq:Intv} and \eqref{eq:Intv2}, we can further investigate the upper bound of the distance between $\bm y\left( {L \mathord{\left/
 {\vphantom {L \varepsilon }} \right.
 \kern-\nulldelimiterspace} \varepsilon } \right)$ and $\bm y'\left( {L \mathord{\left/
 {\vphantom {L \varepsilon }} \right.
 \kern-\nulldelimiterspace} \varepsilon } \right)$ in the metric $d$.

Firstly, when the system is in the first round, that is, $t \in \left( {t_0,t_1^*} \right]$, the distance is given by
\begin{equation}
\label{eq:disdif0}
d\left( {\bm y '\left( t \right),{\bm y }\left( t \right)} \right) = d\left( {t  \cdot \bm v_y'\left( t_0 \right),\int_{t_0}^{t} {{\bm v_y}\left( \tau  \right){\rm{d}}\tau } } \right).\nonumber
\end{equation}
As a coarse estimate, since $\bm v_y(t)\in \varepsilon \bar {\bm X} \left( \bm y \left( t \right)\right)$, we have
\begin{equation}
\label{eq:coarse0}
d\left( {\bm y'\left( {t_0} \right),\bm y\left( t \right)} \right) \le \varepsilon {M_{ X}}\delta t_1^*.\nonumber
\end{equation}
Then, due to the optimized selection of ${{\bm v'}_y}\left( {t_0} \right)$ in the sense of \eqref{eq:argmin}, it is straightforward to see that
\begin{align}
\label{eq:inf0}
 &d\left( {\delta t_1^ *  \cdot {{\bm v'}_y}\left( {t_0} \right),\int_{t_0}^{{t_0}+\delta t_1^ * } {{\bm v_y}\left( \tau  \right){\rm d}\tau } } \right) \nonumber\\
  \le& \varepsilon {d_H}\left( {\int_{t_0}^{{t_0}+\delta t_1^ * } {\bar {\bm X}\left( {\bm y'\left( 0 \right)} \right){\rm d}\tau } ,\int_{t_0}^{{t_0}+\delta t_1^ * } {\bar {\bm X}\left( {\bm y\left( \tau  \right)} \right){\rm d}\tau } } \right) \nonumber\\
  \le& \int_{t_0}^{{t_0}+\delta t_1^ * } \varepsilon {{d_H}\left( {\bar {\bm X}\left( {\bm y'\left( 0 \right)} \right),\bar {\bm X}\left( {\bm y\left( \tau  \right)} \right)} \right){\rm d}\tau }.
\end{align}
Based on the Lipschitz property of $\bar {\bm X}$, it can be known that
\begin{align}
\label{eq:Lip0}
\varepsilon {d_H}\left( {\bar {\bm X}\left( {\bm y\left( t \right)} \right),\bar {\bm X}\left( {\bm y'\left( {t_0} \right)} \right)} \right) \le {\varepsilon ^2}\lambda {M_X}\delta t_1^*.
\end{align}
From \eqref{eq:inf0} and \eqref{eq:Lip0}, it follows that
\begin{align}
\label{eq:Dis0}
d\left( {\bm y'\left( {t_1^ * } \right),\bm y\left( {t_1^ * } \right)} \right) \le {\varepsilon ^2}\lambda {M_X}{\left( {\delta t_1^ * } \right)^2}.
\end{align}

Secondly, considering the situation when the system is in the $(k+1)$th round, where $k \in \mathbb{Z}_+$ and $k < N_r^*$, there exists
\begin{align}
\label{eq:disdifk}
 &d\left( {\bm y'\left( t \right),\bm y\left( t \right)} \right) \le d\left( {\bm y'\left( {t_k^*} \right),\bm y\left( {t_k^*} \right)} \right) \nonumber\\
 &~~~~~~~~~~~~~ + d\left( {\left( {t - t_k^*} \right){{\bm v'}_y}\left( {t_k^*} \right),\int_{t_k^*}^t {{\bm v_y}\left( \tau  \right){\rm{d}}\tau } } \right),\nonumber
\end{align}
within $t \in \left( {t_k^*,t_{k + 1}^*} \right]$. Hereafter, the distance between the two functions at the instant $t_k^*$, $d\left( {\bm y'\left( {t_k^*} \right),\bm y\left( {t_k^*} \right)} \right)$ is abbreviated as $d{y_k}$. The coarse estimate can be obtained:
\begin{equation}
\label{eq:coarsek}
d\left( {\bm y'\left( {t_k^*} \right),\bm y\left( t \right)} \right) \le d{y_k} + \varepsilon {M_X}\delta t_{k+1}^ *.\nonumber
\end{equation}
Then, it is obvious that, by performing the similar process in the analysis of the first round, the following inequality holds:
\begin{align}
\label{eq:Disk}
d{ y_{k+1 }} \le (1+\varepsilon \lambda \delta t_{k+1}^ *) \cdot d{y_{k}}  +  {\varepsilon ^2} \lambda {M_X}{\left( {\delta t_{k+1}^ * } \right)^2},
\end{align}
with $k \in \mathbb{Z}_+$ and $k < N_r^*$.
In view of the supposition that $\rm{rge} (\Omega) \subset \left[ {m,M} \right]$, we have $\delta t_{k+1}^ *  \le {{2\pi } \mathord{\left/
 {\vphantom {{2\pi } m}} \right.
 \kern-\nulldelimiterspace} m},N_r^* \le {{ML} \mathord{\left/
 {\vphantom {{ML} {\left( {2\pi \varepsilon } \right)}}} \right.
 \kern-\nulldelimiterspace} {\left( {2\pi \varepsilon } \right)}}$.
Thus, also by the virtue of \eqref{eq:Dis0} and \eqref{eq:Disk}, it is known that
\begin{align}
\label{eq:DisN}
d{y_{{N^*_r}}} \le \left( {{e^{{{\lambda ML} \mathord{\left/
 {\vphantom {{\lambda ML} m}} \right.
 \kern-\nulldelimiterspace} m}}} - 1} \right){{2\pi {M_X}} \mathord{\left/
 {\vphantom {{2\pi {M_x}} m}} \right.
 \kern-\nulldelimiterspace} m} \cdot \varepsilon.
\end{align}

\emph{\textbf{Step 2:}}
What follows is the part regarding to the original system \eqref{eq:ProbForm}. Consider the following function:
\begin{align}
\label{eq:FunBri2a}
 &\bm x'\left( t \right) = \bm x'\left( t_k^* \right) + \int_{t_k^*}^t {{{\bm v'}_x}\left( \tau  \right)} \rm{d}\tau ,~~t \in \left( {t_k^*,t_{k + 1}^*} \right],  \nonumber\\
 &\bm x'\left( {t_0} \right) = {\bm x_0}.
\end{align}
Correspondingly, the dynamics of $\psi'$ is given by
\begin{align}
\label{eq:FunBri2b}
 &\psi '\left(t\right) = \psi '\left( {t_k^*} \right) + {{v'}_\psi }\left( {t_k^*} \right) \cdot \left( {t - t_k^*} \right),~~t \in \left( {t_k^*,t_{k + 1}^*} \right],  \nonumber\\
 &\psi '\left( {t_0} \right) = {\phi _0},
\end{align}
where ${{v'}_\psi }\left( {t_k^*} \right) = {{2\pi } \mathord{\left/
 {\vphantom {{2\pi } {\delta t_k^*}}} \right.
 \kern-\nulldelimiterspace} {\delta t_{k+1}^*}}$. Then, \emph{w.r.t.} function \eqref{eq:FunBri2a}, we have ${{\bm v'}_x}\left( t \right) \in \bm X\left( {\psi '\left( t \right)}, \bm y'\left( {t_k^*} \right) \right)$, \emph{s.t.}
\begin{equation}
\label{eq:SingleBound}
\int_{t_k^*}^{t_{k + 1}^*} {{{\bm v'_x}}\left( \tau  \right)} {\rm{d}}\tau  = {{\bm v'_y}}\left( {t_k^*} \right) \cdot \delta t_{k + 1}^*,\nonumber
\end{equation}
with $k \in \mathbb{Z}_+$ and $k < N_r^*-1$.
The existence of function ${{\bm v'}_x}$ roots in the definition of Aumann integral of differential inclusions. Obviously, along with uniform rotation speed within each interval $\delta t_k^*$, as formulated in \eqref{eq:FunBri2a}, we have
\begin{align}
\label{eq:aveexist}
 \int_{t_k^*}^{t_{k + 1}^*} {{{\bm v'}_x}\left( \tau  \right)} {\rm{d}}\tau  &= \int_{t_k^*}^{t_{k + 1}^*} {{{\bm v'_x}}\left( {\psi '\left( \tau  \right)} \right),\bm y'\left( {t_k^*} \right)} {\rm{d}}\tau  \nonumber\\
  &= \frac{{\delta t_{k + 1}^*}}{{2\pi }}\int_0^{2\pi } {{{\bm v'_x}}\left( {\varphi, \bm y'\left( {t_k^*} \right)} \right)} {\rm{d}}\varphi \nonumber\\
  &= \delta t_{k + 1}^*\cdot{{\bar {\bm v}'}_x}\left( {\bm y'\left( {t_k^*} \right)} \right).
\end{align}
Since ${{\bar {\bm v}'}_x}\left( {\bm y'\left( {t_k^*} \right)} \right) \in \bar {\bm X}\left( {\bm y'\left( {t_k^*} \right)} \right)$, and according to \emph{Theorem~\ref{th:Theorem1}}, the existence of ${{\bm v'}_x}\left( t \right)$ in $t \in \left( {t_k^*,t_{k + 1}^*} \right]$ is obvious. Furthermore, with the functions given in
\eqref{eq:FunBri2a} and \eqref{eq:FunBri2b}, there exists the following inequality:
\begin{align}
\label{eq:DisXY}
 &d\left( {\bm x'\left( t \right),\bm y'\left( t \right)} \right) \le {{4{M_X}\pi  \cdot \varepsilon } \mathord{\left/
 {\vphantom {{4{M_X}\pi  \cdot \varepsilon } m}} \right.
 \kern-\nulldelimiterspace} m},~~t \in \left( {0,{L \mathord{\left/
 {\vphantom {L \varepsilon }} \right.
 \kern-\nulldelimiterspace} \varepsilon }} \right], \\
 \label{eq:DisXY0}
 &d\left( {\bm x'\left( {t_k^*} \right),\bm y'\left( {t_k^*} \right)} \right) = 0,~~k \in {\mathbb Z_ + },k < {N_r^*}.
\end{align}

\emph{\textbf{Step 3:}}
 Consider a solution of the inclusions \eqref{eq:ProbForm}, which is subjected to the selection  $R{t^*}_{\left[ {{t_0},{{t_0}+L \mathord{\left/
 {\vphantom {L \varepsilon }} \right.
 \kern-\nulldelimiterspace} \varepsilon }} \right]}$ from $\mathcal{C}_{\left[ {{t_0},{{t_0}+L \mathord{\left/
 {\vphantom {L \varepsilon }} \right.
 \kern-\nulldelimiterspace} \varepsilon }} \right]}$, and minimize the distance between $\bm v_x$ and $\bm v'_x$ similar to \eqref{eq:argmin}.
 The specific form is given by
 \begin{align}
\label{eq:Xselct}
{\bm v_x}\left( {\varphi \left( t \right),\bm x\left( t \right)} \right) = &\arg \min \left\{ {d\left( {{{\bm v'}_x}\left( {\psi \left(t_\varphi \right),\bm x'\left( {{t_\varphi }} \right)} \right),\bm v} \right)}\right.\nonumber\\
&\left.{\left| {\bm v \in \bm X\left( {\varphi \left( t \right),\bm x\left( t \right)} \right)} \right.} \right\},
\end{align}
where $t \in \left( {t_k^*,t_{k + 1}^*} \right]$, $k \in \mathbb{Z}_+$ and $k < N_r^*-1$. With the angle synchronized, we have ${t_\varphi }: = \left\{ {\psi \left( \tau  \right) = \varphi \left( t \right)\left| {\tau \in \left( {t_k^*,t_{k + 1}^*} \right]} \right.} \right\}$. Since $\rm{rge} (\Omega) \subset \left[ {m,M} \right]$, the map ${t_\varphi }\left( t \right) \in \left( {t_k^*,t_{k + 1}^*} \right] \to \left( {t_k^*,t_{k + 1}^*} \right]$ is bijective. Similarly, within $t \in \left( {t_k^*,t_{k + 1}^*} \right]$, the selection of the rotational speed is given by
 \begin{align}
\label{eq:phiselct}
{v_\varphi }\left( {\bm x\left( t \right)} \right) = \arg \min \left\{ {\| {v-\frac{{2\pi }}{{\delta t_{k + 1}^ * }}} \|\left| {v \in \Omega \left( {\bm x\left( t \right)} \right)} \right.} \right\}.
 \end{align}
As in the analysis process of the distance between $\bm y$ and $\bm y'$, the coarse estimate in the $(k+1)$th round is formulated as
 \begin{align}
\label{eq:DisXs}
d\left( {\bm x'\left( {t_k^ * } \right),\bm x\left( t \right)} \right) \le d{x_k} + \varepsilon {M_X}\delta t_{k + 1}^ * ,
 \end{align}
 where $t \in \left( {t_k^*,t_{k + 1}^*} \right]$.
 Then, consider the additional distance generated in the $(k+1)$th round, which satisfies
 \begin{align}
 &d\left( {\int_{t_k^ * }^{t_{k + 1}^ * } {{\bm v_x}\left( \tau  \right){\rm{d}}\tau } ,\int_{t_k^ * }^{t_{k + 1}^ * } {{{\bm v'}_x}\left( \tau  \right){\rm{d}}\tau } } \right)
 \nonumber\\
  = &d\left( {\int_0^{2\pi } {{1 \mathord{\left/
 {\vphantom {1 {{v_\varphi }\left( {x\left( {{t_\varphi }} \right)} \right)}}} \right.
 \kern-\nulldelimiterspace} {{v_\varphi }\left( {\bm x\left( {{t_\varphi }} \right)} \right)}} \cdot {\bm v_x}\left( {\varphi ,\bm x\left( {{t_\varphi }} \right)} \right){\rm{d}}\varphi } ,} \right.
 \nonumber\\
 &~~~~~~~~~~~~~~~~~~~~~~\left.{\frac{{\delta t_{k + 1}^ * }}{{2\pi }}\int_0^{2\pi } {{{\bm v'}_x}\left( {\varphi ,\bm y'\left( {t_k^ * } \right)} \right){\rm{d}}\varphi } } \right)
  \nonumber\\
  \le & \frac{{\delta t_{k + 1}^ * }}{{2\pi }}d\left( {\int_0^{2\pi } {{\bm v_x}\left( {\varphi ,\bm x\left( {{t_\varphi }} \right)} \right){\rm{d}}\varphi } ,\int_0^{2\pi } {{{\bm v'}_x}\left( {\varphi ,\bm y'\left( {t_k^ * } \right)} \right){\rm{d}}\varphi } } \right)
 \nonumber\\
  + & d\left( {\int_0^{2\pi } {{1 \mathord{\left/
 {\vphantom {1 {{v_\varphi }\left( {x\left( {{t_\varphi }} \right)} \right)}}} \right.
 \kern-\nulldelimiterspace} {{v_\varphi }\left( {\bm x\left( {{t_\varphi }} \right)} \right)}} \cdot {\bm v_x}\left( {\varphi ,\bm x\left( {{t_\varphi }} \right)} \right){\rm{d}}\varphi } ,} \right.
 \nonumber\\
 &~~~~~~~~~~~~~~~~~~~~~~~\left.{\frac{{\delta t_{k + 1}^ * }}{{2\pi }}\int_0^{2\pi } {{\bm v_x}\left( {\varphi ,\bm x\left( {{t_\varphi }} \right)} \right){\rm{d}}\varphi }} \right).\nonumber
 \end{align}
According to \eqref{eq:DisXY0}, there exists $\bm x'\left( {t_k^ * } \right) = \bm y'\left( {t_k^ * } \right)$. And it is straightforward to see that
 \begin{align}
\begin{split}
 d\left( {\int_0^{2\pi } {{\bm v_x}\left( {\varphi ,\bm x\left( {{t_\varphi }} \right)} \right){\rm{d}}\varphi } ,\int_0^{2\pi } {{{\bm v'}_x}\left( {\varphi ,\bm y'\left( {t_k^ * } \right)} \right){\rm{d}}\varphi } } \right) \nonumber\\
  \le \int_0^{2\pi } {d\left( {{\bm v_x}\left( {\varphi ,\bm x\left( {{t_\varphi }} \right)} \right),{{\bm v'}_x}\left( {\varphi ,\bm x'\left( {t_k^ * } \right)} \right)} \right)} {\rm{d}}\varphi.\nonumber
\end{split}
\end{align}
Therefore, by the virtue of the coarse estimation \eqref{eq:DisXs}, the Lipschitz property of the set-valued map $\bm X \left( {\phi ,\bm x} \right)$, and the optimized measurable selection
${\bm v_x}\left( {\varphi \left( t \right),\bm x\left( t \right)} \right)$, we have
 \begin{align}
 &\int_0^{2\pi } {d\left( {{\bm v_x}\left( {\varphi ,\bm x\left( {{t_\varphi }} \right)} \right),{{\bm v'}_x}\left( {\varphi ,\bm x'\left( {t_k^ * } \right)} \right)} \right)} {\rm{d}}\varphi  \nonumber\\
  \le& \varepsilon \int_0^{2\pi } {{d_H}\left( {\bm X\left( {\varphi ,\bm x\left( {{t_\varphi }} \right)} \right),\bm X\left( {\varphi ,\bm x'\left( {t_k^ * } \right)} \right)} \right)} {\rm{d}}\varphi  \nonumber\\
  \le& 2\pi \lambda \varepsilon \left( {d{x_k} + {M_X}\delta t_{k + 1}^ * \varepsilon } \right).\nonumber
 \end{align}
Additionally, there exists $\int_{t_k^ * }^{t_{k + 1}^ * } {{v_\varphi }\left( {\bm x\left( \tau  \right)} \right)} {\rm{d}}\tau  = 2\pi$ due to the selection of $R{t^*}_{\left[ {{t_0},{{t_0}+L \mathord{\left/
 {\vphantom {L \varepsilon }} \right.
 \kern-\nulldelimiterspace} \varepsilon }} \right]}$. Suppose that ${{2\pi } \mathord{\left/
 {\vphantom {{2\pi } {\delta t_{k + 1}^ * }}} \right.
 \kern-\nulldelimiterspace} {\delta t_{k + 1}^ * }} \notin \Omega \left( {\bm x\left( t \right)} \right)$, for $\forall t \in \left( {t_k^*,t_{k + 1}^*} \right]$. Therefore, due to the selection given in \eqref{eq:phiselct}, as well as the compactness and convexity of $\Omega \left( {\bm x\left( t \right)} \right)$, there exists ${v_\varphi }\left( {\bm x\left( t  \right)} \right) \in \partial \Omega \left( {\bm x\left( t \right)} \right)$, with $\partial \star $ being the boundary of ${\star}$. This is equivalent to say that, either ${v_\varphi }\left( {\bm x\left( t  \right)} \right) = \sup \Omega \left( {\bm x\left( t \right)} \right)$ or ${v_\varphi }\left( {\bm x\left( t  \right)} \right) = \inf \Omega \left( {\bm x\left( t \right)} \right)$. Moreover, within $t \in {D_{\sup }}$, that is when the optimized measurable selection of \eqref{eq:phiselct} is $\sup \Omega \left( {\bm x\left( t \right)}\right)$, $\forall v \in \Omega \left( {\bm x\left( t \right)} \right),v <c_{\phi\sup}< {{2\pi } \mathord{\left/
 {\vphantom {{2\pi } {\delta t_{k + 1}^ * }}} \right.
 \kern-\nulldelimiterspace} {\delta t_{k + 1}^ * }}$, and within $t \in {D_{\inf }}$, $\forall v \in \Omega \left( {\bm x\left( t \right)} \right),v >c_{\phi\inf}> {{2\pi } \mathord{\left/
 {\vphantom {{2\pi } {\delta t_{k + 1}^ * }}} \right.
 \kern-\nulldelimiterspace} {\delta t_{k + 1}^ * }}$.
Therefore, both situations should appear in the time period $\left( {t_k^*,t_{k + 1}^*} \right]$, which implies that there exists at least one discontinuity point in this period.
Suppose that the first discontinuity point appears at $t = t_{dc}$, then there exists arbitrary small $\varepsilon_t>0$ \emph{s.t.} ${d_H}\left( {\Omega \left( {\bm x\left( {{t_{dc}}} \right)} \right),\Omega \left( {\bm x\left( {{t_{dc}} + {\varepsilon _t}} \right)} \right)} \right) = {c_{\phi \inf }} - {c_{\phi \sup }} > 0$.
Furthermore, according to the boundedness of $\bm X \left( {\phi ,\bm x} \right)$, the solution $\bm x$ is Lipschitz continuous. Based on this fact, and also the Lipschitz continuity of $\Omega \left( \bm x \right)$, we know that there exists a contradiction.
Thus, also due to the optimized measurable selection of \eqref{eq:phiselct}, it is known that $\exists  t \in \left( {t_k^*,t_{k + 1}^*} \right]$, ${v_\varphi }\left( {\bm x\left( t \right)} \right)= {{2\pi } \mathord{\left/
 {\vphantom {{2\pi } {\delta t_{k + 1}^ * }}} \right.
 \kern-\nulldelimiterspace} {\delta t_{k + 1}^ * }} $.
Beyond that, similar to the coarse estimate \eqref{eq:DisXs}, it is known that
 \begin{align}
d\left( {\bm x\left( {{t_1}} \right),\bm x\left( {{t_2}} \right)} \right) \le \varepsilon {M_X}\delta t_{k + 1}^ *.\nonumber
\end{align}
Considering the Lipschitz constant $\lambda_\omega$ of $\Omega \left( \bm x \right)$, it can be deduced that
 \begin{align}
\left\| {{v_\varphi }\left( {\bm x\left( \tau  \right)} \right) - \frac{{2\pi }}{{\delta t_{k + 1}^ * }}} \right\| \le \varepsilon {\lambda _\omega }{M_X}\delta t_{k + 1}^ *.\nonumber
\end{align}
Consequently, there exists the following inequality:
 \begin{align}
 &d\left( {\int_0^{2\pi } {{1 \mathord{\left/
 {\vphantom {1 {{v_\psi }\left( {\bm x\left( {{t_\varphi }} \right)} \right)}}} \right.
 \kern-\nulldelimiterspace} {{v_\psi }\left( {\bm x\left( {{t_\varphi }} \right)} \right)}} \cdot {\bm v_x}\left( {\varphi ,\bm x\left( {{t_\varphi }} \right)} \right){\rm{d}}\varphi } ,} \right.
 \nonumber\\
 &~~\left.{\frac{{\delta t_{k + 1}^ * }}{{2\pi }}\int_0^{2\pi } {{\bm v_x}\left( {\varphi ,\bm x\left( {{t_\varphi }} \right)} \right){\rm{d}}\varphi } } \right) \le 2\pi {\lambda _\omega }M_X^2\delta t_{k + 1}^ * {\varepsilon ^2}.\nonumber
\end{align}
According to aforementioned analysis, there exists
\begin{equation}
d{x_{N_r^*\! - \!1}} \!\le\! \left(\! {{e^{\lambda\! M\!L\!/\!m}} \!- \!1} \!\right)\!\left(\! {\frac{{2\pi {M_X}}}{m}\! + \!\frac{{2\pi {\lambda _\omega }M_X^2}}{\lambda }} \!\right)\!\varepsilon \! +\!\mathcal O\!\left( \!\varepsilon ^2 \!\right).\nonumber
\end{equation}
Based on the boundedness of $\bm X \left( {\phi ,\bm x} \right)$, consequently, we have
\begin{align}
\label{eq:DisXNm1}
d{x_{N_r^*}} \le &\left( {{e^{\lambda ML/m}} - 1} \right)\left( {\frac{{2\pi {M_X}}}{m} + \frac{{2\pi {\lambda _\omega }M_X^2}}{\lambda }} \right)\varepsilon  \nonumber \\
&+ {{4{M_X}\pi } \mathord{\left/
 {\vphantom {{4{M_X}\pi } {m \cdot \varepsilon }}} \right.
 \kern-\nulldelimiterspace} {m \cdot \varepsilon }} + \mathcal O\left( {{\varepsilon ^2}} \right).
\end{align}

\emph{\textbf{Step 4:}} So far, the path to the validation of the statement \eqref{eq:Th1aaa} is already blazed, however, this path is based on a specific selection of $R{t^*}_{\left[ {{t_0},{{t_0}+L \mathord{\left/
 {\vphantom {L \varepsilon }} \right.
 \kern-\nulldelimiterspace} \varepsilon }} \right]}$, whose existence is not completely assured yet.
At this juncture, we borrow the covering space concept from the algebraic topology to describe the relationship between the solutions of the inclusions \eqref{eq:ProbForm} and the selections of $R{t^*}_{\left[ {{t_0},{{t_0}+L \mathord{\left/
 {\vphantom {L \varepsilon }} \right.
 \kern-\nulldelimiterspace} \varepsilon }} \right]}$.
And we follow the definition of the covering space given in \cite{Top.}.
A covering space of the topology space $C$ is a topology space $B$ together with a continuous surjective map $p$, \emph{s.t.} $p:B~\to~C$, moreover, every point $c$ of $C$ has a neighbourhood of $U_c$, satisfying that its inverse image $p^{-1}(U_c)$ can be written as the union of disjoint open sets $V_\alpha$, and for each $\alpha$, $V_\alpha$ is mapped homeomorphically onto $U_c$ by $p$. In this circumstance, the so-called continuous surjective map can be acquired by the definition of the $R{t^*}_{\left[ {{t_0},{{t_0}+L \mathord{\left/
 {\vphantom {L \varepsilon }} \right.
 \kern-\nulldelimiterspace} \varepsilon }} \right]}$, thus, ${\mathcal S_{\left[ {{t_0},{{t_0}+L \mathord{\left/
 {\vphantom {L \varepsilon }} \right.
 \kern-\nulldelimiterspace} \varepsilon }} \right]}}\left( {\varepsilon \bm X,\Omega ,{\bm x_0}} \right)$ is a covering space of $\mathcal{C}_{\left[ {{t_0},{{t_0}+L \mathord{\left/
 {\vphantom {L \varepsilon }} \right.
 \kern-\nulldelimiterspace} \varepsilon }} \right]}$.
Consequently, selecting an arbitrary viable recurrent-time sequence $R{t^*}_{\left[ {{t_0},{{t_0}+L \mathord{\left/
 {\vphantom {L \varepsilon }} \right.
 \kern-\nulldelimiterspace} \varepsilon }} \right]}$ indicates that the set of its reverse image in the whole solution set ${\mathcal S_{\left[ {{t_0},{{t_0}+L \mathord{\left/
 {\vphantom {L \varepsilon }} \right.
 \kern-\nulldelimiterspace} \varepsilon }} \right]}}\left( {\varepsilon \bm X,\Omega ,{\bm x_0}} \right)$ is already been regularized in the sense of recurrent time. And the union of these images can cover the whole solution set \emph{w.r.t.} $\left[ {t_0,{t_0+L \mathord{\left/
 {\vphantom {t_0+L \varepsilon }} \right.
 \kern-\nulldelimiterspace} \varepsilon }} \right]$, such that the completeness of the preceding analysis is proved. 

In view of the inequalities \eqref{eq:DisN}, \eqref{eq:DisXY} and \eqref{eq:DisXNm1}, the corresponding constant $c$ in \eqref{eq:Th1aaa} can be chosen as
\begin{align}
 c = & \left( {{e^{\frac{\lambda ML}{m}}} - 1} \right)\left( {\frac{{4\pi {M_X}}}{m} + \frac{{2\pi {\lambda _\omega }M_X^2}}{\lambda }} \right)+\frac{8{M_X}\pi}{m}.\nonumber
\end{align}
\emph{s.t.} the statement
\begin{equation}
\mathcal R_ t \left( {\varepsilon \bar {\bm X} ,{\bm x_0}} \right) \subset {\rm{cl}}\mathcal R_ t \left( {\varepsilon {\bm X},\Omega ,{\bm x_0}} \right) + c\varepsilon \mathbb B\nonumber
\end{equation}
holds true on the time interval $ \left[ {{t_0},{{t_0}+L \mathord{\left/
 {\vphantom {L \varepsilon }} \right.
 \kern-\nulldelimiterspace} \varepsilon }} \right]$.

In a similar vein, for every solution chosen from the solution set ${\mathcal S_{\left[ {{t_0},{{t_0}+L \mathord{\left/
 {\vphantom {L \varepsilon }} \right.
 \kern-\nulldelimiterspace} \varepsilon }} \right]}}\left( {\varepsilon \bm X,\Omega ,{\bm x_0}} \right)$, we can find a corresponding solution
 of the averaged system \eqref{eq:avProbForm} $\mathcal O\left( {{\varepsilon }} \right)$ close, within the time interval $ \left[ {{t_0},{{t_0}+L \mathord{\left/
 {\vphantom {L \varepsilon }} \right.
 \kern-\nulldelimiterspace} \varepsilon }} \right]$, which is basically the inverse version of the preceding analysis.
 And that leads to the dual statement of \eqref{eq:Th1aaa}, formulated as
\begin{align}
\label{eq:Th1bbb}
 {\rm{cl}}\mathcal R_ t \left( {\varepsilon {\bm X},\Omega ,{\bm x_0}} \right)\subset \mathcal R_ t \left( {\varepsilon \bar {\bm X} ,{\bm x_0}} \right) + c\varepsilon \mathbb B.
\end{align}

Additionally, the very fact that, for every solution chosen from the solution set ${\mathcal S_{\left[ {{t_0},{{t_0}+L \mathord{\left/
 {\vphantom {L \varepsilon }} \right.
 \kern-\nulldelimiterspace} \varepsilon }} \right]}}\left( {\varepsilon \bm X,\Omega ,{\bm x_0}} \right)$, there exists a unique corresponding recurrent-time sequence, coincides with the surjection of the covering space.
Finally, the validity of statements \eqref{eq:Th1aaa} and \eqref{eq:Th1bbb} completes this proof.
\end{proof}

The sets corresponding to the optimization operations in \eqref{eq:argmin} and \eqref{eq:Xselct} exist and are compact.
It is obvious that we can develop a strict partial order on $\rm{comp}\left( {{\mathbb R^n}} \right)$, according to the functions minimized respectively in \eqref{eq:argmin} and \eqref{eq:Xselct}.
For optimization operations in \eqref{eq:argmin}, considering the level sets induced by the function being minimized, we can find the minimum value $d_{\rm min}$ of them that its corresponding level set $\Sigma_{\rm min}$ intersects with the compact set $\varepsilon \bar{\bm X}\left( {\bm y'\left( {t_k^ * } \right)} \right)$ by checking along the developed partial order.
Except for the trivial circumstance that $d_{\rm min}=0$, the existence of $d_{\rm min}$ and $\Sigma_{\rm min}$ can be validated via considering the collection of all the level sets that intersect the compact set $\varepsilon \bar{\bm X}\left( {\bm y'\left( {t_k^ * } \right)} \right)$, which is obviously bounded.
By taking the inverted strict partial order, we can use Zorn's Lemma to validate the existence of $d_{\rm min}$ and $\Sigma_{\rm min}$ \cite{Top.}, which boils down to the existence of the optimization. The set $\rm{comp}\left( {{\mathbb R^n}} \right)$ equipped with the metric $d$ is a metric space, where the compact sets are always closed. And the level set here, due to being the boundary of some specific set, is closed.
Therefore, the set $\varepsilon \bar{\bm X}\left( {\bm y'\left( {t_k^ * } \right)} \right) \cap \Sigma_{\rm min}$ is closed, and is compact for being a closed subset of both the compact sets $\varepsilon \bar{\bm X}\left( {\bm y'\left( {t_k^ * } \right)} \right)$ and $\Sigma_{\rm min}$. Similar analyses for the optimization operations in \eqref{eq:Xselct} are omitted here for brevity.


\section{Convergence and Invariance Analysis}
\Revised{
Amongst the subsequent reasonings, the basic idea can be interpreted that, the convergences of the averaged system or the original one can lead to  averaging approximations on infinite intervals. 
Additionally, we also provide the definition and basic observations on the so-called graphical stability to focus on the ``stability of periodic pattern''.}

In order to describe the asymptotic behavior along certain family of orbits holistically, we specify the solution which can be regarded as one category of the R-solution, and generally known as the solution of the integral funnel equation \cite{TolstonogovEq,Panasyuk}.
\Revised{
\begin{Definition}[Absolutely continuous set-valued Map]
A set-valued map $\bm S:\mathbb{R}_{\ge {t_0}} \to {\rm comp}(\mathbb{R}^n)$ is called absolutely continuous, if for every $\varepsilon>0$, there exists a corresponding $\delta(\varepsilon)>0$ \emph{s.t.} for any $n\in\mathbb N$, and any finite collection $\left\{ {{\alpha _i},{\beta _i}} \right\}_{i = 1}^N \in {\mathbb R_{ \ge {t_0}}}$ with ${\alpha _1} < {\beta _1} \le {\alpha _2} < {\beta _2} \le  \cdots  \le {\alpha _n} < {\beta _n}$, the condition $\sum\nolimits_{i = 1}^n {\left( {{\beta _i} - {\alpha _i}} \right)}  < \delta \left( \varepsilon  \right)$ implies $\sum\nolimits_{i = 1}^n {{d_H}\left( {\bm S\left( {{\alpha _i}} \right),\bm S\left( {{\beta _i}} \right)} \right)}  < \varepsilon$.
\end{Definition}
}
 \Revised{
\begin{Definition}[Rf-solution]
\label{def:Rf-solution}
The absolutely continuous set-valued map $\mathcal R^*:\mathbb{R}_{\ge {t_0}} \to {\rm comp}(\mathbb{R}^n)$, with its compact initial section $R_0=\mathcal R^*({t_0})\subset D$, is an Rf-solution generated by
\begin{equation}
\dot {\bm x} \in \bm G\left( {t,\bm x} \right),\bm x\left( {t_0} \right) = {\bm x_0}, \forall {\bm x_0} \in R_0 \in {\rm{comp}}(\mathbb{R}^n),\nonumber
\end{equation}
denoted as ${\mathcal S}^{*}_{\mathbb R_{{\ge {t_0}}}}(\bm X,R_0)$ with its cross-section at $t$ denoted as ${\mathcal R}^{*}_{t}(\bm X,R_0) \subset D \subset {\mathbb R^n}$, if the relationship
\begin{equation}
\mathop {\lim }\limits_{\delta  \to 0^+} \!\frac{1}{\delta }{d_H}\left( \!{\mathcal R^*\left( {t \!+ \!\delta } \right),\!\bigcup\limits_{\scriptstyle \bm x \in {\mathcal R^*}\left(\! t \!\right) \hfill \atop 
  \scriptstyle \bm g \in \bm G\left(\! {t,\bm x}\! \right) \hfill}\! {\left\{\! {\bm x \!+ \!\int_t^{t \!+\! \delta } {\!\bm g\!\left( {s,\bm x(s)} \right){\rm{d}}s} } \!\right\}\!} } \!\right) \!=\! 0\nonumber
\end{equation}
holds \emph{a.e.} on $t \in [{t_0},t_0+\beta) \subset \mathbb{R}_{\ge {t_0}}$, $\beta>0$,
where $\bm g$ is the locally integrable selection in $\bm G\left( {t,\bm x} \right)$ on $\left( {t,t + \delta} \right] \times D$,
and $\delta \in \mathbb{R}_+$, $\mathbb{R}_+:=\left(0,+\infty\right)$, and the state $\bm x(s)$ can be calculated by ${\bm x \!+ \!\int_t^{t \!+\! s } {\!\bm g\!\left( {\tau,\bm x(\tau)} \right){\rm{d}}\tau} }$, which is well-defined on every $(t,t+\delta]$. Even though the aforementioned definition of Rf-solution \emph{w.r.t.} $\bm G\left( {t,\bm x} \right)$ follows the time-varying form, the corresponding definition for time invariant system is also included, since the set-valued field merely remains unchanged.
\end{Definition}}

\begin{Remark}
\label{Re:R-solutionp} 
Bearing considerable resemblance to the notion of the sample-and-hold solution in \cite{Clarke}, this solution notion also excludes closed convex hull operation without enduring a nonzero-Lebesgue-measure interval of evolution, which may be generated by the famous Filippov regularization:
\begin{equation}
F\!\left( \bm G \right)\left( {t,\bm x} \right): = \!\bigcap\limits_{\delta  > 0}\! {\bigcap\limits_{\mu \left( S \right) = 0}\! {{\overline{ \rm{co}}}\left\{ {\bm G\left( \!{t,\bm y} \right)\left| {\bm y \in (\bm x+ \delta \mathbb B) \backslash S} \right.} \right\}} },\nonumber
\end{equation}
with ${\overline{ \rm{co}}}(\star)$ denoting the closed convex hull of $\star$, and $\mu$ being a Lebesgue measure, the set $S$ denotes the set where the set-valued map $\bm G\left( \!{t,\bm x} \right)$ is not essential, \emph{viz.} the zero-Lebesgue-measure set of points not defined, or discontinuous.
\end{Remark}

\begin{Lemma}[Existence of the Rf-solution]
\label{Le:Exists} 
With \emph{Condition A} satisfied, taking any compact nonempty set $R_0 \in {\rm{comp}}(D)$ as the initial section, and for $\forall t \in \mathbb R _{\ge t_0}$, there exists forward complete Rf-solutions ${\mathcal S^*_{\mathbb R_{{\ge {t_0}}}}}\left( {\varepsilon \bm X,\Omega ,{R_0}} \right)$ and ${\mathcal S^*_{\mathbb R_{{\ge {t_0}}}}}\left(\varepsilon \bar {\bm X},{R_0}\right)$ defined on $\mathbb R _{\ge t_0} \times D$, for the original system $ \bm X\left( {\phi \left(t\right),{\bm x\left(t\right)}} \right)$ and the averaged system $\bar {\bm X} \left( \bm x\left(t\right) \right)$.
\end{Lemma}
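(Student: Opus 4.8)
The plan is to realize the Rf-solution concretely as the time cross-section of the Carath\'eodory trajectory funnel and then verify, one by one, the three structural requirements in \emph{Definition~\ref{def:Rf-solution}}: compact-valuedness, absolute continuity, and the integral-funnel limit. First I would set $\mathcal R^*(t) := {\rm cl}\,\mathcal R_t(\varepsilon \bm X,\Omega,R_0)$, the closure of the set of all endpoints $\bm x(t)$ of viable Carath\'eodory trajectories issuing from $R_0$. By \emph{Remark~\ref{Re:CCB}} (Filippov's theorem together with Arzel\`a--Ascoli), under condition \romannumeral1) the trajectory set is nonempty, compact and connected, so each cross-section $\mathcal R^*(t)$ is a nonempty compact subset of $\mathbb R^n$; condition \romannumeral3) (forward invariance of the compact $D$) guarantees $\mathcal R^*(t)\subset D$ and that every trajectory extends to all of $\mathbb R_{\ge t_0}$, which is precisely forward completeness and rules out finite escape.

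Next comes absolute continuity. Since condition \romannumeral1) bounds the field by $\varepsilon M_X$, any trajectory satisfies $d(\bm x(t_1),\bm x(t_2))\le \varepsilon M_X|t_2-t_1|$; continuing trajectories forward by condition \romannumeral3) controls both suprema in the Hausdorff metric, giving $d_H(\mathcal R^*(t_1),\mathcal R^*(t_2))\le \varepsilon M_X|t_2-t_1|$. This Lipschitz-in-time estimate immediately yields absolute continuity of the set-valued map, since the summation criterion of the corresponding definition is met with a linear modulus. A subtlety here is that the field $\bm G(t,\bm x)=\varepsilon\bm X(\phi(t),\bm x)$ is genuinely non-autonomous and coupled to $\phi$; I would handle this by first building the funnel for the augmented field $(\varepsilon\bm X(\phi,\bm x),\Omega(\bm x))$ on the product $Q=\mathcal S^1\times D$ --- where upper semicontinuity, convexity, compactness, boundedness and Lipschitzness in the state all hold by conditions \romannumeral1), \romannumeral2) and \emph{Assumption~\ref{as:Assumption1}} --- and then projecting onto the $\bm x$-component, noting that projection preserves compactness and absolute continuity and that $t\mapsto\bm X(\phi(t),\bm x)$ is measurable because $\phi(\cdot)$ is absolutely continuous.

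The heart of the argument, and the step I expect to be the main obstacle, is the integral-funnel limit. I would estimate the two Hausdorff inclusions over a short horizon $[t,t+\delta]$. Given a trajectory with $\bm x(t)\in\mathcal R^*(t)$ and a selection $\bm g(s)\in\bm G(s,\bm x(s))$, the Lipschitz-in-$\bm x$ bound of condition \romannumeral1) gives $d_H(\bm X(\phi(s),\bm x(s)),\bm X(\phi(s),\bm x(t)))\le \lambda\varepsilon M_X(s-t)$, so replacing the moving selection by one frozen at $\bm x(t)$ incurs an error of order $\int_t^{t+\delta}\lambda\varepsilon M_X(s-t)\,{\rm d}s=O(\delta^2)$, exactly the mechanism already exploited in \emph{Steps 1--3} of the proof of \emph{Theorem~\ref{th:Theorem1}}. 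Dividing by $\delta$ and letting $\delta\to0^+$ then kills the residual and establishes the limit, the converse inclusion being symmetric. Equivalently, once the field regularity furnished by \emph{Condition A} is in hand, one may invoke the classical existence theory for the integral funnel equation \cite{TolstonogovEq,Panasyuk} directly, so that the bulk of the work reduces to checking its hypotheses.

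Finally, for the averaged system the same construction applies verbatim, and is in fact simpler: by the opening paragraph of the proof of \emph{Theorem~\ref{th:Theorem1}}, $\bar{\bm X}$ is convex- and compact-valued, bounded by the same $M_X$, and Lipschitz in $\bm x$ with the same constant $\lambda$, while being autonomous so that no $\phi$-coupling arises. Hence $\mathcal R^*(t):={\rm cl}\,\mathcal R_t(\varepsilon\bar{\bm X},R_0)$ satisfies all three requirements through the identical estimates, and its forward completeness again follows from the forward invariance of the compact set $D$ in condition \romannumeral3). Together these produce both asserted Rf-solutions.
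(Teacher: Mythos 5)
Your proposal is correct, and its backbone coincides with the paper's: both realize the Rf-solution as the collection (funnel) of Carath\'eodory trajectories whose existence and compactness come from \emph{Remark~\ref{Re:CCB}} (Filippov plus Arzel\`a--Ascoli), and both obtain forward completeness from the forward invariance of the compact set $D$ in \emph{Condition~A}. The difference lies in how the defining properties of \emph{Definition~\ref{def:Rf-solution}} are certified. The paper builds the ``path of sets'' by a sample-and-hold-style construction in the spirit of \cite{Clarke} --- partitioning time, assembling the ``set of paths'' on each partition, shrinking the partitions, and extracting uniformly convergent subsequences via Arzel\`a--Ascoli, with the structural hypotheses delegated to \emph{Corollary~4.4} of \cite{SmirnovIntro.}; it never explicitly checks the absolute continuity of the set-valued map or the integral-funnel limit. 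You instead take the closed cross-sections of the trajectory funnel as an explicit candidate and verify all three requirements directly: compactness from \emph{Remark~\ref{Re:CCB}}, absolute continuity from the Lipschitz-in-time estimate $d_H(\mathcal R^*(t_1),\mathcal R^*(t_2))\le \varepsilon M_X|t_2-t_1|$, and the funnel limit from the $O(\delta^2)$ freezing error driven by the Lipschitz constant $\lambda$ --- the same mechanism as in \emph{Steps 1--3} of \emph{Theorem~\ref{th:Theorem1}} --- with the classical funnel-equation theory \cite{TolstonogovEq,Panasyuk} as a fallback. Your treatment of the non-autonomous coupling through the augmented field on $Q=\mathcal S^1\times D$ followed by projection is also a point the paper passes over silently. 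In short, your route is more self-contained and quantitatively explicit where the paper's is shorter but leans on external references and leaves the verification of the definition implicit; both are valid existence arguments.
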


\begin{proof} 
This lemma is analog to the existence assertion in \emph{Remark~\ref{Re:CCB}}. One can then follow the pathway of \emph{Corollary~4.4} in \cite{SmirnovIntro.}.
First of all, the existence of absolutely continuous solutions immediately indicates the existence of locally integrable functions \cite{AsympTeel}.
In other word, the solutions are integrals of their own derivatives, which admits local solutions.
The obstacle confronted is that we need other ways to construct the ``path of sets''. By \emph{Definition~\ref{def:Rf-solution}}, Rf-solution is locally ``set of paths'' in every partition $\delta$, then we can construct and shrink the partitions $\delta$ of every absolutely continuous single-valued function over finite or infinite time interval, which is similar to the construction of the sample-and-hold solutions in \cite{Clarke}.

Every limit of the uniformly convergent subsequence, due to the Arzel\`{a}-Ascoli theorem, is an absolutely continuous solution. Thus, the existence of these single-valued solutions leads to the existence of their collection, the Rf-solution.

Benefit from the forward invariance of $D$ and the definition of $\bm X$ and $\bm {\overline X}$ on $D$, also as a consequence of the fact that the locally integrable selection always exists for $\forall \bm x \in D$ on every partition $\delta$, the solutions are forward complete.
\end{proof}

Although the main difference between the classical dynamical system framework and the differential inclusion
one, is the lack of uniqueness of solutions \cite{Bacciotti}, we can still find the uniqueness for the concept of ``path of set'', the Rf-solution.
\begin{Lemma}[Uniqueness of the Rf-solution]
\label{Le:Uniq}With \emph{Condition A} satisfied,
further, taking any compact nonempty set $R_0 \in {\rm{comp}}(D)$ as the initial section, and for $\forall t \in \mathbb R _{\ge t_0}$, the forward complete Rf-solutions ${\mathcal S^*_{\mathbb R_{{\ge {t_0}}}}}\left( {\varepsilon \bm X,\Omega ,{R_0}} \right)$ and ${\mathcal S^*_{\mathbb R_{{\ge {t_0}}}}}\left(\varepsilon \bar {\bm X},{R_0}\right)$ defined on $\mathbb R _{\ge t_0} \times D$ are unique and depend continuously on the initial set $R_0$.
\end{Lemma}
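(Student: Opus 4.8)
The plan is to reduce both assertions to a single Filippov-type Gronwall estimate in the Hausdorff metric, exploiting that under \emph{Condition A} both scaled fields are Lipschitz in the state: $\varepsilon\bm X(\phi,\cdot)$ has constant $\varepsilon\lambda$ for \emph{a.e.} $\phi$, and $\varepsilon\bar{\bm X}$ has constant $\varepsilon\lambda$ as well (this was already recorded in the proof of \emph{Theorem~\ref{th:Theorem1}}). Concretely, given two Rf-solutions $\mathcal R_1^*$ and $\mathcal R_2^*$ of the \emph{same} averaged field started from initial sections $R_0^{(1)}$ and $R_0^{(2)}$, I would introduce the scalar gauge $\rho(t):=d_H\bigl(\mathcal R_1^*(t),\mathcal R_2^*(t)\bigr)$ and bound its upper right Dini derivative. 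Existence and forward completeness of the objects being compared are supplied by \emph{Lemma~\ref{Le:Exists}}, so the whole argument can be run on $\mathbb R_{\ge t_0}$.

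\emph{First step (the local one-step estimate).} Using the integral funnel equation of \emph{Definition~\ref{def:Rf-solution}}, for small $\delta>0$ each section $\mathcal R_i^*(t+\delta)$ is $o(\delta)$-close in $d_H$ to its ``Euler union'' $\bigcup_{\bm x\in\mathcal R_i^*(t),\,\bm g\in\varepsilon\bar{\bm X}(\bm x)}\{\bm x+\delta\bm g\}$. I would then match velocities across the two funnels via the Filippov selection lemma: for each $\bm x_1\in\mathcal R_1^*(t)$ pick $\bm x_2\in\mathcal R_2^*(t)$ with $\|\bm x_1-\bm x_2\|\le\rho(t)$, and, using the Lipschitz inclusion $\varepsilon\bar{\bm X}(\bm x_1)\subset\varepsilon\bar{\bm X}(\bm x_2)+\varepsilon\lambda\|\bm x_1-\bm x_2\|\mathbb B$, choose a velocity in $\varepsilon\bar{\bm X}(\bm x_2)$ within $\varepsilon\lambda\|\bm x_1-\bm x_2\|$ of the chosen velocity in $\varepsilon\bar{\bm X}(\bm x_1)$. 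Taking suprema and the symmetric estimate yields $d_H(\mathrm{Euler}_1,\mathrm{Euler}_2)\le(1+\varepsilon\lambda\delta)\rho(t)$, hence $D^+\rho(t)\le\varepsilon\lambda\,\rho(t)$ after absorbing the $o(\delta)$ funnel-approximation error.

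\emph{Second step (integration and the two conclusions).} Gronwall's inequality gives the Hausdorff estimate $d_H\bigl(\mathcal R_1^*(t),\mathcal R_2^*(t)\bigr)\le e^{\varepsilon\lambda(t-t_0)}\,d_H\bigl(R_0^{(1)},R_0^{(2)}\bigr)$ on every finite horizon, and by forward completeness and forward invariance of $D$ for all $t\in\mathbb R_{\ge t_0}$. Setting $R_0^{(1)}=R_0^{(2)}=R_0$ forces $\rho\equiv 0$, which is exactly uniqueness; keeping the two initial sections distinct is precisely continuous dependence on the initial section, uniform on compact time intervals. For the \emph{original} system $\varepsilon\bm X(\phi,\bm x)$ I would replay the same computation on the augmented state $(\bm x,\phi)\in\mathcal S^1\times D$, whose field is Lipschitz in the augmented state with a constant assembled from $\lambda$ and from the Lipschitz constant $\lambda_\omega$ of $\Omega$ guaranteed by \emph{Condition A}.

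The hard part will be the local one-step estimate: rigorously controlling the $o(\delta)$ gap between the funnel section and its Euler union in the set-valued setting, and carrying out the measurable velocity matching through Filippov's lemma rather than a pointwise ODE comparison. The most delicate point is the coupling through $\phi$ in the original system, since $\bm X$ is only upper semicontinuous (not Lipschitz) in $\phi$; one must verify that the divergence of the $\phi$-components is itself controlled by the Lipschitz bound on $\Omega$ and fed back consistently into the augmented Gronwall estimate, so that the $\phi$-error does not reintroduce an uncontrolled term through the $\bm X(\phi,\cdot)$ dependence.
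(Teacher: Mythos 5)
Your proposal follows essentially the same route as the paper's own proof: the paper likewise runs a Filippov--Gronwall contraction estimate in the Hausdorff metric, tracking the combined gauge $D_H(t) = d_H(\mathcal R^0_t,\mathcal R^1_t) + d_H(\Phi^0_t,\Phi^1_t)$ --- which is exactly your augmented state $(\bm x,\phi)$ with constants assembled from $\lambda$, $\lambda_\omega$, $M_X$, $M$ --- then obtains uniqueness by propagating the zero initial gap and continuous dependence from boundedness of the propagation factor. The only difference is one of detail: the paper asserts the one-step exponential-plus-integral inequality directly from \emph{Condition A}, omitting the Euler-union/Filippov-selection bookkeeping you spell out, and it does not explicitly resolve the upper-semicontinuity-in-$\phi$ subtlety you flag either.
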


\begin{proof}
Consider two initial conditions $R_0$ and $R_1$, or more restrictively $\bm x_0$ and $\bm x_1$, and the Rf-solution cross-section after $\delta \in \mathbb R_+$ length of evolution $\mathcal R_ {t+\delta} \left( {\varepsilon \bm X,\Omega ,{R^*}} \right)$
with $R^*$ being selected from ${\rm{comp}}(D)$, abbreviated as $\mathcal R^0_ {t+\delta}$ and $\mathcal R^1_ {t+\delta}$. And the cross-section of the angle $\phi$ at $t$ is denoted as $\Phi _t^0$ with initial angle $\phi_0$, and the sum of Hausdorff distances ${d_H}\left( {\mathcal R_{t + \delta }^0,\mathcal R_{t + \delta }^1} \right) + {d_H}\left( {\Phi _{t + \delta }^0,\Phi _{t + \delta }^1} \right)$ is abbreviated as ${D_H}\left( {t + \delta } \right)$. Based on \emph{Condition A}, there exists
\begin{align}
 {D_H}\left( {t + \delta } \right) \le& {D_H}\left( t \right){e^{\left( {\lambda  + {\lambda _\omega }} \right)\delta }} \nonumber\\ 
  &+ \int_0^\delta  {{e^{\left( {\lambda  + {\lambda _\omega }} \right)\left( {\delta  - s} \right)}}\left( {{M_X} + M} \right){D_H}\left( t \right){\rm d}s}.  \nonumber
\end{align}
When ${D_H}\left( t_0 \right) = 0$, we have ${D_H}\left( t \right) = 0, \forall t \ge t_0$.
Because ${d_H}\left( {\Phi _{t_0}^0,\Phi _{t_0}^1} \right) =0$ ($\Phi _{t_0}^0 = \Phi _{t_0}^1 = \phi_0$), it is obvious that ${d_H}\left( {\mathcal R_{t_0}^0,\mathcal R_{t_0 }^1} \right)=0$ leads to ${d_H}\left( {\mathcal R_{t}^0,\mathcal R_{t}^1} \right)=0$ for $t \ge t_0$, which verifies the uniqueness. The continuity can be verified by the boundedness of ${e^{\left( {\lambda  + {\lambda _\omega }} \right)\delta }} 
  + \int_0^\delta  {{e^{\left( {\lambda  + {\lambda _\omega }} \right)\left( {\delta  - s} \right)}}\left( {{M_X} + M} \right)ds}$. Similar proof can be applied to ${\mathcal S^*_{\mathbb R_{{\ge {t_0}}}}}\left(\varepsilon \bar {\bm X},{R_0}\right)$ without analysis on the phase angle.
\end{proof}

Based on \emph{Lemma~\ref{Le:Uniq}}, we can determine an Rf-solution by specifying its initial condition, which is frequently used in subsequent analyses. Taking fluid flow as a metaphor of the Rf-solution, its uniqueness concentrates on its holistic morphology, regardless of whether it is turbulent or laminar.

The following lemma is to demonstrate the equivalence between the ``path of set'' (Rf-solution) and the ``set of paths'' (Carath\'eodory trajectory sets, or solution funnel)
in the discussion of, but not limited to, the averaging approximation accuracy.
This equivalence of them relies heavily on the convexity of $\bm X \left( {\phi ,\bm x} \right)$. 
\Revised{\begin{Lemma}
\label{Le:Still}
With \emph{Condition A} satisfied,
then for any  $L > 0$, there exists a constant ${\varepsilon _0}$, \emph{s.t.}, for $\forall \varepsilon  \in \left( {0,{\varepsilon _0}} \right]$, $\forall \bm x_0 \in D$, and $\forall t \in \left[ {t_0,{t_0+L \mathord{\left/
 {\vphantom {t_0+L \varepsilon }} \right.
 \kern-\nulldelimiterspace} \varepsilon }} \right],$
\begin{align}
{d_H}\!\left( {\rm{cl}}{\mathcal R^*_t\left( {\varepsilon \bm X,\Omega ,{\bm x_0}} \right),\mathcal R^*_{  t  }\left( \! {\varepsilon \bar {\bm X} ,{\bm x_0}}\!  \right)}  \right)\! \le \!  c\varepsilon,\nonumber
\end{align}
where the time cross-section
$\mathcal R^*_t \left( {\varepsilon \bm X,\Omega ,{\bm x_0}} \right): = \left\{ {\bm x\left( t \right)\left| {\bm x\left( t \right) \in {\mathcal S^*_{\left[ {t_0,{t_0+L \mathord{\left/
 {\vphantom {t_0+L \varepsilon }} \right.
 \kern-\nulldelimiterspace} \varepsilon }} \right]}}\left( {\varepsilon \bm X,\Omega ,{\bm x_0}} \right)} \right.} \right\}$,
with ${\mathcal S^*_{\left[ {t_0,{t_0+L \mathord{\left/
 {\vphantom {t_0+L \varepsilon }} \right.
 \kern-\nulldelimiterspace} \varepsilon }} \right]}}$ being the Rf-solution of \eqref{eq:ProbForm} on $t \in \left[ {t_0,{t_0+L \mathord{\left/
 {\vphantom {t_0+L \varepsilon }} \right.
 \kern-\nulldelimiterspace} \varepsilon }} \right]$, and $c $ is a positive constant. The Rf-solution cross-section $\mathcal R^*_{  t  }\left( \! {\varepsilon \bar {\bm X} ,{\bm x_0}}\!  \right)$ \emph{w.r.t.} the averaged system \eqref{eq:avProbForm} can be similarly defined.
\end{Lemma}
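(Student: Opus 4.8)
The plan is to reduce this lemma to \emph{Theorem~\ref{th:Theorem1}} by proving that the Rf-solution cross-section and the closure of the Carath\'eodory trajectory-set cross-section (the reachable set) coincide for both the original and the averaged systems. That is, under \emph{Condition A}, for every fixed $\bm x_0\in D$ and every $t\in[t_0,t_0+L/\varepsilon]$ I would establish
\begin{equation}
\mathcal R^*_t\left(\varepsilon\bm X,\Omega,\bm x_0\right)={\rm cl}\,\mathcal R_t\left(\varepsilon\bm X,\Omega,\bm x_0\right),\quad\mathcal R^*_t\left(\varepsilon\bar{\bm X},\bm x_0\right)=\mathcal R_t\left(\varepsilon\bar{\bm X},\bm x_0\right),\nonumber
\end{equation}
after which the $c\varepsilon$ estimate of \emph{Theorem~\ref{th:Theorem1}} transfers verbatim and the lemma follows immediately. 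This is exactly the promised equivalence between the ``path of set'' and the ``set of paths'', and its proof rests on convexity of $\bm X\left(\phi,\bm x\right)$.

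First, I would check that the reachable-set map $t\mapsto{\rm cl}\,\mathcal R_t$ is an absolutely continuous set-valued map in the sense introduced before \emph{Definition~\ref{def:Rf-solution}}. The boundedness of $\bm X$ by $M_X$, together with the bound $M$ on the angular speed from \emph{Assumption~\ref{as:Assumption1}}, limits how far apart two trajectories issuing from the cross-section at time $t$ can travel over a short horizon, which yields a Lipschitz-in-time bound on the Hausdorff variation of $t\mapsto{\rm cl}\,\mathcal R_t$ and hence its absolute continuity. The compactness and connectedness recorded in \emph{Remark~\ref{Re:CCB}} ensure the cross-sections remain in ${\rm comp}(D)$.

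The crux, and the step I expect to be the main obstacle, is to show that this reachable-set map actually \emph{solves} the integral funnel equation of \emph{Definition~\ref{def:Rf-solution}}, not merely a one-sided inclusion. Here convexity is indispensable: I would lift the dynamics to the joint state $\left(\bm x,\phi\right)$, whose velocity set $\varepsilon\bm X\left(\phi,\bm x\right)\times\Omega\left(\bm x\right)$ is convex and Lipschitz, so that Aumann's integral is convex-valued and the Filippov-type properties of \emph{Remark~\ref{Re:CCB}} apply. The forward estimate (up to $o(\delta)$) is immediate, since any point reachable at $t+\delta$ stems from a trajectory passing through $\mathcal R_t$ with increment $\int_t^{t+\delta}\bm g\,{\rm d}s$ along a selection $\bm g$; the reverse estimate (up to $o(\delta)$) uses the Lipschitz continuity and the forward invariance of $D$ granted by \emph{Condition A} to show that following any admissible selection from any $\bm x\in\mathcal R_t$ lands $o(\delta)$-close to a genuine reachable point, with convexity making the remainder uniform over selections. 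Division by $\delta$ and passage to the limit then yields the funnel equation \emph{a.e.}

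Finally, I would invoke \emph{Lemma~\ref{Le:Uniq}}: since both ${\rm cl}\,\mathcal R_t$ and the Rf-solution cross-section $\mathcal R^*_t$ solve the same funnel equation with the identical initial section $\left\{\bm x_0\right\}$, uniqueness forces them to coincide. The same argument applied to the averaged system is simpler, because the absence of phase coupling removes the $\Omega$-reparametrization and convexity of $\bar{\bm X}$ already makes $\mathcal R_t\left(\varepsilon\bar{\bm X},\bm x_0\right)$ closed. Substituting the two established equalities into \emph{Theorem~\ref{th:Theorem1}} then delivers the desired bound $d_H\le c\varepsilon$ and completes the proof.
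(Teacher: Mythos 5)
Your overall strategy is exactly the paper's: prove that, under \emph{Condition A}, the (closures of the) Carath\'eodory reachable sets of \eqref{eq:ProbForm} and \eqref{eq:avProbForm} coincide with the Rf-solution cross-sections --- the advertised ``path of sets'' versus ``set of paths'' equivalence --- and then let the $c\varepsilon$ bound of \emph{Theorem~\ref{th:Theorem1}} transfer verbatim. Where you genuinely differ is in how the funnel equation of \emph{Definition~\ref{def:Rf-solution}} gets verified. The paper works trajectory-by-trajectory on the unlifted system: it records that $\bm X(\cdot,\bm x)$ is measurable and $\bm X(\phi,\cdot)$ continuous, invokes the Scorza-Dragoni theorem and the Radon--Nikodym theorem to make the composition $\bm X(\phi(t),\bm x(t))$ integrable in $t$, and then cites \emph{Theorem 4.2.1} of \cite{BanachTol} to obtain the limit equality almost everywhere, leaving the appeal to uniqueness implicit. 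You instead lift to the joint state $(\bm x,\phi)$, check the funnel equation by two-sided $o(\delta)$ estimates, and invoke \emph{Lemma~\ref{Le:Uniq}} explicitly to force ${\rm cl}\,\mathcal R_t=\mathcal R^*_t$. Your route is more self-contained (no Scorza-Dragoni, no external funnel-equation theorem), and the explicit use of uniqueness is cleaner than the paper's tacit reliance on it; the paper's route, in exchange, confronts head-on the measurability of the $\phi$-composition, which your lifting quietly absorbs.

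One caveat needs fixing: your claim that the lifted velocity set $\varepsilon\bm X(\phi,\bm x)\times\Omega(\bm x)$ is ``convex and Lipschitz'' overstates \emph{Condition A}, which gives Lipschitz continuity of $\bm X$ only in $\bm x$; in $\phi$ the map is merely upper semicontinuous. Hence no Filippov-type perturbation estimate is available in the joint variable, and your ``reverse estimate,'' as phrased (following a selection from a point of $\mathcal R_t$ and correcting by Lipschitzness to land near a genuine reachable point), cannot be run that way. The repair is standard and stays within your framework: by \emph{Remark~\ref{Re:CCB}} (upper semicontinuity, convexity, boundedness) the Carath\'eodory solution funnel is compact, so the cross-sections $\mathcal R_t$ are already compact and closed, the closure adds no new initial points, and the reverse inclusion of the funnel equation follows exactly --- not merely up to $o(\delta)$ --- by concatenating a Carath\'eodory trajectory reaching $\bm x\in\mathcal R_t$ at time $t$ with the local trajectory generated by the selection $\bm g$ on $(t,t+\delta]$. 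With that substitution, your argument goes through and delivers the lemma.
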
}
\begin{proof} 
\Revised{Due to the compactness of the set of Carath\'eodory trajectories, which is mentioned in \emph{Remark \ref{Re:CCB}}, and the boundedness of the solutions, the sets ${\mathcal S^\ast_{\mathbb R_{{\ge {t_0}}}}}\left( {\varepsilon \bm X,\Omega ,{\bm x_0}} \right)$ and ${\mathcal S^\ast_{\mathbb R_{{\ge {t_0}}}}}\left(\varepsilon \bar {\bm X},{\bm x_0}\right)$ can be interpreted as absolutely continuous set-valued maps.
 }

 \Revised{Select a Carath\'eodory trajectory in ${\mathcal S_{\mathbb R_{{\ge {t_0}}}}}\left( {\varepsilon \bm X,\Omega ,{\bm x_0}} \right)$, and from the Lipschitz continuity of $\Omega \left( {\bm x} \right)$, it can be deduced that the angle solution $\phi\left(t\right)$ is absolutely continuous \emph{a.e.}, and monotonously increasing if we accumulate $\phi$ on $\mathbb R_{\ge \phi_0}$.
 It is straightforward to check that $\bm X\left( {\cdot ,{\bm x}} \right)$ is measurable for $\forall {\bm x} \in D$, and $\bm X\left( {\phi ,\cdot} \right)$ is continuous for $\forall \phi \in {\mathcal S}^1$.
In virtue of the Scorza-Dragoni theorem, for every $\varepsilon_{sd}$ there exists a compact set $K\subseteq {\mathcal S}^1$ \emph{s.t.}  $\mu\left( {\mathcal S}^1\backslash K\right) < \varepsilon_{sd}$, where the restriction of $\bm X\left( {\cdot ,\cdot} \right)$ is continuous.
Taking the time measure and the angle measure as two different $\sigma$-finite measures of the system evolution, the later of which is absolutely continuous \emph{w.r.t.} the former, and due to the Radon–Nikodym theorem, the derivative ${{{\rm d}\phi } \mathord{\left/
 {\vphantom {{{\rm d}\phi } {{\rm d}t}}} \right.
 \kern-\nulldelimiterspace} {{\rm d}t}}$ exists on every measurable set of the evolution, which further leads to that $\bm X\left( {\phi(t),{\bm x(t)}} \right)$ is integrable \emph{w.r.t.} $t$ on the compact set $\left[ {t_0,{t_0+L \mathord{\left/
 {\vphantom {t_0+L \varepsilon }} \right.
 \kern-\nulldelimiterspace} \varepsilon }} \right] \times D$.
Then following the rationale in the proof of \emph{Theorem 4.2.1} of \cite{BanachTol}, we can straightforwardly check the limitation equality in \emph{Definition \ref{def:Rf-solution}} for each solution in ${\mathcal S_{\mathbb R_{{\ge {t_0}}}}}\left( {\varepsilon \bm X,\Omega ,{\bm x_0}} \right)$, by removing specific meager sets violating the right hand density, \emph{s.t.} the equivalence is ensured almost everywhere. Similar analysis can be performed on ${\mathcal S_{\mathbb R_{{\ge {t_0}}}}}\left(\varepsilon \bar {\bm X},{\bm x_0}\right)$.
 }
 \end{proof}

In this paper, 
comparing to tracking a specific trajectory, we put particular emphasis on the pattern how the system evolves and the method to retain a specific pattern.
To elaborate this point, let us consider the following two stability definition of periodic solutions given in \cite{Nonsys}.
On the one hand, the orbital stability can be captured by a closed forward invariant set $\mathcal M$ of the autonomous system $\dot {\bm x} = f({\bm x})$. After define an $\varepsilon$-neighbourhood of $\mathcal M$ by ${U_\varepsilon } = \left\{ {{\bm x} \in {\mathbb R^n}|d\left( {{\bm x},\mathcal M} \right) < \varepsilon } \right\}$, one can further define the stability \emph{w.r.t.} $\mathcal M$ as, for each $\varepsilon>0$, there is $\delta>0$ \emph{s.t.}
${\bm x}\left( {{t_0}} \right) \in {U_\delta } \Rightarrow {\bm x}\left( t \right) \in {U_\varepsilon },\forall t > {t_0}$. On the other hand, the stability can also be defined with the ``$\delta$-$\varepsilon$'' language along a specific solution $\bm x_d(t)$ satisfying $ {\dot{\bm x}_d} = f({\bm x_d})$ with a metric of $d\left( {{\bm x},\bm x_d(t)} \right)$. Then, let us further scrutinize these two definitions.
For the first definition, there are various behaviors in the forward invariant sets, far from the simple periodic orbits.
Even only for the planar systems, the Poincar\'e-Bendixson theorem needs to exclude the critical points to guarantee the existence of a periodic orbit, which considerably deteriorates especially in differential inclusion system.
One can also find other complicated behaviors within the forward invariant sets from the work in \cite{PB1,PB2}.
For the second definition, the non-asymptotic-stability problem arises, which is exhibited in the following example.
Suppose a periodic orbit as $\bm x_d\left( t \right) = {\left[ {\begin{array}{*{20}{c}}
   {\sin t} & { - \cos t}  \\
\end{array}} \right]^\top}$, while the system with specific initial conditions converges to the periodic solution $\bm x_d^*\left( t \right) = {\left[ {\begin{array}{*{20}{c}}
   {\sin \left( {t + \tau } \right)} & { - \cos \left( {t + \tau } \right)}  \\
\end{array}} \right]^\top}$, where $\tau$ is bounded away from zero or multiples of $2\pi$.
Thus, following the second definition, the system is obvious not stable \emph{w.r.t.} $\bm x_d\left( t \right) = {\left[ {\begin{array}{*{20}{c}}
   {\sin t} & { - \cos t}  \\
\end{array}} \right]^\top}$.
However, these two orbits follow the same pattern, such that they all move along the unit circle with a uniform speed, wherein the orbits should be regarded as equivalent to each other in the sense of ``pattern stability''. In other word, it is the offset in time that makes they never converge, but not the pattern.
These defects of the two definitions stimulate us to give a suitable definition of the stability. Inspired by the work in \cite{PeriodicS}, the specific definitions are given as graphical translation-independent stability.

\Revised{\begin{Definition}[Set asymptotic convergence \cite{Hinvari}]
\label{def:Setasymp}
The set $\mathcal M$ together with a $\rho$-neighbourhood is defined on $D$. This set is called stable if for every Rf-solution ${\mathcal S}^{*}_{\mathbb R_{{\ge {t_0}}}}(\bm X,R_0)$, with $R_0 \subset D$ being the compact initial section,
and for every positive constant $\varepsilon>0$, there exists a positive constant $\delta>0$, \emph{s.t.} ${d_H}\left( {{R_0},\mathcal M} \right) \le \delta  \Rightarrow {d_H}\left( {\mathcal R_t^ * \left( {\bm X,{R_0}} \right),\mathcal M} \right) \le \varepsilon$ holds for $t \in \mathbb R_{\ge {t_0}}$.
The set $\mathcal M$ is called attractive if there exists a positive constant $\rho > 0$, \emph{s.t.} every Rf-solution initial section $R_0 \subset \mathcal M + \rho \mathbb B$, remains in $D$ for $t \in \mathbb R_{\ge {t_0}}$, and further satisfies that $\mathop {\lim }\limits_{{\rm{t}} \to \infty } {d_H}\left(\mathcal R_t^ * \left( {\bm X,{R_0}} \right),\mathcal M   \right)= 0$. The Rf-solution time cross-section $\mathcal R_t^ * \left( {\bm X,{R_0}} \right)$ asymptotically converges to $\mathcal M$, if $\mathcal M$ is stable and attractive. 
\end{Definition}}
\Revised{
\begin{Definition}[Graphical translation-independent stability, GTIS] Consider a nominal Rf-solution ${\mathcal S}^{*}_{\mathbb R_{{\ge {t_0}}}}(\bm X,R_0^s)$ with its cross-section ${\mathcal R}^{*}_{t}(\bm X,R_0^s) \subset D \subset {\mathbb R^n}$ of the differential inclusion $\dot {\bm x} \in \bm X\left( {t,\bm x} \right)$, and the cross-section at $t_0$ is a compact nonempty set $R_0^s  \subset D \subset {\mathbb R^n}$. 
The differential inclusion is called GTIS \emph{w.r.t.} ${\mathcal S}^{*}_{\mathbb R_{{\ge {t_0}}}}(\bm X,R_0^s)$,
if for every Rf-solution of the inclusion, namely ${\mathcal S}^{*}_{\mathbb R_{{\ge {t_0}}}}(\bm X,R_0)$, with its cross-section at $t_0$ being an arbitrary compact nonempty set $R_0 \subset D \subset {\mathbb R^n}$,
and for any positive constant $\varepsilon_g \in \mathbb R_+$,
there exist positive constant $\delta$, and a Lipschitz continuous time-translation function $T(t):\mathbb R_{\ge t_0} \to \mathbb R, 
 \left| {T\left( t \right) - T\left( {t'} \right)} \right| \le {\lambda _{KT}}\varepsilon_g\left| {t - t'} \right|, {\lambda _{KT}}\varepsilon_g\in \left[ {0,1} \right)$, for $\forall t, t' > \!t_0$,  
\emph{s.t.}, if we have ${d_H}\!\left(R_0, R_0^s\right) \!\le \!\delta$, then ${d_H}(\! {{\rm{gph}}\mathcal S_{t \!+ \!\varepsilon_g \mathbb B}^ * \left( {\bm X,\!{R_0}} \!\right),{\rm{gph}}\mathcal S_{t \!+ \!T\left(\! t \!\right) \!+ \!\varepsilon_g \mathbb B}^ * \left(\! {\bm X,\!R_0^s} \!\right)} ) \!\le \!\varepsilon_g$, $\forall t > \!t_0$. The Hausdorff metric here is a natural extension of the Hausdorff metric on ${\rm comp} (D)$,
where the metric of time can be naturally induced by its norm, and ${\rm{gph}}\mathcal S_{t \!+ \!\varepsilon_g \mathbb B}^ * \left( {\bm X,\!{R_0}} \!\right)$ is called the $\varepsilon$-graph of the Rf-solution $\mathcal S_{t}^ * \left( {\bm X,\!{R_0}} \!\right)$.
Additionally, the abuse of the symbol $\rm{gph}$ here dedicates to eliminate the order relationship in the Rf-solution set.
\label{def:GStability}
\end{Definition}
\begin{Definition}[Asymptotic graphical translation-independent stability, AGTIS] Likewise, a differential inclusion system $\dot {\bm x} \in \bm X\left( {t,\bm x} \right)$ is called AGTIS \emph{w.r.t.} the nominal solution ${\mathcal S}^{*}_{\mathbb R_{{\ge {t_0}}}}(\bm X,R_0^s)$, if it is GTIS \emph{w.r.t.} ${\mathcal S}^{*}_{\mathbb R_{{\ge {t_0}}}}(\bm X,\!R_0^s)$, and for every Rf-solution of the inclusion, namely ${\mathcal S}^{*}_{\mathbb R_{{\ge {t_0}}}}(\bm X,\!R_0)$, with its cross-section at $t_0$ being an arbitrary compact nonempty set $R_0 \subset D \subset {\mathbb R^n}$,
there exists a Lipschitz continuous time-translation function $T(t):\mathbb R_{\ge t_0} \to \mathbb R, 
\left| {T\left( t \right) - T\left( {t'} \right)} \right| \le {\lambda _{KT}}\eta_g(t)\left| {t - t'} \right|, {\lambda _{KT}}\eta_g(t)\in \left[ {0,1} \right)$, for $\forall t, t' > \!t_0$, 
\emph{s.t.} $\mathop {\lim }\limits_{t \to \infty}  {d_H}( {{\rm{gph}}\mathcal S_{t \!+ \!\eta_g(t) \mathbb B}^ * \left( {\bm X,\!{R_0}} \!\right), {\rm{gph}}\mathcal S_{t \!+\! T\left( t \right) \!+ \!\eta_g(t) \mathbb B}^ * \left(\! {\bm X,\!R_0^s} \!\right)} )=0$, $\forall t > \!t_0$, where the positive Lipschitz continuous function $\eta_g(t)$ asymptotically converges to $0$ as $t \to \infty$.
\label{def:Asymptotic Gstability}
\end{Definition}}
\begin{Remark}
\label{Re:SuitGraph}
The graphical translation-independent stability we discuss here differs in formations from the graphical convergence in \cite{Rockafellar,Graph}, since we additionally endow it with a translation function to modulate the speed along the phase portrait. Consequently, phase synchronizations are then insignificantly correlated with ``pattern stability'', which creates differences in treating delay-aroused errors, especially in periodic systems. The above methodology to describe ``pattern stability'' can apply \emph{mutatis mutandis} to the exponentially graphical stability and the boundedness \emph{w.r.t.} the graphical translation-independent stability. 
\end{Remark}
\Revised{\begin{Theorem}[Critical point convergence]\label{th:Theorem2}With \emph{Condition A} satisfied, if the geometrically averaged system \eqref{eq:avProbForm} is AGTIS \emph{w.r.t.} 
a critical point $\bm y_{sc} \in D \subset {\mathbb R^n}$ as $t \to + \infty$, then there exists ${\varepsilon _0}>0$, \emph{s.t.} for $\forall \varepsilon  \in \left( {0,{\varepsilon _0}} \right],$ $\forall \bm x_0 \in D$, and $\forall t \in \mathbb R_{{\ge {t_0}}},$
\begin{align}
\label{eq:Th2}
{d_H}\!\left( {{\rm{cl}}\mathcal R^*_t\left( {\varepsilon \bm X,\Omega ,{\bm x_0}} \right),\mathcal R^*_{  t  }\left( \! {\varepsilon \bar {\bm X} ,{\bm x_0}}\!  \right)}  \right)\! \le \!  c_{gc} \varepsilon,
\end{align}
where $c_{gc}$ is a positive constant, and the time cross-section
$\mathcal R^*_t \left( {\varepsilon \bm X,\Omega ,{\bm x_0}} \right): = \left\{ {\bm x\left( t \right)\left| {\bm x\left( t \right) \in {\mathcal S^*_{\mathbb R_{{\ge {t_0}}}}}\left( {\varepsilon \bm X,\Omega ,{\bm x_0}} \right)} \right.} \right\}$,
with ${\mathcal S^*_{\mathbb R_{{\ge {t_0}}}}}$ being the Rf-solution of \eqref{eq:ProbForm} on $t \in \mathbb R_{{\ge {t_0}}}$.
\end{Theorem}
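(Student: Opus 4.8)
The plan is to bootstrap the finite-interval averaging estimate of \emph{Lemma~\ref{Le:Still}} to the whole half-line $\mathbb R_{\ge t_0}$ by exploiting the two ingredients hidden in the AGTIS hypothesis---attractivity and stability of the critical point. Throughout I would pass to the slow time $s=\varepsilon\left(t-t_0\right)$, under which the averaged dynamics read $\mathrm d\bm y/\mathrm ds\in\bar{\bm X}\left(\bm y\right)$ and the convergence modulus toward $\bm y_{sc}$ becomes independent of $\varepsilon$; \emph{Lemma~\ref{Le:Still}} then certifies an $\mathcal O\!\left(\varepsilon\right)$ Hausdorff gap on every \emph{fixed} slow-time window, which is the only resolution the finite-interval theory supplies.

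First I would reduce the AGTIS hypothesis to the set asymptotic convergence of \emph{Definition~\ref{def:Setasymp}}. Because $\bm y_{sc}$ is a critical point, its Rf-solution is the constant section $\left\{\bm y_{sc}\right\}$, and any Lipschitz time-translation $T\left(t\right)$ appearing in \emph{Definition~\ref{def:Asymptotic Gstability}} leaves this section invariant; hence the graph-based AGTIS collapses, for this particular nominal solution, to ordinary stability plus attractivity of the averaged Rf-solution $\mathcal R^*_t\left(\varepsilon\bar{\bm X},\bm x_0\right)$ with respect to $\left\{\bm y_{sc}\right\}$. This trivialisation of the time-reparametrisation is exactly the simplification that a single point affords over a genuinely time-parametrised nominal pattern.

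Second, for the transient I would fix the stability radius $\delta$ supplied by \emph{Definition~\ref{def:Setasymp}} for a chosen tolerance $\rho$, and use attractivity to select a fixed slow-time horizon $S_1$ (independent of $\varepsilon$) at which $d_H\!\left(\mathcal R^*,\left\{\bm y_{sc}\right\}\right)\le\delta/2$. Applying \emph{Lemma~\ref{Le:Still}} on $\left[t_0,t_0+S_1/\varepsilon\right]$ yields $d_H\le c\!\left(S_1\right)\varepsilon$ there and, in particular, places the original reachable section at $s=S_1$ within $\delta/2+c\!\left(S_1\right)\varepsilon\le\delta$ of $\bm y_{sc}$ once $\varepsilon$ is small. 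For the tail $s\ge S_1$ I would cut $\left[S_1,\infty\right)$ into fixed-length slow-time windows and, on each, re-invoke \emph{Lemma~\ref{Le:Still}} with the current original section as the fresh initial set, comparing---through the triangle inequality for $d_H$ and the continuous dependence on initial data from \emph{Lemma~\ref{Le:Uniq}}---the restarted averaged Rf-solution to the global one $\mathcal R^*_t\left(\varepsilon\bar{\bm X},\bm x_0\right)$. On each window the original tracks the restarted average within $c\!\left(L'\right)\varepsilon$, while both the restarted and the global average remain within $\rho$ of $\bm y_{sc}$ by stability and approach it by attractivity, so their mutual distance is governed by the decaying convergence modulus; summing the two contributions and absorbing every constant into a single $c_{gc}$ delivers \eqref{eq:Th2} uniformly in $t$.

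The hard part will be the tail, namely ruling out accumulation of the per-window $\mathcal O\!\left(\varepsilon\right)$ errors over infinitely many windows. This hinges entirely on the stability half of the hypothesis: near $\bm y_{sc}$ the restarted averaged solutions must neither leave the $\rho$-neighbourhood nor spread apart faster than they contract toward the point, so that the comparison with the global averaged solution stays uniformly $\mathcal O\!\left(\varepsilon\right)$ rather than degrading window by window. Establishing this confinement and non-spreading from only the $\delta$-$\varepsilon$ stability modulus of \emph{Definition~\ref{def:Setasymp}} together with \emph{Lemma~\ref{Le:Uniq}}---without an explicit Finsler--Lyapunov certificate or an exponential rate---is the genuinely delicate step, and it is precisely the triviality of the time-translation at a critical point that keeps it within reach.
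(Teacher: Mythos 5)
Your skeleton coincides with the paper's own proof: reduce AGTIS at a critical point to plain stability-plus-attractivity of the averaged flow (the paper does this by noting $t+T(t)$ is strictly increasing), partition $\mathbb R_{\ge t_0}$ into slow-time windows $T_k = t_0 + kL/\varepsilon$, and on each window split the error by the triangle inequality into (a) the original system versus the averaged system restarted at the original's cross-section at $T_k$, which Lemma~\ref{Le:Still} together with the union property \eqref{eq:unionA} bounds by $c_{ak}\varepsilon$, and (b) the restarted average versus the global average. Up to this point your plan and the paper agree.

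The gap is in the tail, exactly where you flag it --- and flagging it is not closing it. Each window injects a fresh $\mathcal O(\varepsilon)$ averaging error, so a bound uniform in $t$ requires the carried-over gap $d_H(T_k)$ to be damped at a definite geometric rate from one window to the next; otherwise the per-window errors accumulate over infinitely many windows. Your proposal offers two tools for this: the qualitative ``decaying convergence modulus'' and the continuous dependence of Lemma~\ref{Le:Uniq}. Neither suffices. The bare $\delta$-$\varepsilon$ stability modulus of Definition~\ref{def:Setasymp} carries no rate, so it cannot convert ``both solutions approach $\bm y_{sc}$'' into a quantified per-window contraction; and the continuity estimate of Lemma~\ref{Le:Uniq} is Gr\"onwall-type, so propagating a gap across a window of length $L/\varepsilon$ costs a factor of order $e^{(\lambda+\lambda_\omega)L/\varepsilon}$, which explodes precisely in the limit $\varepsilon \to 0$ that you need. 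The paper supplies the missing mechanism: because the restarted and the global averaged solutions converge to the same fixed point $\bm y_{sc}$, and the window length $L/\varepsilon$ grows as $\varepsilon$ shrinks, one can choose $\varepsilon_0$ small enough that every window at least halves the inherited gap, i.e. inequality \eqref{eq:Simplebound},
\begin{equation}
d_{H2}\left( T_{k+1} \right) \le \tfrac{1}{2} d_H\left( T_k \right) + c_{bk}\varepsilon ,\nonumber
\end{equation}
which combined with the per-window averaging error gives the recursion \eqref{eq:Totalbound}; its unrolled form \eqref{eq:Totalbound2} is a geometric series that remains uniformly of order $\varepsilon$ and yields the constant $c_{gc}$ in \eqref{eq:Th2}. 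Without this half-contraction recursion (or an equivalent quantified damping of the inherited gap), your argument does not establish the theorem.
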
}

\begin{proof} 
Since the sum of the original time and the time translational function $t \!+\! T\left( t \right)$ in \emph{Definition \ref{def:GStability} and Definition  \ref{def:Asymptotic Gstability} } is strictly monotonically increasing, it is straightforward to see that, 
the geometrically averaged system \eqref{eq:avProbForm} also asymptotically converges to the critical point $\bm y_{sc} \in D \subset {\mathbb R^n}$. 

Define ${T_k} = t_0+{k {{L}}  \mathord{\left/
 {\vphantom {{kL_k} \varepsilon }} \right.
 \kern-\nulldelimiterspace} \varepsilon },k \in {\mathbb Z_+ }$, with $L \in \mathbb R _+$ being a positive constant, and for a specific instant $t$, the corresponding distance in \eqref{eq:Th2} is abbreviated as $d_H(t)$.  According to \emph{Lemma~\ref{Le:Still}}, the inequality \eqref{eq:Th1} still holds for  $\forall t \in \left[ {t_0,{t_0+L \mathord{\left/
 {\vphantom {L \varepsilon }} \right.
 \kern-\nulldelimiterspace} \varepsilon }} \right],$ \emph{w.r.t.} the Rf-solution.
 Therefore, when $k=1$, it is straightforward to see that $d_H(t)<c \varepsilon, \forall t \in (T_0,T_1] $, where $c$ is a positive constant.
 Suppose that  $d_H(T_k)<c_k \varepsilon$, then let us consider the Hausdorff metric between ${{\rm{cl}} \mathcal R_{{T_{k + 1}}}^*\left( {\varepsilon \bm X,\Omega ,{\bm x_0}} \right)}$ and $ {\mathcal R_{{{L\! \mathord{\left/\! 
 {\vphantom {L\!  \varepsilon }} \right.
 \kern-\nulldelimiterspace} \varepsilon }}}^*\left( \! {\varepsilon \bar {\bm X},\! {\rm{cl}}\mathcal R_{{T_k}}^*\left( {\varepsilon   {\bm X},\! \Omega,\! {\bm x_0}} \! \right)} \right)}$, which is denoted as $d_{H1}(T_{k+1})$. By the definition of the time cross-section, it can be drawn that
\begin{align}
{{\rm{cl}} \mathcal R_{{\!T_{k \!+\! 1}}}^*\!\left( \!{\varepsilon\! \bm X,\!\Omega,\!{\bm x_0}} \!\right)}\!  =\! {\rm{cl}}\mathcal R_{{L\! \mathord{\left/
 {\vphantom {L \varepsilon }} \right.
 \kern-\nulldelimiterspace} \varepsilon }}^*\!\left( {\varepsilon \bm X,\!\Omega ,\!{\rm{cl}}\mathcal R_{{T_k}}^*\!\left(\! {\varepsilon \bm X,\Omega ,{\bm x_0}} \!\right)} \right).\nonumber
\end{align}
Moreover, we also have
\begin{align}
\label{eq:unionA}
 {\rm{cl}}\mathcal R_{{L \mathord{\left/
 {\vphantom {L \varepsilon }} \right.
 \kern-\nulldelimiterspace} \varepsilon }}^*\left(\! {\varepsilon \bm X,\!\Omega,\!{\rm{cl}}\mathcal R_k^*} \!\right) \!= \!\bigcup\limits_{{\bm x_t} \!\in {\rm{cl}}\!\mathcal R_k^*} {{\rm{cl}} \mathcal R_{{L \mathord{\left/
 {\vphantom {L_{k+1} \varepsilon }} \right.
 \kern-\nulldelimiterspace} \varepsilon }}^*\left(\! {\varepsilon \bm X,\!\Omega,\!{\bm x_t}} \!\right)},
\end{align}
meanwhile, the similar relation also exist regarding to the set $ {\mathcal R_{{{L \mathord{\left/
 {\vphantom {L \varepsilon }} \right.
 \kern-\nulldelimiterspace} \varepsilon }} }^*\left( {\varepsilon \bar {\bm X},{\rm{cl}}\mathcal R_{{T_k}}^*\left( {\varepsilon   {\bm X},\Omega ,{\bm x_0}} \right)} \right)}$. And, after checking the solution existence according to \emph{Lemma~\ref{Le:Exists}}, it follows that, for $\forall {{\bm x_t} \in {{\rm{cl}} \mathcal R_{{T_{k}}}^*\left( {\varepsilon \bm X,\Omega ,{\bm x_0}} \right)}}$,
\begin{align}
\label{eq:pointDis}
{d_H}\left( {{\rm{cl}}\mathcal R_{{L \mathord{\left/
 {\vphantom {L \varepsilon }} \right.
 \kern-\nulldelimiterspace} \varepsilon }}^*\!\left( {\varepsilon \bm X,\Omega ,{\bm x_t}} \right)\!,}  {\mathcal R^*_{{L \mathord{\left/
 {\vphantom {L \varepsilon }} \right.
 \kern-\nulldelimiterspace} \varepsilon }\! }\!\left( {\varepsilon \bar {\bm X},{\bm x_t}} \right)} \right)\! \le\! c_{ak}\varepsilon,
  \end{align}
 where $c_{ak}$ is a positive constant.
 Therefore, it can be inferred that $d_{H1}(T_{k+1})\le c_{ak}\varepsilon$.

 As for the averaged system, let us further consider the Hausdorff metric between $ {\mathcal R_{{{L \mathord{\left/
 {\vphantom {L \varepsilon }} \right.
 \kern-\nulldelimiterspace} \varepsilon }} }^*\left( {\varepsilon \bar {\bm X},{\rm{cl}}\mathcal R_{{T_k}}^*\left( {\varepsilon   {\bm X},\Omega ,{\bm x_0}} \right)} \right)}$ and $\mathcal R^*_{{L \mathord{\left/
 {\vphantom {L \varepsilon }} \right.
 \kern-\nulldelimiterspace} \varepsilon }  }\left( \! {\varepsilon \bar {\bm X} ,\mathcal R^*_{ T_k }\left( \! {\varepsilon \bar {\bm X} ,{\bm x_0}}\!  \right)}\!  \right)$, which is correspondingly denoted as $d_{H2}(T_{k+1})$. 
Because $\bm y_{sc}$ is fixed, the Hausdorff metric ${d_H}\!\left(\! {{\rm{gph}}\mathcal S_{t \!+ \!\eta(t) \mathbb B}^ * \left( {\varepsilon \bar  {\bm X},\!{R_1}} \!\right),{\rm{gph}}\mathcal S_{t \!+ \!T\left( t \right) \!+ \!\eta(t)\mathbb B}^ * \left(\! {\varepsilon \bar  {\bm X},\!R_2} \!\right)} \!\right)$ converges asymptotically to $0$ as $t \!\to + \!\infty$, where the initial sets ${R_1}:={\rm{cl}}\mathcal R_{{T_k}}^*\left( {\varepsilon   {\bm X},\Omega ,{\bm x_0}} \right)$ and ${R_2}:=\mathcal R^*_{ T_k }\left( \! {\varepsilon \bar {\bm X} ,{\bm x_0}}\right)$. 
With sufficiently small $\varepsilon_0$, due to the continuity of $\bar {\bm X}$, there exists
\begin{align}
\label{eq:Simplebound}
{d_{H2}}\left( {{T_{k + 1}}} \right) \le \frac{1}{2}{d_H}\left( {{T_k}} \right) + {c_{bk}}\varepsilon ,
\end{align}
for $\forall k \in {\mathbb Z_+ }$, where ${c_{bk}}$ is a positive constant. And there exists ${d_H}\left( {{T_k}} \right) \le {d_{H1}}\left( {{T_k}} \right) + {d_{H2}}\left( {{T_k}} \right), k \in {\mathbb Z_+ }$. Therefore, it can be obtained that
\begin{align}
\label{eq:Totalbound}
{d_H}\left( {{T_{k + 1}}} \right) \le \frac{1}{2}{d_H}\left( {{T_{k }}} \right) + \left( {{c_{ak}} + {c_{bk}}} \right)\varepsilon.
\end{align}
Obviously, from this inequality, we can infer that
\begin{equation}
\label{eq:Totalbound2}
{d_H}\left( {{T_{k + 1}}} \right) \le \left( {\frac{{{2^k} - 1}}{{{2^{k - 1}}}}\left( {{c_{ak}} + {c_{bk}}} \right) + \frac{1}{{{2^k}}}c} \right)\varepsilon.
\end{equation}
Since the coefficient of the $\varepsilon$ in \eqref{eq:Totalbound2} can be set sufficiently small, we can then choose a proper value for $c_{gc}$, \emph{s.t.} 
\begin{equation}
\label{eq:ChooseCgc}
c_{gc}\ge \left( {\frac{{{2^k} - 1}}{{{2^{k - 1}}}}\left( {{c_{ak}} + {c_{bk}}} \right) + \frac{1}{{{2^k}}}c} \right),\nonumber
\end{equation}
for $\forall k \in {\mathbb Z_+ }$. And because all the properties related to \eqref{eq:pointDis}-\eqref{eq:Totalbound2}, do not only hold on any specific instant $T_{k}$, but also hold on $\left[ {t_0,+ \infty} \right)$, which completes the proof of \emph{Theorem~\ref{th:Theorem2}}.
\end{proof}

It is noticeable that there exists a dual relationship between the solutions of the original inclusion and the ones of geometrically averaged system, which is obviously reflected in the conclusions of \emph{Theorem~\ref{th:Theorem1}} and \emph{Theorem~\ref{th:Theorem2}}. Colloquially, the approximation is reciprocal, that is, the original orbits are also approximations of the geometric averaged ones, hence, we give the following corollary.

\begin{Corollary}\label{co:Corollary1}With \emph{Condition A} satisfied, if the original system \eqref{eq:ProbForm} is AGTIS \emph{w.r.t.} 
a critical point $\bm y_{sc} \in D \subset {\mathbb R^n}$, then there exists ${\varepsilon _0}>0$, \emph{s.t.} for $\forall \varepsilon  \in \left( {0,{\varepsilon _0}} \right],$ $\forall \bm x_0 \in D$, and $\forall t \in \mathbb R_{{\ge {t_0}}}$, the inequality \eqref{eq:Th2} still holds.
\end{Corollary}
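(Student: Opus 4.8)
The plan is to mirror the proof of \emph{Theorem~\ref{th:Theorem2}}, exploiting the reciprocity noted just above the corollary: here the anchoring convergence is supplied by the original inclusion \eqref{eq:ProbForm} rather than by the averaged one \eqref{eq:avProbForm}, so I would interchange the roles of the two systems in the telescoping argument while keeping the skeleton intact. First I would record the elementary observation already used in \emph{Theorem~\ref{th:Theorem2}}: because the target $\bm y_{sc}$ is a critical point and hence time-invariant, the Lipschitz time-translation $T(t)$ appearing in \emph{Definition~\ref{def:Asymptotic Gstability}} is immaterial, so AGTIS of \eqref{eq:ProbForm} w.r.t. $\bm y_{sc}$ collapses to ordinary set asymptotic convergence of $\mathrm{cl}\mathcal R^*_t(\varepsilon\bm X,\Omega,\cdot)$ to $\bm y_{sc}$ in the Hausdorff metric, uniformly over compact initial sections in $D$. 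Under \emph{Condition A}, \emph{Lemma~\ref{Le:Exists}} and \emph{Lemma~\ref{Le:Uniq}} guarantee that all Rf-solutions involved exist, are forward complete, unique, and depend continuously on their initial section, which legitimizes the semigroup (cross-section concatenation) manipulations and the union formula of the type \eqref{eq:unionA} used throughout.

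Next I would partition $\mathbb R_{\ge t_0}$ by $T_k = t_0 + kL/\varepsilon$, $k\in\mathbb Z_+$, and abbreviate the distance in \eqref{eq:Th2} at a given instant as $d_H(t)$. The base case on $(T_0,T_1]$ follows verbatim from \emph{Lemma~\ref{Le:Still}}, giving $d_H(t)\le c\varepsilon$ there. For the inductive step I would decompose $d_H(T_{k+1})$ by inserting the intermediate set obtained from evolving the \emph{original} inclusion over one window $[T_k,T_{k+1}]$ starting from the \emph{averaged} cross-section $\mathcal R^*_{T_k}(\varepsilon\bar{\bm X},\bm x_0)$. The triangle inequality for $d_H$ then splits $d_H(T_{k+1})$ into two pieces: a term comparing the two original evolutions issued from the distinct initial sections $\mathrm{cl}\mathcal R^*_{T_k}(\varepsilon\bm X,\Omega,\bm x_0)$ and $\mathcal R^*_{T_k}(\varepsilon\bar{\bm X},\bm x_0)$, and a term comparing the original and averaged evolutions issued from the common section $\mathcal R^*_{T_k}(\varepsilon\bar{\bm X},\bm x_0)$.

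The second piece is precisely a finite-interval averaging error over a single window of length $L/\varepsilon$, hence bounded by $c_{ak}\varepsilon$ via \emph{Lemma~\ref{Le:Still}} (in its dual, original-approximates-averaged direction, already established as \eqref{eq:Th1bbb} in \emph{Theorem~\ref{th:Theorem1}}). The first piece is where AGTIS of the original system enters: since both original evolutions are driven toward the fixed $\bm y_{sc}$, evolving over the long window contracts their Hausdorff separation, and by choosing $\varepsilon_0$ small enough (equivalently $L/\varepsilon$ large enough) I would force this contraction factor to be at most $1/2$, yielding a bound of the form $\tfrac12 d_H(T_k)+c_{bk}\varepsilon$ as in \eqref{eq:Simplebound}. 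Summing the two pieces reproduces the recursion \eqref{eq:Totalbound}, $d_H(T_{k+1})\le \tfrac12 d_H(T_k)+(c_{ak}+c_{bk})\varepsilon$, whose solution is uniformly bounded in $k$; choosing $c_{gc}$ to dominate the resulting geometric-series coefficient for all $k$, and noting that the estimates hold not only at the nodes $T_k$ but on each intervening window, closes the argument and yields \eqref{eq:Th2} for all $t\in\mathbb R_{\ge t_0}$.

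I expect the main obstacle to be making the contraction in the first piece rigorous and \emph{uniform in $k$}. Unlike \emph{Theorem~\ref{th:Theorem2}}, where the contracting flow is the autonomous averaged field $\bar{\bm X}$, here the contracting object is the genuinely set-valued, phase-coupled original inclusion, whose cross-sections carry the rotating angle $\phi$. I must verify that the AGTIS-induced convergence of the $\bm x$-cross-section to $\bm y_{sc}$ is strong enough to deliver a fixed per-window ratio $\le 1/2$ over \emph{every} window simultaneously, i.e. that the possibly nonuniform rate supplied by AGTIS can be converted, through a single choice of $L$ and $\varepsilon_0$, into a uniform contraction factor, and that the time-translation genuinely drops out at the fixed point for the set-valued flow rather than merely for individual selections.
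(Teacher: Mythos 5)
Your proposal is correct and follows essentially the same route as the paper: the paper gives no separate proof for this corollary, justifying it precisely by the reciprocity you invoke, i.e.\ by repeating the telescoping argument of \emph{Theorem~\ref{th:Theorem2}} with the roles of the original inclusion \eqref{eq:ProbForm} and the averaged system \eqref{eq:avProbForm} interchanged, so that the $\tfrac12$-contraction over each window of length $L/\varepsilon$ is supplied by AGTIS of the original system while the per-window $c_{ak}\varepsilon$ term comes from the dual direction \eqref{eq:Th1bbb} of the finite-interval averaging bound. The uniformity-in-$k$ and phase-coupling concerns you flag at the end are genuine but are equally present (and left implicit) in the paper's own proof of \emph{Theorem~\ref{th:Theorem2}}, so they do not mark a deviation from the paper's argument.
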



It is noteworthy that, if all the involved solutions are absolutely continuous as required in \emph{Definition~\ref{def:Rf-solution}},
we can obtain similar results in the sense of pointwise convergence, since the solutions contained in a specific Rf-solution are equicontinuous.
However, this paper focuses on retaining the geometric pattern rather than tracking a specific trajectory.
Moreover, according to the work in \cite{Graph}, the graphical convergence also helps in extending the results for hybrid systems.


\Revised{\begin{Definition}[Strongly \& weakly forward invariance]
The bounded set $\mathcal M \subset D$ is called strongly forward invariant \emph{w.r.t.} the Rf-solutions ${\mathcal S}^{*}_{\mathbb R_{{\ge {t_0}}}}(\bm X,\cdot)$, if for every compact nonempty $R_0 \subset \mathcal M$, the time cross-section $\mathcal R_t^ * \left( {\bm X,{R_0}} \right) \subset \mathcal M$ for $\forall t \in \mathbb R_{\ge t_0}$. And $\mathcal M \subset D$ is called weakly forward invariant \emph{w.r.t.} the Rf-solutions ${\mathcal S}^{*}_{\mathbb R_{{\ge {t_0}}}}(\bm X,\cdot)$, if for every compact nonempty $R_0 \subset \mathcal M$, $\mathcal R_t^ * \left( {\bm X,{R_0}} \right) \cap \mathcal M \ne \emptyset$ for $\forall t \in \mathbb R_{\ge t_0}$.
\end{Definition}}


\Revised{\begin{Theorem}[Set convergence]\label{th:Theorem3}With \emph{Condition A} satisfied, if the geometrically averaged system \eqref{eq:avProbForm} is AGTIS \emph{w.r.t.}
the uniformly bounded Rf-solution $\mathcal S^*_{\mathbb R_{{\ge {t_0}}}}\left( \! {\varepsilon \bar {\bm X} ,R_0^s}\!  \right)$, whose cross-section asymptotically converges to the bounded strongly forward invariant set $\mathcal M_{\bar X}\subset D$, 
%
then there exists ${\varepsilon _0}>0$, \emph{s.t.} for $\forall \varepsilon  \in \left( {0,{\varepsilon _0}} \right]$, $\forall \bm x_0 \in D$, and $\forall t \in \mathbb R_{{\ge {t_0}}},$ 
\begin{align}
\label{eq:Th3}
{d_H}\!\left( {{\rm{cl}}\mathcal R^*_t\left( {\varepsilon \bm X,\Omega ,\bm{x}_0} \right),\mathcal R^*_{  t  }\left( \! {\varepsilon \bar {\bm X},\bm{x}_0}\!  \right)}  \right)\! \le \!  c_{gs},
\end{align}
where $c_{gs}$ is a positive constant.
\end{Theorem}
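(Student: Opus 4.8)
The plan is to replicate the block-wise recursion developed in the proof of \emph{Theorem~\ref{th:Theorem2}}, adapting it to the case where the attracting object $\mathcal M_{\bar X}$ is a set of positive diameter rather than a single critical point. First I would partition the half-line $\mathbb R_{\ge t_0}$ by the instants $T_k = t_0 + kL/\varepsilon$, $k \in \mathbb Z_+$, so that each block $[T_k,T_{k+1}]$ has length $L/\varepsilon$, on which the finite-interval estimate of \emph{Lemma~\ref{Le:Still}} applies to the Rf-solutions. On the first block the desired bound is immediate from \emph{Lemma~\ref{Le:Still}}, so the substance of the argument is to show that the bound does not deteriorate as $k \to \infty$.

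For the inductive step I would, exactly as in \emph{Theorem~\ref{th:Theorem2}}, insert the hybrid object obtained by flowing the original system up to $T_k$ and then the averaged field $\varepsilon\bar{\bm X}$ over the last block, and split the total distance $d_H(T_{k+1})$ by the triangle inequality into terms $d_{H1}(T_{k+1})$ and $d_{H2}(T_{k+1})$. The term $d_{H1}(T_{k+1})$ — the discrepancy between the original Rf-solution and this hybrid object over a single block — is again controlled by $c_{ak}\varepsilon$ through \emph{Lemma~\ref{Le:Still}}, the union (semigroup) representation of the Rf-solution cross-section, and the existence and uniqueness guaranteed by \emph{Lemma~\ref{Le:Exists}} and \emph{Lemma~\ref{Le:Uniq}}. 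The term $d_{H2}(T_{k+1})$ compares two averaged-system evolutions over the last block issuing from the nearby initial sections $R_1 := \mathrm{cl}\mathcal R^*_{T_k}(\varepsilon\bm X,\Omega,\bm x_0)$ and $R_2 := \mathcal R^*_{T_k}(\varepsilon\bar{\bm X},\bm x_0)$; here I would invoke AGTIS with respect to the nominal Rf-solution converging to $\mathcal M_{\bar X}$, together with the strong forward invariance of $\mathcal M_{\bar X}$, to obtain a contraction of the form $d_{H2}(T_{k+1}) \le \tfrac12 d_H(T_k) + c_{bk}$.

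The decisive structural difference from \emph{Theorem~\ref{th:Theorem2}} surfaces precisely in this last estimate. Because $\mathcal M_{\bar X}$ is now a set rather than a point, the time-translation function supplied by \emph{Definition~\ref{def:Asymptotic Gstability}} can align the two graphs only up to a residual phase mismatch along $\mathcal M_{\bar X}$; evaluating the cross-sections at a common instant $t$ therefore leaves a gap governed by the spread of $\mathcal M_{\bar X}$ under a phase shift, which is a constant rather than a quantity of order $\varepsilon$. Consequently $c_{bk}$ is $O(1)$, and the recursion $d_H(T_{k+1}) \le \tfrac12 d_H(T_k) + c_{ak}\varepsilon + c_{bk}$ telescopes to a uniform constant bound $c_{gs}$ (of the order $2c_{bk}$) rather than to an $O(\varepsilon)$ bound. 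I would finally promote the estimate from the grid $\{T_k\}$ to all $t \in \mathbb R_{\ge t_0}$ by applying \emph{Lemma~\ref{Le:Still}} once more on the partial block containing $t$, and close the argument with the covering-space surjection observation from \emph{Theorem~\ref{th:Theorem1}} ensuring that every Rf-solution is accounted for.

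The main obstacle will be making the constant residual in $d_{H2}$ precise: one must convert the graphical, time-translated convergence furnished by AGTIS into a bound on fixed-time cross-sections, and quantify the unavoidable phase-mismatch contribution in terms of the geometry of $\mathcal M_{\bar X}$ (its diameter, or a modulus of continuity along the flow). A secondary difficulty is verifying that the original Rf-solution, driven block-by-block $O(\varepsilon)$-close to the averaged one while the latter is attracted into $\mathcal M_{\bar X}$, never leaves $D$, so that \emph{Condition A} and forward completeness remain in force throughout the induction.
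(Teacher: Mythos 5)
You correctly identify the structural feature that distinguishes the set-valued case: the translation freedom in AGTIS leaves an $O(1)$ phase-mismatch residual along $\mathcal M_{\bar X}$, so the conclusion can only be a constant bound $c_{gs}$ rather than an $O(\varepsilon)$ one. Your block partition $T_k = t_0 + kL/\varepsilon$ with \emph{Lemma~\ref{Le:Still}} applied per block also matches the paper's opening move. However, your argument hinges on the per-block inequality $d_{H2}(T_{k+1}) \le \frac{1}{2} d_H(T_k) + c_{bk}$ with a uniform constant $c_{bk}$, and this is a genuine gap rather than a detail to be filled in later: AGTIS (\emph{Definition~\ref{def:Asymptotic Gstability}}) provides only asymptotic, translation-modulated graphical convergence of each individual Rf-solution to the nominal one, with no uniform rate over a block of length $L/\varepsilon$, so nothing in the hypotheses yields a factor-$\frac{1}{2}$ contraction of fixed-time cross-section distances for initial sections still far from $\mathcal M_{\bar X}$. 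You flag exactly this conversion as ``the main obstacle,'' but it is precisely the step at which the recursion of \emph{Theorem~\ref{th:Theorem2}} breaks down and a different idea is required. Note moreover that once $c_{bk}$ is $O(1)$, the $\frac{1}{2} d_H(T_k)$ term buys nothing: what is actually needed is a direct uniform bound on $d_{H2}$, i.e.\ a proof that the relevant averaged evolutions remain forever in a fixed bounded neighborhood of $\mathcal M_{\bar X}$ --- which is not a recursion on distances at all.

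The paper supplies exactly that missing mechanism and dispenses with the recursion entirely. It fixes constants $c_{g1}, c_{g2}$ and the neighborhood $\mathcal M_{g1} = \mathcal M_{\bar X} + (c_{g1}+c_{g2})\mathbb B \subset D$; extracts finite entry times $L_g$ (for the nominal Rf-solution to reach $\mathcal M_{\bar X}+c_{g1}\mathbb B$) and $L_a$ (for the averaged solution from $\bm x_0$ to get within $c_{g2}$ of the nominal one); chooses $\varepsilon_0$ so that one block $L/\varepsilon$ exceeds $\max(L_g, L_a)$; then \emph{Lemma~\ref{Le:Still}} places ${\rm{cl}}\mathcal R^*_t\left( {\varepsilon \bm X,\Omega ,\bm x_0} \right)$ inside $\mathcal M_{g1} + c\varepsilon_0\mathbb B$ at the end of the first block. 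From then on the argument is pure absorption and invariance: every averaged evolution launched from $\mathcal M_{g1}+c\varepsilon_0\mathbb B$ stays in a bounded set $\mathcal M_{g2}$ and re-enters $\mathcal M_{g1}$ within one block (strong forward invariance of $\mathcal M_{\bar X}$ plus the same convergence arguments), so the out-and-back process repeats indefinitely, both cross-sections remain in the bounded sets $\mathcal M_{g2}$ and $\mathcal M_{g1}$ containing $\mathcal M_{\bar X}$, and $c_{gs}$ is read off from their diameters. Incidentally, your ``secondary difficulty'' --- that the original Rf-solution might leave $D$ --- is already excluded by \emph{Condition A} (condition \romannumeral3), which assumes forward completeness in the compact set $D$ for both systems.
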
}

\begin{proof}
\Revised{ Choose constants $c_{g1},c_{g2}\in \mathbb R_+$,  $\mathcal M_{g1} : = \mathcal M_{\bar X}+ (c_{g1}+c_{g2}) \mathbb B \subset D$, $\bm x_0 \notin \mathcal M_{g1}$.
Due to ${d_H}\left( {{\mathcal M_{\bar X}}, \bm x_{g1}} \right) \ge c_{g1}+c_{g2}, \forall \bm x_{g1} \in \partial {\mathcal M_{g1}}$, and the absolute continuity of the Rf-solution $\mathcal S^*_{\mathbb R_{{\ge {t_0}}}}\left( \! {\varepsilon \bar {\bm X} ,\bm{x}_0}\!  \right)$, starting from $\partial {\mathcal M_{g1}}$, every Rf-solution with cross-section contained in $\mathcal R^*_t\left( \! {\varepsilon \bar {\bm X} ,\partial {\mathcal M_{g1}}}\!  \right)$ cannot reach ${\mathcal M_{\bar X}}$ in trivial time.
Therefore, due to the asymptotic convergence to the invariance set $\mathcal M_{\bar X}$, it takes less than infinite time, \emph{i.e.} finite time, say, $L_g$, for every selection from the Rf-solution $\mathcal S^*_{\mathbb R_{{\ge {t_0}}}}\left( \! {\varepsilon \bar {\bm X} ,R_0^s}\!  \right)$ to travel from $R_0^s$ to $\mathcal M_{\bar X}+ c_{g1}\mathbb B$. 
And it takes less than $L_a$ length of time for ${d_H}\left( {\mathcal R_t^ * \left( {\varepsilon \bar{\bm X},{\bm x_0}} \right),\mathcal R_t^ * \left( {\varepsilon \bar {\bm X},R_0^s} \right)} \right)$ to become no bigger than $c_{g2}$.
After $L/\varepsilon$ time evolution, larger than $\max \left(L_g, L_a\right)$ by choosing proper $\varepsilon_0$, 
${d_H}\!\left( {\rm{cl}}{\mathcal R^*_t\left( {\varepsilon \bm X,\Omega ,{\bm x_0}} \right),\mathcal R^*_{  t  }\left( \! {\varepsilon \bar {\bm X} ,{\bm x_0}}\!  \right)}  \right)\! \le \!  c\varepsilon_0,$ for $\forall t \in \left[ {t_0,{t_0+L \mathord{\left/
 {\vphantom {t_0+L /\varepsilon }} \right.
 \kern-\nulldelimiterspace} \varepsilon }} \right]$, with the help of \emph{Lemma~\ref{Le:Still}}. And ${\rm{cl}}\mathcal R^*_t\left( {\varepsilon \bm X,\Omega ,{\bm x_0}} \right) \subset \mathcal M_{g1} + c{\varepsilon _0}\mathbb B$, at $ t = t_0+{L \mathord{\left/
 {\vphantom {L /\varepsilon }} \right.
 \kern-\nulldelimiterspace} \varepsilon }$. This inequality does not necessarily hold when $t$ passes ${t_0+L \mathord{\left/
  {\vphantom {t_0+L /\varepsilon }} \right.
  \kern-\nulldelimiterspace} \varepsilon }$, although $\mathcal R^*_{  t  }\left( \! {\varepsilon \bar {\bm X} ,{\bm x_0}}\!  \right)\subset {\rm T}\mathcal M_{g1}$ for $\forall t \ge t_0+L/\varepsilon$.  }

\Revised{However, for every $\bm{x}_\ast$ chosen from the set $\mathcal M_{g1}\! +\! c{\varepsilon _0}\mathbb B$, such that, \emph{w.r.t.} $\mathcal S^*_{\mathbb R_{{\ge \!{t_0\!+\!L\!\mathord{\left/
  {\vphantom {t_0+L / \varepsilon }} \right.
  \kern-\nulldelimiterspace} \varepsilon }}}}\left( {\varepsilon \bm X,\Omega ,\bm{x}_\ast} \right)$, the corresponding Rf-solution $\mathcal S^*_{\mathbb R_{{\ge {t_0+L / \varepsilon}}}}\left( \! {\varepsilon \bar {\bm X} ,\bm{x}_\ast}\!  \right)$ can be selected and further investigated. 
There is a bounded compact set  $\mathcal M_{g2} \!\subset \!D$ which satisfies ${\rm{cl}}\mathcal R^*_{  t  }\left( \! {\varepsilon \bar {\bm X},\mathcal M_{g1} \!+ \!c{\varepsilon _0}\mathbb B}\!  \right) + \!c{\varepsilon _0}\mathbb B\!\subset \!\mathcal M_{g2}$ which contains ${\rm{cl}}\mathcal R^*_{  t  }\left( \! {\varepsilon \bar {\bm X},\bm{x}_\ast}\!  \right)$. 
Due to the asymptotic graphical convergence to $\mathcal S^*_{\mathbb R_{{\ge {t_0}}}}\left( \! {\varepsilon \bar {\bm X} ,R_0^s}\!  \right)$, the asymptotic convergence of $\mathcal S^*_{\mathbb R_{{\ge {t_0}}}}\left( \! {\varepsilon \bar {\bm X} ,R_0^s}\!  \right)$ and the absolute continuity of the Rf-solutions of $\varepsilon \bar {\bm X}$, the cross-section $\mathcal R^*_{  t  }\left( \! {\varepsilon \bar {\bm X},\bm{x}_\ast}\!  \right)$ converges to $\mathcal M_{g1}$ after a time interval of length $L \mathord{\left/
 {\vphantom {t_0+L \varepsilon }} \right.
 \kern-\nulldelimiterspace} \varepsilon$, for a sufficiently small ${\varepsilon _0}$, which further indicates ${\rm{cl}}\mathcal R^*_t\left( {\varepsilon \bm X,\Omega,{\bm x_\ast}} \right) \subset \mathcal M_{g1} + c{\varepsilon _0}\mathbb B$, for $\forall t \in \left[ {t_0,{t_0+L \mathord{\left/
 {\vphantom {t_0+L \varepsilon }} \right.
 \kern-\nulldelimiterspace} \varepsilon }} \right]$.
The preceding process from $\mathcal M_{g1} + c{\varepsilon _0}\mathbb B$ to $\mathcal M_{g2}$ and then back to $\mathcal M_{g1} + c{\varepsilon _0}\mathbb B$ can be indefinitely repeated.
Therefore, ${\rm{cl}}\mathcal R^*_t\left( {\varepsilon \bm X,\Omega,\bm{x}_0} \right) \subset \mathcal M_{g2}$ and $\mathcal R^*_{  t  }\left( \! {\varepsilon \bar {\bm X},\bm{x}_0}\!  \right) \subset \mathcal M_{g1}$ for $\forall t \in \mathbb R_{{\ge {t_0}}}$, with $\mathcal M_{g1}$ and $\mathcal M_{g2}$ both being bounded, and containing the same bounded set $\mathcal M_{\bar X}\subset D$, which verifies the inequality \eqref{eq:Th3}.}
\end{proof}

According to the similar rationale yielding \emph{Corollary~\ref{co:Corollary1}}, and based on \emph{Theorem~\ref{th:Theorem3}}, the following corollary is obtained.
\Revised{\begin{Corollary}\label{co:Corollary2}
With \emph{Condition A} satisfied, and the system \eqref{eq:ProbForm} is AGTIS \emph{w.r.t.} a specific Rf-solution $\mathcal S^*_{\mathbb R_{{\ge {t_0}}}}\left( \! {\varepsilon  {\bm X}, \Omega, \bm{x}_0}\! \right)$. Moreover, this Rf-solution asymptotically converges to the bounded strongly forward invariant set $\mathcal M_{X}\subset D$. Then $\exists {\varepsilon _0}>0$, \emph{s.t.} for $\forall \varepsilon  \in \left( {0,{\varepsilon _0}} \right],$ $\forall \bm x_0 \in D$, and $\forall t \in \mathbb R_{{\ge {t_0}}},$ the inequality \eqref{eq:Th3} still holds.
\end{Corollary}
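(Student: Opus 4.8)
The plan is to obtain \emph{Corollary~\ref{co:Corollary2}} as the exact dual of \emph{Theorem~\ref{th:Theorem3}}, interchanging the roles played by the original inclusion $\left(\varepsilon\bm X,\Omega\right)$ and its geometric average $\varepsilon\bar{\bm X}$. The justification for this interchange is the reciprocity of the finite-horizon averaging approximation: the Hausdorff bound furnished by \emph{Lemma~\ref{Le:Still}} (and, at the level of Carath\'eodory funnels, by \emph{Theorem~\ref{th:Theorem1}}) is symmetric, since $d_H$ is a genuine metric on the bounded compact sets of \emph{Remark~\ref{Re:Haus}}. Consequently, on any interval $\left[t_0,t_0+L/\varepsilon\right]$ the averaged Rf-solution cross-section approximates the original one to order $c\varepsilon$ exactly as well as the original approximates the averaged one, so every $O\left(\varepsilon\right)$ estimate used in the proof of \emph{Theorem~\ref{th:Theorem3}} remains available after the swap.

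First I would transcribe the neighbourhood construction of \emph{Theorem~\ref{th:Theorem3}} to the invariant set of the present hypothesis: fix $c_{g1},c_{g2}\in\mathbb R_+$, set $\mathcal M_{g1}:=\mathcal M_{X}+\left(c_{g1}+c_{g2}\right)\mathbb B\subset D$, and take $\bm x_0\notin\mathcal M_{g1}$. The AGTIS assumption on the original system, together with the asymptotic convergence of its nominal Rf-solution $\mathcal S^*_{\mathbb R_{\ge t_0}}\left(\varepsilon\bm X,\Omega,\bm x_0\right)$ to $\mathcal M_{X}$, then yields two finite times exactly as before: a length $L_g$ after which every selection of the original Rf-solution enters $\mathcal M_{X}+c_{g1}\mathbb B$, and a length $L_a$ after which its graphical distance to the nominal solution falls below $c_{g2}$.

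Next I would choose $L/\varepsilon>\max\left(L_g,L_a\right)$ and invoke the symmetric Hausdorff bound of \emph{Lemma~\ref{Le:Still}} to transfer this entry to the averaged cross-section: on $\left[t_0,t_0+L/\varepsilon\right]$ one obtains ${\rm cl}\,\mathcal R^*_t\left(\varepsilon\bm X,\Omega,\bm x_0\right)\subset\mathcal M_{g1}+c\varepsilon_0\mathbb B$ at the endpoint, whence $\mathcal R^*_t\left(\varepsilon\bar{\bm X},\bm x_0\right)$ also lands in a slightly inflated $\mathcal M_{g1}$. I would then reproduce the bootstrapping step verbatim: build a bounded compact $\mathcal M_{g2}\subset D$ containing the averaged cross-section, let the original solution re-enter $\mathcal M_{g1}$ over the next window of length $L/\varepsilon$, and iterate the cycle $\mathcal M_{g1}\to\mathcal M_{g2}\to\mathcal M_{g1}$ indefinitely. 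Boundedness of both $\mathcal M_{g1}$ and $\mathcal M_{g2}$ around the common set $\mathcal M_{X}$ then produces a uniform constant $c_{gs}$, which is precisely \eqref{eq:Th3}.

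The main obstacle I anticipate is that the duality is not perfectly symmetric at the level of the state spaces: the original inclusion carries the phase coordinate $\phi$ with its own set-valued dynamics $\Omega\left(\bm x\right)$, whereas the averaged flow slaves $\psi$ to $\Omega\left(\bm y\right)$. I therefore expect the delicate point to be verifying that the AGTIS hypothesis phrased for the original system — graph convergence in the $\left(t,\bm x\right)$-space modulated by the Lipschitz time-translation $T\left(t\right)$ of \emph{Definition~\ref{def:Asymptotic Gstability}} — still delivers the finite-time entry invoked above, and that the non-uniqueness of the original trajectories folded into the Rf-solution does not corrupt the iteration; the latter is controlled by the uniqueness and continuous dependence of \emph{Lemma~\ref{Le:Uniq}}. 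Tracking the constants across the infinitely repeated cycle so that $\varepsilon_0$ may be fixed uniformly and $c_{gs}$ stays finite is the remaining bookkeeping burden, but it is routine once the reciprocal form of \emph{Lemma~\ref{Le:Still}} is in hand.
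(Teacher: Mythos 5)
Your proposal is correct and follows essentially the same route as the paper: the paper disposes of \emph{Corollary~\ref{co:Corollary2}} in one line, citing ``the similar rationale yielding \emph{Corollary~\ref{co:Corollary1}}'' (the reciprocity of the averaging approximation) applied to \emph{Theorem~\ref{th:Theorem3}}, which is precisely your role-swap of $\left(\varepsilon\bm X,\Omega\right)$ and $\varepsilon\bar{\bm X}$ in the $\mathcal M_{g1}$--$\mathcal M_{g2}$ bootstrapping argument, justified by the symmetry of the Hausdorff bound in \emph{Lemma~\ref{Le:Still}}. Your added caution about the phase coordinate and non-uniqueness is sensible but does not change the substance; the paper leaves exactly those details implicit.
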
}

\begin{Remark}
Generally, neither the asymptotic nor the exponential stability is invariant under arbitrary time or phase reparameterization. However, these stabilities are preserved under coordinate transformation from time to phase, when the rotation speed is bounded and has a definite direction, keeping the forward unicity or achieving the so-called orientation preserving \cite{Pico-2013}.
\end{Remark}
\Revised{\section{Geometric Periodic Pattern}}

In this section, we search for the existence of periodic solutions, and the specific criteria which guarantee the geometrically convergences by exploiting contraction analyses.
Since we focus on the subtle periodic pattern rather than averaging behavior in this section, we abnegate the small constant $\varepsilon$, and the system \eqref{eq:ProbForm} can then be rewritten as
\begin{align}
\label{eq:ProbFormAuto}
\begin{array}{*{20}{c}}
   {\dot {\bm x} \in  \bm X_\Phi\left( {{ \phi, \bm x}} \right),}  \\
\end{array}{\rm{     }}\begin{array}{*{20}{c}}
   {{\bm x}\left( t_0 \right) = {{\bm x}_0},}  \\
\end{array}
\end{align}
where $\bm X_\Phi: \mathbb R_+ \times {\mathbb R^n} \to \rm{comp}\left( {{\mathbb R^n}} \right)$ is a convex, compact, upper semicontinuous in $\phi$, Lipschitz continuous in $\bm x$, set-valued map, and formulated as:
\begin{equation}
\label{operator}
{\bm X_\Phi }\left( {\phi ,\bm x} \right) = \bigcup\limits_{{1 \mathord{\left/
 {\vphantom {1 {{k_\omega }}}} \right.
 \kern-\nulldelimiterspace} {{k_\omega }}} \in \Omega \left( \bm x \right)} {{k_\omega }\bm X\left( {\phi ,\bm x} \right)},
 \nonumber
\end{equation}
where $k_\omega \in \mathbb R_+$ is bounded, with $D$ being its compact, path-connected domain as well, and the derivative $\dot \star = {{{\rm{d}}\star} \mathord{\left/
 {\vphantom {{{\rm{d}}\star} {{\rm{d}}\phi}}} \right.
 \kern-\nulldelimiterspace} {{\rm{d}}\phi}}$. The set-valued map $\bm X_\Phi$ preserves the compactness, boundedness, existence and uniqueness \emph{w.r.t.} its solution, which are all inherited from $\bm X$, and the evolution induced by $\bm X_\Phi$ is strongly forward invariant. 
Noteworthy is that the system in this setting is considered with the phase pre-synchronized, which indicates that the convergences and the contractions are considered along the phase changing, and the distances between trajectories or solutions are always compared between nearby phases or just with identical phases.

To facilitate further discussion, it is necessary to define the periodic Rf-solutions. For the original system \eqref{eq:ProbForm}, the time cross-section of the periodic Rf-solution ${\mathcal S}^{*}_{t}(\bm X,R_0)$ satisfies that ${\mathcal R}^{*}_{t}(\bm X,R_0) = {\mathcal R}^{*}_{t+\tau}(\bm X,R_0) \subset D$ for some $\tau \in \mathbb R_+$, with $R_0$ being a compact nonempty set in $D$. The periodic Rf-solution \emph{w.r.t.} the phase-based system \eqref{eq:ProbFormAuto} satisfies that ${\mathcal R}^{*}_{\phi}(\bm X,R_0) = {\mathcal R}^{*}_{\phi+2\pi}(\bm X,R_0) \subset D$. 

The following theorem intuitively uses the boundedness to investigate the periodic Rf-solution existence problem. 
\begin{Theorem}
\label{th:PeriodicRf} Consider the system defined in \eqref{eq:ProbFormAuto},
if $\phi_0 \in \mathcal S^1$, the compact nonempty set $R_0 \subset D$ satisfies that ${\mathcal R}^{*}_{\phi}( \bm X_\Phi,R_0) \subset {\mathcal R}^{*}_{\phi+\varphi}( \bm X_\Phi,R_0)$ for some $\varphi\in \mathbb R_+$, and $\bigcup_{\phi \ge {\phi_0}} {\mathcal R}^{*}_{\phi}( \bm X_\Phi,R_0)$ is closed and bounded, then there exists a compact set $R_\infty$, \emph{s.t.}  ${\mathcal R}^{*}_{\phi}( \bm X_\Phi,R_\infty) = {\mathcal R}^{*}_{\phi+\varphi}( \bm X_\Phi,R_\infty)$.
\end{Theorem}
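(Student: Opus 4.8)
The plan is to recast the claim as a fixed-point statement for a $\varphi$-advance map acting on the complete metric space $(\mathrm{comp}(D),d_H)$, and then run a monotone-convergence argument resting on the uniqueness and continuous dependence of Rf-solutions established in \emph{Lemma~\ref{Le:Uniq}} (inherited by $\bm X_\Phi$ from $\bm X$, as noted after \eqref{eq:ProbFormAuto}). First I would define the advance map $P:\mathrm{comp}(D)\to\mathrm{comp}(D)$ by $P(R):=\mathcal R^{*}_{\phi_0+\varphi}(\bm X_\Phi,R)$. Because $\bm X_\Phi$ is $2\pi$-periodic in $\phi$ and Rf-solutions are unique, the evolution over a $\varphi$-window starting from a section at $\phi_0$ reproduces itself after each compatible $\varphi$-advance, so $P$ enjoys the iteration property $P^{n}(R_0)=\mathcal R^{*}_{\phi_0+n\varphi}(\bm X_\Phi,R_0)=:R_n$; moreover continuous dependence on the initial section makes $P$ continuous with respect to $d_H$.

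Next I would exploit the nesting hypothesis. Evaluating $\mathcal R^{*}_{\phi}(\bm X_\Phi,R_0)\subset\mathcal R^{*}_{\phi+\varphi}(\bm X_\Phi,R_0)$ at $\phi=\phi_0+n\varphi$ yields $R_n\subset R_{n+1}$, so $(R_n)_{n\ge0}$ is a nondecreasing nested family of compact sets, all contained in the compact set $\bigcup_{\phi\ge\phi_0}\mathcal R^{*}_{\phi}(\bm X_\Phi,R_0)$ supplied by the boundedness and closedness hypothesis. I would then set $R_\infty:=\mathrm{cl}\bigcup_{n\ge0}R_n$, a compact subset of $D$, and verify $d_H(R_n,R_\infty)\to0$: since $R_n\subset R_\infty$ the outer term $\sup_{x\in R_n}\underline d(x,R_\infty)$ vanishes, and a finite-subcover argument on the compact set $R_\infty$ (covering it by finitely many balls of small radius centred at points of the dense union $\bigcup_n R_n$, each lying in some $R_{n_i}$, and taking $n\ge\max_i n_i$) controls the inner term $\sup_{y\in R_\infty}\underline d(y,R_n)$ uniformly.

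The fixed-point step then follows by interchanging the limit with $P$: continuity of $P$ gives $P(R_\infty)=\lim_n P(R_n)=\lim_n R_{n+1}=R_\infty$, that is, $\mathcal R^{*}_{\phi_0+\varphi}(\bm X_\Phi,R_\infty)=R_\infty=\mathcal R^{*}_{\phi_0}(\bm X_\Phi,R_\infty)$. Finally, to upgrade equality of the two sections at $\phi_0$ to the full periodicity $\mathcal R^{*}_{\phi}(\bm X_\Phi,R_\infty)=\mathcal R^{*}_{\phi+\varphi}(\bm X_\Phi,R_\infty)$ for every $\phi$, I would observe that both $\phi\mapsto\mathcal R^{*}_{\phi}(\bm X_\Phi,R_\infty)$ and the shifted map $\phi\mapsto\mathcal R^{*}_{\phi+\varphi}(\bm X_\Phi,R_\infty)$ satisfy the same integral funnel equation of \emph{Definition~\ref{def:Rf-solution}} (using the $2\pi$-periodicity of $\bm X_\Phi$ in $\phi$) and agree at $\phi_0$; uniqueness of Rf-solutions (\emph{Lemma~\ref{Le:Uniq}}) then forces them to coincide identically.

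The main obstacle I anticipate lies in the interplay of the two topological ingredients of the middle steps: proving the Hausdorff convergence $R_n\to R_\infty$ of the nested compacts, which genuinely needs the compactness of the bounding union rather than mere closedness of each $R_n$, and then legitimately commuting this limit through the advance map $P$, which rests entirely on the continuous dependence on initial sections from \emph{Lemma~\ref{Le:Uniq}}. A subtler point requiring care is the iteration identity for $P$: it is a genuine return map only when the $\varphi$-advance respects the $2\pi$-periodicity of $\bm X_\Phi$ in $\phi$, so I would make the compatibility of $\varphi$ with that period explicit before asserting $P^{n}(R_0)=R_n$.
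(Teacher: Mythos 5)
Your argument is sound in substance, but it is a genuinely different route from the paper's: the paper gives no direct proof at all, it simply remarks that the statement ``can be proved similarly as the proof of \emph{Theorem 5.2}, given in \cite{PanasyukDyn}, by applying Arzel\`a-Ascoli theorem,'' i.e.\ it outsources the work to a compactness argument on the equi-Lipschitz family of translated funnels $\phi \mapsto \mathcal R^{*}_{\phi+n\varphi}(\bm X_\Phi,R_0)$. Your monotone fixed-point argument replaces that compactness extraction entirely: because the sections $R_n$ are nested compacts inside a compact bounding set, Hausdorff convergence to $R_\infty=\mathrm{cl}\bigcup_n R_n$ is automatic (your finite-subcover step), so no Arzel\`a-Ascoli is needed; continuity of the advance map $P$ (continuous dependence from \emph{Lemma~\ref{Le:Uniq}}, inherited by $\bm X_\Phi$) then yields the fixed point, and uniqueness of Rf-solutions upgrades equality of the two sections at $\phi_0$ to periodicity of the whole funnel. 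What your version buys is self-containedness within the paper's own lemmas and a precise accounting of where each structural hypothesis enters; what the cited Panasyuk-style argument buys is that it works directly at the level of funnels as curves in $(\mathrm{comp}(D),d_H)$, which is the form in which Theorem~\ref{th:Finnal1} and Theorem~\ref{th:Finnal2} later consume this result.

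The caveat you flag at the end — compatibility of $\varphi$ with the $2\pi$-periodicity of $\bm X_\Phi$ — should be treated as a genuine restriction rather than a point to ``make explicit.'' The phase-based field in \eqref{eq:ProbFormAuto} is non-autonomous in $\phi$ (only $2\pi$-periodic), so the advance map over $[\phi_0+n\varphi,\phi_0+(n+1)\varphi]$ coincides with your $P$ only when $\varphi\in 2\pi\mathbb Z_+$; for incommensurate $\varphi$ both your iteration identity $P^{n}(R_0)=R_n$ and your final uniqueness step (which needs the shifted funnel to solve the funnel equation of the \emph{same} field) fail, and there is no obvious repair, since the shifted fields $\bm X_\Phi(\cdot+n\varphi,\cdot)$ do not converge. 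Strictly speaking, then, your proposal proves the theorem for $\varphi$ a positive multiple of $2\pi$. That is exactly the case the paper actually uses downstream ($\varphi=2\pi$, matching its definition of periodic Rf-solutions and of the Poincar\'e map $\mathcal P_{2\pi}^{\bm X_\Phi}$), and any proof of the literal statement with arbitrary $\varphi\in\mathbb R_+$ would face the same obstruction; but as written the theorem claims more than your argument delivers, so you should either restrict $\varphi$ in the statement or supply a separate argument for the incommensurate case.
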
 

This theorem can be proved similarly as the proof of \emph{Theorem 5.2}, given in \cite{PanasyukDyn}, by applying Arzel\`{a}-Ascoli theorem.

In \emph{Theorem \ref{th:PeriodicRf}}, the closeness is self-evident due to the definition of the Rf-solution,
while the boundedness can be endowed by constraining the system evolution direction within the Bouligand contingent cones \cite{Blanchini, Aubin} \emph{w.r.t.} some compact path-connected subset of $D$. For a compact level set $S \subset D$, with $\bm x \in D$, the Bouligand contingent cone can be formulated in the neighborhood of $S$ as
\begin{align}
\label{eq:TangentCone}
{\hat T_S}\left( \bm x \right) = \left\{ {\bm v \in \mathbb R^n \left| {\mathop {\lim \inf }\limits_{h \searrow {0}} \underline d{{\left( {\bm x + h\bm v,S} \right)} \mathord{\left/
 {\vphantom {{\left( {x + hv,S} \right)} h}} \right.
 \kern-\nulldelimiterspace} h} = 0} \right.} \right\},
\end{align}
which is nontrivial only when $\bm x$ locates on $\partial S$, since ${\hat T_S}\left( \bm x \right) = \mathbb R^n, \forall \bm x \in {\rm Int} S$ and ${\hat T_S}\left( \bm x \right) = \emptyset$ for $\forall \bm x \notin S$, with ${\rm Int} S$ being the interior of $S$. The cone contains all the vectors shifted to the origin, which, if emanating from $\bm x$, point to where can be contained in $S$ with arbitrarily small $h$.
Furthermore, it is also obviously feasible that the boundedness can also be verified by analyzing averaged systems as the scenario discussed in \emph{Theorem \ref{th:Theorem2}} and \emph{Theorem \ref{th:Theorem3}}.
Besides, considering the Rf-solution with a presupposed period of $2\pi$, we can define the Poincar\'e map $\mathcal P_{2\pi}^{ {\bm X_\Phi }}:\mathcal S^1 \times \rm{comp}\left( {{\mathbb R^n}} \right) \to \mathcal S^1 \times \rm{comp}\left( {{\mathbb R^n}} \right)$, which indicates that $\mathcal P_{2\pi}^{ {\bm X_\Phi }}(\!\phi, R_0\!)\!=\!( \phi + 2\pi, {\mathcal R}^{*}_{\phi+2\pi}(\! \bm X_\Phi,R_0\!)\!)$, $R_0 \!\subset\! D$.
The Poincar\'e map $\mathcal P_{2\pi}^{ {\bm X_\Phi }}$ is called contractive \emph{w.r.t.} single point, if for $\forall (\phi,\bm x_0), (\phi,\bm x_1)   \in Q$, $d_H({\mathcal R}^{*}_{\phi+2\pi}( \bm X_\Phi, \bm x_0),{\mathcal R}^{*}_{\phi+2\pi}( \bm X_\Phi, \bm x_1)) \le k d(\bm x_0,\bm x_1), k \in [0,1)$.
And according to \emph{Theorem~5} in \cite{NadlerMult}, since $(D,d)$ being a complete metric space, if $\mathcal P_{2\pi}^{ {\bm X_\Phi }}$ is contractive \emph{w.r.t.} single point,
then $\mathcal P_{2\pi}^{ {\bm X_\Phi }}$ has a fixed point $\bm x \in D$ as long as $D$ is closed and bounded, \emph{s.t.} $\bm x \in  {\mathcal R}^{*}_{2\pi}( \bm X_\Phi,\bm x))$, equivalently, the phase-based system admits a single-valued periodic solution. Furthermore, even if $\mathcal P_{2\pi}^{ {\bm X_\Phi }}$ is contractive \emph{w.r.t.} single point, based on the fact that the property \eqref{eq:unionA} holds in the phase-based system, and also based on \emph{Lemma~1} and \emph{Theorem~2} in \cite{NadlerMult}, the completeness of the metric space $({\rm{comp}}\left( D \right),d_H)$ can also lead to the existence of a fixed point $R_\infty \in {\rm{comp}}\left( D \right)$, \emph{s.t.} $R_\infty =  {\mathcal R}^{*}_{\phi_0+2\pi}( \bm X_\Phi,R_\infty)$ with the boundary condition $(\phi_0,R_\infty)$, and similarly ${\mathcal R}^{*}_{\phi}( \bm X_\Phi,R_\infty) = {\mathcal R}^{*}_{\phi+2\pi}( \bm X_\Phi,R_\infty)$, which indicates the existence of a $2\pi$-periodic Rf-solution.
In the phase coordinate $\phi$, the state space can be rendered as a forward geodesically complete manifold $\mathcal M_\Phi$, sharing the same topology with $D$, and additionally endowed with a differential structure.
And the system evolution characteristics usually vary in different fibers ${{\rm T}_{\bm x}} {{\mathcal M_\Phi }}$ of the tangent bundle ${{\rm T}} {{\mathcal M_\Phi }}$ over $\forall {\bm x} \in {{\mathcal M_\Phi }}$. In order to diversify the choice of the scalar $\phi$, the metric should not always obey quadratic restriction as the Riemannian metric does \cite{Chern}.
The Finsler structure is more generic to describe the distance between the periodic patterns with different specific forms by integral, and also suitable to describe their convergences through contraction analysis \cite{Lohmiller,Forni}.
The key observation of contraction analysis for periodic system is provided in \cite{SontagCS}. For single-valued continuous systems, asymptotically stable periodic orbits are assured to exist if the infinitesimally contracting vector field is periodic.
The subsequent analyses are to provide the set-valued version of this assertion regarding the periodic Rf-solutions.


\begin{Definition}[Incremental stability \cite{Forni}] Consider all the Carath\'eodory trajectories ${\bm r_\phi }\left( { {\bm X_\Phi},{\bm x_0}} \right)$, hereafter abbreviated as ${\bm r_\phi }\left( {{\bm x_0}} \right)$, contained in the Rf-solutions ${\mathcal S}^{*}_{\mathbb R_{{\ge {\phi_0}}}}(\bm X_\Phi,R_0)$ for all the compact nonempty set $R_0 \subset D$ and $\bm x_0 \in R_0$, 
where the angle $\phi$ is accumulated on $\mathbb R_{\ge \phi_0}$, and $D$ is strongly forward invariant. 
In addition, 
 $\phi_0$ is the start phase angle on $\mathcal S$, and $d:{\mathcal M_\Phi } \times {\mathcal M_\Phi } \to {\mathbb R_{ \ge 0}}$.
The phase-based system \eqref{eq:ProbFormAuto} evolving on the manifold $\mathcal M_\Phi$ is 

\romannumeral1) \emph{incrementally stable} ($\delta$-S) on $D$, if there exists a $\mathcal K$ function $\alpha$ \emph{s.t.} for $\forall {\bm x_1},{\bm x_2} \in D$, $\forall \phi \in {\mathbb R_{ \ge {\phi_0}}}$, $d\left( {{\bm r_\phi }\left( {{\bm x_1}} \right),{\bm r_\phi }\left( {{\bm x_2}} \right)} \right) \le \alpha \left( {d\left( {{\bm x_1},{\bm x_2}} \right)} \right)$;

\romannumeral2) \emph{incrementally asymptotically stable} ($\delta$-AS) on $D$, if it is incrementally stable, and for $\forall {\bm x_1},{\bm x_2} \in D$, $\forall \phi \in {\mathbb R_{ \ge {\phi_0}}}$, then $\mathop {\lim }_{\phi  \to \infty } d\left( {{\bm r_\phi }\left( {{\bm x_1}} \right),{\bm r_\phi }\left( {{\bm x_2}} \right)} \right) = 0$;

\romannumeral3) \emph{incrementally exponentially stable} ($\delta$-ES) on $D$, if $\exists k > 1,\lambda > 0$, for $\forall {\bm x_1},{\bm x_2} \in D$, $\forall \phi \in {\mathbb R_{ \ge {\phi_0}}}$, \emph{s.t.} $d\left( {{\bm r_\phi }\left( {{\bm x_1}} \right), {\bm r_\phi }\left( {{\bm x_2}} \right)} \right) \le k{e^{ - \lambda \left( {\phi  - {\phi _0}} \right)}}d\left( {{\bm x_1},{\bm x_2}} \right)$.
\end{Definition}

\begin{Remark}
It is obvious that, for $\forall  R_0 \in {\rm comp}(D)$, the incremental stability indicates that, if there exists an Rf-solution ${\mathcal S}^{*}_{\mathbb R_{{\ge {\phi_0}}}}(\varepsilon\bm X_\Phi,  R_0)$ that all other Rf-solutions converge to, then it must be unique and single-valued, since the contraction is imposed uniformly on ${{\rm T}} {{\mathcal M_\Phi }}$ such that the limit phase cross-section convex hull is not compressible in any direction.
\label{Re:SuitGraph}
\end{Remark}

\begin{Definition}[Finsler structure \cite{Bao}]
\label{def:Finsler structure}
The Finsler structure of a $C^{\infty}$ smooth manifold ${{\mathcal M }}$ is a non-negative function $F:{\rm T}\mathcal M\to \mathbb R_{\ge 0}$ defined on its tangent bundle, which possesses the following properties:

\romannumeral1) (\emph{Regularity}) $F$ is continuous on the tangent bundle ${\rm T}\mathcal M$, furthermore, $F$ is positive and $C^{\infty}$ smooth on the tangent bundle without the zero section ${\rm T}\mathcal M\backslash 0$; 

\romannumeral2-a) (\emph{Positive homogeneity}) $F\left(\!{\bm x,\lambda \delta \bm x}\!\right)\!=\!\lambda\!F\left( {\bm x,\delta \bm x} \right)$, for $\forall \lambda> 0$ and $\forall \left( {\bm x,\delta \bm x} \right)\in {\rm T}\mathcal M$; 

\romannumeral3) (\emph{Strong convexity}) the Hessian matrix of $F^2$ is positive definite at $\forall \left( {\bm x,\delta \bm x} \right) \in {\rm T}\mathcal M \backslash 0$, \emph{s.t.} $F\left( {\bm x,\delta {\bm x_1} + \delta {\bm x_2}} \right) < F\left( {\bm x,\delta {\bm x_1}} \right) + F\left( {\bm x,\delta {\bm x_2}} \right)$, for $\forall \left( {\bm x,\delta \bm x_1} \right), \left( {\bm x,\delta \bm x_2} \right)\in {\rm T}\mathcal M$, and $\delta \bm x_1 \ne \lambda \delta \bm x_2$, for $\forall \lambda \in \mathbb R$.
\end{Definition}

\begin{Definition}[Finsler distance]
Consider the Finsler manifold $\left(\mathcal M, F\right)$, and the reparameterized curve $\bm \gamma \left( s \right):\left[ {0,1} \right] \to \mathcal M$ being piecewise $C^1$, with $\frac{{{\rm{d}}\bm \gamma \left( s \right)}}{{{\rm{d}}s}} \in {T_{\bm \gamma \left( s \right)}} \mathcal M$. The Finsler distance computing along the curve can be formulated as
\begin{align}
d_F\left( {{\bm x_0},{\bm x_1}} \right): = \mathop {\inf }\limits_{\Gamma \left( {{\bm x_0},{\bm x_1}} \right)} \int_0^1 {F\left( {\bm \gamma \left( s \right),\frac{{{\rm{d}}\bm \gamma \left( s \right)}}{{{\rm{d}}s}}} \right){\rm{d}}s},\nonumber
\end{align}
where $\Gamma$ denotes the collection of all the piecewise $C^1$ curves $\bm \gamma\left( s \right)$ which satisfies $\bm \gamma \left( 0 \right) = {\bm x_0}$, $\bm \gamma \left( 1 \right) = {\bm x_1}$. 
\end{Definition}

Due to the existence of anisotropy, $d_F\left( {{\bm x_0},{\bm x_1}} \right)$ may be different from $d_F\left( {{\bm x_1},{\bm x_0}} \right)$, \emph{i.e.} a quasimetric, and $d_F\left( {{\bm x_0},{\bm x_1}} \right) = d_F\left( {{\bm x_1},{\bm x_0}} \right)$ is true if \romannumeral2-a) in \emph{Definition \ref{def:Finsler structure}} is replaced by:

\romannumeral2-b) (\emph{Absolute homogeneity}) $F\left(\!{\bm x,\lambda \delta \bm x}\!\right)\!=\!|\lambda|F\left( {\bm x,\delta \bm x} \right)$, for $\forall \lambda \in \mathbb R$ and $\forall \left( {\bm x,\delta \bm x} \right)\in {\rm T}\mathcal M$. 
Recall that a function $f:\mathbb R^n \to \mathbb R^m$ is called locally Lipschitz at $\bm x \in \mathbb R^n$, if $\exists \lambda_x, \varepsilon_x \in \mathbb R_+$, \emph{s.t.} $\left\| {f\left( {\bm x'} \right) - f\left( {\bm x''} \right)} \right\| \le {\lambda _x} \max (d_F({\bm x', \bm x''}),d_F({\bm x'', \bm x'}))$ for $\forall \bm x', \bm x'' \in \mathbb B(\bm x,\varepsilon_x)$, 
and $f$ is called locally Lipschitz function if it is locally Lipschitz on $\mathbb R^n$.
And when $f:\mathbb R \times \mathbb R^n \to \mathbb R^m$ has an additional argument of time $t$, it is also called locally Lipschitz at $\bm x \in \mathbb R^n$, if $\exists \varepsilon_x \in \mathbb R_+$, $\exists \lambda_x(t): \mathbb R \to \mathbb R_+$, \emph{s.t.} $\left\| {f\left( {\bm x'} \right) - f\left( {\bm x''} \right)} \right\| \le {\lambda _x}(t)\max (d_F({\bm x', \bm x''}),d_F({\bm x'', \bm x'}))$ for $\forall \bm x', \bm x'' \in \mathbb B(\bm x,\varepsilon_x)$ and $\forall t \in \mathbb R$. 
If $V$ is locally Lipschitz function, which is thus differentiable \emph{a.e.} in the sense of Lebesgue measure due to Rademacher's theorem, and we can then search for the existence of the following inequality to make $V\!\left( {{\bm x},\delta {\bm x}}\! \right)$ a contraction measure:
\begin{align}
\dot V\!\left(\! {\bm x,\delta \bm x} \!\right)\! = \!\frac{{\partial V\!\left( \!{\bm x,\delta \bm x}\! \right)}}{{\partial \bm x}}\dot {\bm x} \!+ \!\frac{{\partial V\!\left(\! {\bm x,\delta \bm x} \!\right)}}{{\partial \delta \bm x}}\delta \dot {{\bm x}} \le \!- \alpha\!\left( {V\!\left( {{\bm x},\delta {\bm x}}\! \right)} \right),
\nonumber
\end{align}
where $\left(\bm x,\delta {{\bm x}}\right) \in {\rm T}\mathcal M_\Phi$,
$\alpha:\mathbb R_{\ge 0} \to \mathbb R_{\ge 0}$ is a locally Lipschitz function, whose specific form are discussed below (see \emph{Lemma~\ref{Le:GGInv}}).
The $C^1$ candidate Finsler-Lyapunov function $V:{\rm T}\mathcal M_\Phi \to \mathbb R_{\ge0}$, is employed to
characterize the contraction behaviors between trajectories, which satisfies
\begin{align}
\label{eq:VIneq}
{c_1}F\left( {{\bm x},\delta {\bm x}} \right)^p \le V\left( {{\bm x},\delta {\bm x}} \right) \le {c_2}F\left( {{\bm x},\delta {\bm x}} \right)^p,
\end{align}
where $c_1, c_2 \in \mathbb R_+$, the positive constant $p \ge 1$, and $F$ is a Finsler structure. 
The anisotropic metric from $R_0$ to $R_1$ on $D$ \emph{w.r.t.} Finsler structure is defined as
\begin{align}
\underline d_{F}\left( {{\bm x_0},{R_1}} \right) =  \inf \limits_{{\bm x_1}\in R_1} d_{F}\left( {{\bm x_0},{\bm x_1}} \right). 
 \label{eq:FinslerHaus}
\end{align}
And the induced Finsler Hausdorff metric is 
\begin{align}
{ d_{F\!H}}(\!{R_0},{R_1}\!)\!= \!\max\! \left\{ {{\!\sup \limits_{{\bm x_0}\in R_0}\underline d_{F\!H}}(\!{\bm x_0},{R_1}\!),\!{\sup \limits_{{\bm x_0}\in R_1}\underline d_{F\!H}}(\!{\bm x_1},{R_0}\!)} \right\}. \nonumber
\end{align}
The incremental stability featured with the graphical paradigm used in \emph{Definition \ref{def:GStability}} and \emph{Definition \ref{def:Asymptotic Gstability} } can then be defined as follows by implementing the Hausdorff metric $ d_{F\!H}$. 
Additionally, $\underline d_{F}$ and $d_{F\!H}$ here can also be seen as its natural extensions on $Q$, similar as the one discussed in \emph{Definition \ref{def:GStability}}.

\begin{Definition}[Graphical incremental stability] 
\label{def:Del-GStability}
Consider the Rf-solutions ${\mathcal S}^{*}_{\mathbb R_{{\ge {\phi_0}}}}(\bm X_\Phi,R_0)$, ${\mathcal S}^{*}_{\mathbb R_{{\ge {\phi_0}}}}(\bm X_\Phi,R_1)$ for any two different compact nonempty sets $R_0, R_1 \in {\rm comp}(D)$, 
with the angle $\phi$ accumulated on $\mathbb R_{\ge \phi_0}$, and $D$ being strongly forward invariant. 
The phase-based system \eqref{eq:ProbFormAuto} is 

\romannumeral1) \emph{graphically incrementally stable} ($\delta$-GS) on $D$, if for any positive constant $\varepsilon_g$,
there exist positive constant $\delta$, 
such that, if we have ${d_{F\!H}}\left(R_0,R_1\right) \le \delta$, then ${d_{F\!H}}( {G\mathcal S_\phi^{\varepsilon_g} \left(R_0\right),G\mathcal S_\phi^{\varepsilon_g} \left(R_1\right)} ) \le \varepsilon_g$, $\forall \phi \in {\mathbb R_{ \ge {\phi_0}}}$, where $G\mathcal S_\phi^{\varepsilon_g} \left(R_0\right):={\rm{gph}}\mathcal S_{\phi + \varepsilon_g \mathbb B}^ * \left( {\bm X_\Phi,{R_0}} \right)$;

\romannumeral2) \emph{graphically incrementally asymptotically stable} ($\delta$-GAS) on $D$, if it is not only $\delta$-GS, but also satisfies $\mathop {\lim }\limits_{\phi \to \infty}  {d_{F\!H}}( \!{G\mathcal S_\phi^{\eta_g(\phi)} \left(R_0\right), G\mathcal S_\phi^{\eta_g(\phi)} \left(R_1\right)} ) = 0$, $\forall \phi \in {\mathbb R_{ \ge {\phi_0}}}$, where the positive Lipschitz continuous function $\eta_g(\phi)$ asymptotically converges to $0$ as $\phi \to \infty$;

\romannumeral3) \emph{graphically incrementally exponentially stable} ($\delta$-GES) on $D$, if $\exists k\! > \!1,\lambda \!> \!0$, \emph{s.t.} ${d_{F\!H}}( \!{G\mathcal S_\phi^{\kappa_g(\phi)} \left(\!R_0\!\right), G\mathcal S_\phi^{\kappa_g(\phi)} \left(\!R_1\!\right)} ) \le k{e^{ - \!\lambda \left( \!{\phi \! - \!{\phi _0}} \!\right)}}{d_{F\!H}}\!\left(R_0,R_1 \right)$, for $\forall \phi \in {\mathbb R_{ \ge {\phi_0}}}$, where the positive Lipschitz continuous function $\kappa_g(\phi)$ exponentially converges to $0$ as $\phi \to \infty$.
\end{Definition}

The following Lemma is to connect the graphical contraction behavior for Rf-solutions and the graphical convergence behavior between Carath\'eodory trajectories.

\begin{Lemma}
\label{Le:NoRfsolu}
Consider any pair of Rf-solutions \emph{w.r.t.} the phase-based system \eqref{eq:ProbFormAuto}, $\mathcal S_{\phi}^ * \left( { \bm X_\Phi,\!{R_0}} \!\right)$ 
and  $\mathcal S_{\phi}^ * \left( { \bm X_\Phi,\!{R_1}} \!\right)$ for $\forall R_0, R_1 \in {\rm{comp}}(D)$ being nonempty, satisfying that

($\ast$) there exists a certain bounded periodic Rf-solution $\mathcal S_{\phi}^ * \left( { \bm X_\Phi,\!{R_\ast}} \!\right)$, where $R_\ast \in {\rm{comp}}(D)$ is convex-valued and path-connected, such that $G\mathcal S_\phi^{\varepsilon_g} \left(R_\ast \right) \subset G\mathcal S_\phi^{\varepsilon_g} \left(R_0\right)$ and $G\mathcal S_\phi^{\varepsilon_g} \left(R_\ast \right) \subset G\mathcal S_\phi^{\varepsilon_g} \left(R_1\right)$, and $R_\ast$ is invariant of every specific choices of $R_0$ and $R_1$.

Define the union $\varepsilon$-graph set as
\begin{align}
&G_\cup\mathcal S_\phi^{\varepsilon_g} \left(R_0, R_1\right) :=G\mathcal S_\phi^{\varepsilon_g} \left(R_0\right)\cup G\mathcal S_\phi^{\varepsilon_g} \left(R_1\right),\nonumber
\end{align}
\noindent where $G\mathcal S_\phi^{\varepsilon_g} \left(R_0\right)$ is abbreviation of ${\rm{gph}}\mathcal S_{\phi \!+ \!\varepsilon_g \mathbb B}^ * \left( {\bm X_\Phi,\!{R_0}} \!\right)$.
Furthermore, consider all the Carath\'eodory trajectories ${\bm r_\phi}$ contained in the set $G_D:={\rm{cl}}(G_\cup\mathcal S_\phi^{\varepsilon_g} \left(R_0, R_1\right) \backslash G\mathcal S_\phi^{\varepsilon_g} \left(R_\ast\right))$.
With \emph{Condition A} satisfied,  $\forall \phi \in {\rm dom} {\bm X _\phi}|_{G_D}$, \emph{i.e.} on the accumulated phase domain in $G_D$, there exist ($\eta_g(\phi)$ or $\kappa_g(\phi)$ is correspondingly used as the superscript instead of $\varepsilon_g$):
\begin{enumerate}[i)]
\item  $\underline d_{F}( {{(\phi,\bm r_\phi)  },G\mathcal S_\phi^{\varepsilon_g} \left(R_\ast\right)} )$ is 0-S $\Leftrightarrow $ \eqref{eq:ProbFormAuto} is $\delta$-GS;

\item $\underline d_{F}( {{(\phi,\bm r_\phi) },G\mathcal S_\phi^{\eta_g(\phi)} \left(R_\ast\right)} )$ is 0-AS $\Leftrightarrow $ \eqref{eq:ProbFormAuto} is $\delta$-GAS;

\item  $\underline d_{F}( {{(\phi,\bm r_\phi) },G\mathcal S_\phi^{\kappa_g(\phi)} \left(R_\ast\right)} )$ is 0-ES $\Leftrightarrow $ \eqref{eq:ProbFormAuto} is $\delta$-GES,
\end{enumerate}


\noindent where 0-S, 0-AS, 0-ES indicate stability, asymptotic stability and exponential stability \emph{w.r.t.} the origin, respectively.
\end{Lemma}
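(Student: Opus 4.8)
The plan is to use the common reference periodic Rf-solution $R_\ast$ posited in $(\ast)$ as an intermediary object whose graph is nested inside both $G\mathcal S_\phi^{\varepsilon_g}(R_0)$ and $G\mathcal S_\phi^{\varepsilon_g}(R_1)$, thereby reducing the two-sided Hausdorff comparison $d_{F\!H}$ between the graphs of the two Rf-solutions to one-sided anisotropic distances $\underline d_F((\phi,\bm r_\phi),\cdot)$ from individual Carath\'eodory trajectories to the reference graph. Each equivalence i)--iii) is proved by the two implications separately; the stability mode enters only through the superscript bound, so I would carry out the argument once with $\varepsilon_g$ and then reread it verbatim with $\eta_g(\phi)$ and $\kappa_g(\phi)$ for the asymptotic and exponential cases. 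Throughout I abbreviate $A:=G\mathcal S_\phi^{\varepsilon_g}(R_0)$, $B:=G\mathcal S_\phi^{\varepsilon_g}(R_1)$, and $C:=G\mathcal S_\phi^{\varepsilon_g}(R_\ast)$, so that $(\ast)$ reads $C\subset A$ and $C\subset B$.

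For the implication ``$\delta$-G stability $\Leftarrow$ trajectory $0$-stability'' I would first establish the upper estimate. For $a\in C\subset B$ one has $\underline d_F(a,B)=0$, while for $a\in A\setminus C\subset G_D$ the monotonicity of the infimum under $C\subset B$ (with the first argument $a$ held fixed, which is legitimate despite anisotropy) gives $\underline d_F(a,B)\le\underline d_F(a,C)$; taking suprema and repeating symmetrically in $B$ yields
\begin{align}
d_{F\!H}(A,B)\le\max\left\{\sup_{a\in A\setminus C}\underline d_F(a,C),\ \sup_{b\in B\setminus C}\underline d_F(b,C)\right\}.\nonumber
\end{align}
Since every point of $(A\cup B)\setminus C$ lies on the phase cross-section of some Carath\'eodory trajectory $\bm r_\phi$ in $G_D$, the right-hand side is exactly the supremal value of $\underline d_F((\phi,\bm r_\phi),C)$. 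Hence $0$-stability (resp.\ $0$-AS, $0$-ES) of $\underline d_F((\phi,\bm r_\phi),C)$ forces $d_{F\!H}(A,B)$ to obey the matching bound, which is precisely $\delta$-GS (resp.\ $\delta$-GAS, $\delta$-GES) in \emph{Definition~\ref{def:Del-GStability}}.

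For the converse ``$\Rightarrow$'' I would exploit the invariance of $R_\ast$ under the choice of $R_0,R_1$ asserted in $(\ast)$. Any Carath\'eodory trajectory contained in $G_D$ starts in the cross-section of $A\cup B$, hence lies entirely within one funnel, say $A$ (the funnel collects all forward solutions of $\bm X_\Phi$ from $R_0$). I would then apply $\delta$-GS to the admissible pair $(R_0,R_\ast)$: specializing the second set to $R_\ast$ gives $B=C$, so the one-sided term $\sup_{c\in C}\underline d_F(c,A)$ vanishes (as $C\subset A$) and $d_{F\!H}(A,C)=\sup_{a\in A}\underline d_F(a,C)=\sup_{a\in A\setminus C}\underline d_F(a,C)$. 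Thus the $\delta$-GS bound on $d_{F\!H}(A,C)$ directly bounds the distance of our trajectory to $C$, giving $0$-stability; the asymptotic and exponential versions follow by substituting $\eta_g(\phi)$ or $\kappa_g(\phi)$.

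The hard part will be twofold. First, the Finsler distance $d_F$ is only a quasimetric (anisotropy through homogeneity \romannumeral2-a) in \emph{Definition~\ref{def:Finsler structure}}), so the order of arguments in $\underline d_F$ must be tracked carefully; the two one-sided terms in $d_{F\!H}$ do not coincide in general, and the monotonicity estimate must always be invoked with the first argument fixed for it to remain valid. Second, I must justify that the supremum over Carath\'eodory trajectories in $G_D$ genuinely realizes the supremum over points of the graph cross-section at each accumulated phase $\phi$; this is where the closure in the definition of $G_D$ together with the compactness and equicontinuity of the trajectory bundle (\emph{Remark~\ref{Re:CCB}} and \emph{Lemma~\ref{Le:Exists}}) are needed, so that the extremal value is attained and the pointwise and trajectorywise formulations agree $\phi$ by $\phi$.
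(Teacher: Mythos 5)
Your core reduction --- bounding $d_{F\!H}$ between the two $\varepsilon$-graphs by distances to the nested reference graph $C$, using monotonicity of $\underline d_F(a,\cdot)$ under $C\subset B$ --- parallels the paper's own triangle-inequality step $d_{F\!H}(A,B)\le d_{F\!H}(A,C)+d_{F\!H}(B,C)$, and your caution about the quasimetric asymmetry is apt. But there is a genuine gap in the forward direction. The step ``every point of $(A\cup B)\setminus C$ lies on the phase cross-section of some Carath\'eodory trajectory in $G_D$'' is asserted, not proved, and it is exactly where the paper's main work lies. A trajectory of the full system through such a point is not automatically contained in $G_D$: one must first argue that it could not have been inside $C$ at an earlier phase (this uses the strong forward invariance of the reference Rf-solution: once a trajectory penetrates $\partial\mathcal R^{\ast}_{\phi}(\bm X_\Phi,R_\ast)$ it never returns to $G_D$), and conversely the trajectories in $G_D$ are in general \emph{incomplete maximal} solutions whose domains terminate on $\partial C$, whereas the hypothesis 0-S/0-AS/0-ES is only given ``on the accumulated phase domain in $G_D$''. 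To obtain control for all $\phi\ge\phi_0$ the paper extends each incomplete solution by gluing it, at the penetration point, to a trajectory evolving on $\partial{\mathcal S}^{\ast}_{\mathbb R_{\ge\phi_0}}(\bm X_\Phi,R_\ast)$ (where $\underline d_F\equiv 0$), and proves --- by a contradiction argument using the Lipschitz continuity of $\bm X_\Phi$ and the \emph{a.e.} convention --- that such a boundary trajectory passes through every boundary point; it then transfers trajectory-wise convergence to graph convergence via the equivalences of \emph{Lemma~\ref{Le:Exists}} and \emph{Lemma~\ref{Le:Still}}. None of this (neither the invariance observation nor the boundary extension) appears in your proposal; the compactness/equicontinuity issue you flag as ``the hard part'' is not the actual obstruction.

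Your converse direction also has a quantifier flaw. You apply $\delta$-GS (\emph{Definition~\ref{def:Del-GStability}}) to the pair $(R_0,R_\ast)$, but its hypothesis is $d_{F\!H}(R_0,R_\ast)\le\delta$, which you are never given: proving 0-S requires handling an individual trajectory whose \emph{initial point} is $\delta$-close to $C$, while the whole set $R_0$ may remain far from $R_\ast$, so the implication in $\delta$-GS is vacuous for the pair you chose. The repair (implicit in the paper's terse contrapositive ``not 0-S $\Rightarrow$ not $\delta$-GS, by the Hausdorff distance definition'') is to build an auxiliary admissible pair from the trajectory itself, e.g. $\left(R_\ast\cup\{\bm r_{\phi_0}\},\,R_\ast\right)$, which still satisfies condition ($\ast$) and whose initial Hausdorff distance equals the trajectory's initial distance to the reference; $\delta$-GS applied to \emph{that} pair bounds the trajectory's distance to $C$. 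As written, your converse establishes a statement about sets initially close to $R_\ast$, not the trajectory-wise 0-S/0-AS/0-ES asserted in the lemma.
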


\begin{proof}
In addition to forward complete solutions, there also exist incomplete maximal absolutely continuous solutions restricted to the region $G_D$, where the maximal solution is the solution cannot be extended on $G_D$.
These incomplete solutions can only escape this phase-varying region by penetrating the sub-manifold $\partial{\mathcal R}^{\ast}_{\phi}( \bm X_\Phi,R_\ast)$, rather than $\partial{\mathcal R}^{*}_{\phi}( \bm X_\Phi,R_0)$ nor $\partial{\mathcal R}^{*}_{\phi}( \bm X_\Phi,R_1)$, which is due to the completeness of the Rf-solution \emph{w.r.t.} the state space. And the trajectory after penetration will not return into $G_D$ as a consequence of the strongly forward invariance of $\mathcal S_{\phi}^ * \left( { \bm X_\Phi,\!{R_\ast}} \!\right)$.

The proof of the sufficient assertions ``$\Rightarrow$'': Because $R_\ast \subset R_0 \cap R_1$,  $\underline d_{F}\left( {{\bm x},G_\cap\mathcal S_\phi^{ _g} \left(R_0, R_1\right)} \right) = 0 $ for $\forall \bm x \in G\mathcal S_\phi^{\varepsilon_g} \left(R_\ast\right)$, or $G\mathcal S_\phi^{\eta_g(\phi)} \left(R_\ast\right)$, $G\mathcal S_\phi^{\kappa_g(\phi)} \left(R_\ast\right)$, respectively, where 
\begin{align}
&G_\cap\mathcal S_\phi^{\varepsilon_g} \left(R_0, R_1\right) :=G\mathcal S_\phi^{\varepsilon_g} \left(R_0\right)\cap G\mathcal S_\phi^{\varepsilon_g} \left(R_1\right).\nonumber
\end{align} 
Moreover, the aforementioned incomplete solutions can be extended by a natural method, such that they can take any viable solution which coincides with the penetration point at the penetration instant and always evolves on the Rf-solution boundary $\partial{\mathcal S}^{\ast}_{\mathbb R_{{\ge {\phi_0}}}}( \bm X_\Phi,R_\ast):=\bigcup_{\phi \in \mathbb R_{{\ge {\phi_0}}}}(\phi,\partial{\mathcal R}^{\ast}_{\phi}(\bm X_\Phi,R_\ast))$ as its subsequent part of the solution after the penetration.
The feasibility of this extension relies on the fact that, through any point on $\partial{\mathcal S}^{\ast}_{\mathbb R_{{\ge {\phi_0}}}}( \bm X_\Phi,R_\ast)$, there passes at least one Carath\'eodory trajectory always evolving on this boundary. 
Let us demonstrate this point.
Since there always passes at least one local Carath\'eodory trajectory everywhere on the manifold, by matching adjacent local trajectories, we can find at least one permanent trajectory. 
Suppose that there is a point $\left(\phi,\bm x\right)$ that possesses no such local Carath\'eodory trajectory, then there is a neighbourhood $\left( \phi,\bm x \right) + {\varepsilon _c} \mathbb B$ in $\partial{\mathcal S}^{\ast}_{\mathbb R_{{\ge {\phi_0}}}}( \bm X_\Phi,R_\ast)$ about this point that possesses no such local Carath\'eodory trajectory with a sufficiently small positive constant ${\varepsilon _c}>0$, otherwise this point is isolated and will be neglected owing to the ``almost everywhere'' description in \emph{Definition~\ref{def:AC-solution}}. In other word, every two points $(\phi_0,\bm x_0), (\phi_1,\bm x_1) \in (\left( \phi,\bm x \right) + {\varepsilon _c} \mathbb B) \cap \partial{\mathcal S}^{\ast}_{\mathbb R_{{\ge {\phi_0}}}}( \bm X_\Phi,R_\ast)$ satisfy 
${\bm x_1} \notin {\bm x_0} +  \left| {{\phi _0} - {\phi _1}} \right|{\bm X_\Phi},{\bm x_0} \notin {\bm x_1} +  \left| {{\phi _0} - {\phi _1}} \right|{\bm X_\Phi}$, and due to the Lipschitz continuity of $\bm X_\Phi$, every two points $(\phi_0,\bm x_0), (\phi_1,\bm x_1) \in (\left( \phi,\bm x \right) + {\varepsilon _c} \mathbb B) \cap \partial{\mathcal S}^{\ast}_{\mathbb R_{{\ge {\phi_0}}}}( \bm X_\Phi,R_\ast)  + {\varepsilon _d} \mathbb B$ also satisfy the same property, for a sufficiently small positive constant ${\varepsilon _d}>0$, which indicates no Carath\'eodory trajectory in ${\mathcal S}^{\ast}_{\mathbb R_{{\ge {\phi_0}}}}( \bm X_\Phi,R_\ast)$ can reach $\left(\phi,\bm x\right)$, and thus leads to a contradiction. Therefore, the extension exists for every possible incomplete solutions.
Moreover, it is immediately obvious that, on $\partial{\mathcal R}^{\ast}_{\phi}( \bm X_\Phi,R_\ast)$, $\underline d_{F}( {{(\phi,\bm r_\phi)  },G\mathcal S_\phi^{0} \left(R_\ast\right)} ) \equiv 0$, which indicates 0-S, 0-AS, and 0-ES simultaneously \emph{w.r.t.} any $\varepsilon_g$, $\eta_g(\phi)$ or $\kappa_g(\phi)$. We are now in the position to state that the left side condition given in \romannumeral1), \romannumeral2) and \romannumeral3) can be extended to the region $\forall \phi \in {\mathbb R_{ \ge {\phi_0}}}$, rather than only the domain of $\bm X _\phi$ restricted to $G_D$, $\forall \phi \in {\rm dom} {\bm X _\phi}|_{G_D}$.
Based on \emph{Lemma~\ref{Le:Exists}}, for $\forall \bm x_0 \in D$, noting that all the solutions in $G_D$ are only restrictions of solutions in $\mathcal S_{\phi}^ * \left( { \bm X_\Phi,\!{R_0}} \!\right)$ and  $\mathcal S_{\phi}^ * \left( {\bm X_\Phi,\!{R_1}} \!\right)$, there exists at least one single-valued absolutely continuous solution $\bm r_\phi$, such that $\bm r _{\phi_0} = \bm x_0$. 
Moreover, we have ${d_{F\!H}}( {G\mathcal S_\phi^{\varepsilon_g} \left(R_0\right),G\mathcal S_\phi^{\varepsilon_g} \left(R_1\right)} ) \le {d_{F\!H}}( {G\mathcal S_\phi^{\varepsilon_g} \left(\!R_0\!\right),G\mathcal S_\phi^{\varepsilon_g} \left(\!R_\ast\!\right)} ) + {d_{F\!H}}( {G\mathcal S_\phi^{\varepsilon_g} \left(\!R_1\!\right),G\mathcal S_\phi^{\varepsilon_g} \left(\!R_\ast\!\right)} )$.
Therefore, due to the equivalence discussed in \emph{Lemma~\ref{Le:Exists}} and \emph{Lemma~\ref{Le:Still}}, the convergences of all the single-valued solutions (in the extended sense) imply the convergences of their corresponding $\varepsilon$-graphs in the sense of ${d_{F\!H}}( {G\mathcal S_\phi^{\varepsilon_g} \left(\!R_0\!\right),G\mathcal S_\phi^{\varepsilon_g} \left(\!R_\ast\!\right)})$ or ${d_{F\!H}}( {G\mathcal S_\phi^{\varepsilon_g} \left(\!R_1\!\right),G\mathcal S_\phi^{\varepsilon_g} \left(\!R_\ast\!\right)})$ and further lead to the $\delta$-GS, $\delta$-GAS, and $\delta$-GES between ${\mathcal S}^{*}_{\mathbb R_{{\ge {\phi_0}}}}( \bm X_\Phi,R_0)$ and ${\mathcal S}^{*}_{\mathbb R_{{\ge {\phi_0}}}}( \bm X_\Phi,R_1)$.  
The sufficient assertions ``$\Rightarrow$'' are proved.

The necessary assertions ``$\Leftarrow$'' are more straightforward to prove by considering their converse-negative propositions: 
the condition that ``$\underline d_{F}( {{(\phi,\bm r_\phi)  },G\mathcal S_\phi^{\varepsilon_g} \left(R_\ast\right)} )$ is not 0-S'' obviously leads to that ``\eqref{eq:ProbFormAuto} is not $\delta$-GS'', 
due to the Hausdorff distance definition, which shall apply to \romannumeral2) and \romannumeral3) as well.
\end{proof}

\begin{Remark}
\label{Re:NotContain}
The condition ($\ast$) in \emph{Lemma~\ref{Le:NoRfsolu}} makes the conclusion $\delta$-GS, $\delta$-GAS, and $\delta$-GES therein relatively limited than those in \emph{Definition \ref{def:Del-GStability}}. However, this condition prevents the circumstance in \emph{Remark \ref{Re:SuitGraph}}, and on the contrary, admits set-valued periodic Rf-solution along with incremental stability. Albeit with the limitation, the forthcoming generalization is within reach. For instance, we can use this Lemma to investigate the contraction behaviors between sets $\mathcal S_{\phi}^ * \left( { \bm X_\Phi,\!{R_0}} \!\right)$ 
and  $\mathcal S_{\phi}^ * \left( { \bm X_\Phi,\!{R_1}} \!\right)$, which only satisfy ${R_0} \not\subset {R_\ast}$ and ${R_1} \not\subset {R_\ast}$, by investigating Rf-solutions with their initial conditions as $\overline{\rm{co}}\left( {{R_0} \cup {R_\ast}} \right)$ and $\overline{\rm{co}}\left( {{R_1} \cup {R_\ast}} \right)$, respectively.
\end{Remark}

In order to investigate the differential characteristics of the non-smooth system, we further invoke the Clarke's generalized gradient \cite{Clarke} to analyze $V \left( {\bm x,\delta \bm x} \right)$, which is defined as
\begin{align}
\label{eq:Ggradient}
{{{\partial _C}f\left(\!\bm x \!\right)}} = {\rm{co}}\left\{ {\mathop {\lim }\limits_{i \to \infty } \nabla f\left( {{\bm x_{i}}} \right)\left| {{\bm x_{i}} \to \bm x,{\bm x_{i}} \notin S \cup {\Omega _f}} \right.} \right\},
\end{align}
where ${\rm{co}}$ is the convex hull operation, and $f:\mathbb R^n \to \mathbb R$ is a locally Lipschitz function, ${\Omega _f}$ is the set of points where $f$ fails to be differentiable, $S$ denotes an arbitrary set of zero measure, and $\bm x_{i}$ denotes any sequence converging to $\bm x$, and also avoiding the aforementioned two sets.
The generalized gradient does not exist where $f$ is not locally Lipschitz, but emerges as a nonempty compact convex set where it exists.
\begin{Lemma}
\label{Le:GGInv}
Consider the phase-based system \eqref{eq:ProbFormAuto}, if there exists a regular\footnote{Please refer to \cite{Cortes} for the definition of regular function. Commonly used regular functions include smooth functions and convex functions.} and locally Lipschitz
candidate Finsler-Lyapunov function $V \left( {\bm x,\delta \bm x} \right)$ satisfying the inequality \eqref{eq:VIneq}, and in local coordinate, there exists
\begin{align}
\label{eq:KerIneq}
\sup {\mathcal L_{{\bm X_\Phi}}}V\left( {\bm x,\delta \bm x} \right) \le  - \alpha \left( {V\left( {\bm x,\delta \bm x} \right)} \right),
\end{align}
for $\forall \left(\bm x,\delta {{\bm x}}\right) \in {\rm T}\mathcal M_\Phi$, where the $\sup$ operation is performed locally at $\left( {\bm x,\delta \bm x} \right)$,
and the set-valued Lie derivative is
\begin{align}
\label{eq:Kers}
 {\mathcal L_{{\bm X}_\phi}}V\left( {\bm x,\delta \bm x} \right): = & \frac{{\partial_C V\left( {\bm x,\delta \bm x} \right)}}{{\partial \bm x}}{\bm X_\Phi}\left( {\bm x, \phi} \right)\nonumber\\
 &+   \frac{{\partial_C  V\left( {\bm x,\delta \bm x} \right)}}{{\partial \delta \bm x}}\frac{{\partial {\bm X_\Phi}\left( {\bm x, \phi} \right)}}{{\partial \bm x}}\delta \bm x,
\end{align}
then 
 the phase-based system is 
\begin{enumerate}[i)]
\item $\delta$-S on $\mathcal M_\Phi$, if we can select $\alpha \left( s \right) = 0$ for $s \ge 0$;
\item $\delta$-AS on $\mathcal M_\Phi$, if $\alpha$ can be selected from class~$\mathcal K$ function;
\item $\delta$-ES on $\mathcal M_\Phi$, if $\alpha \left( s \right) := \lambda s$ for $s \ge 0$, with $\lambda$ being a positive constant.
\end{enumerate}


\end{Lemma}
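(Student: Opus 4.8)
The plan is to run a contraction argument in the phase variable $\phi$: interpret $V(\bm x,\delta\bm x)$ as an infinitesimal Finsler length element raised to the $p$-th power, show via \eqref{eq:KerIneq} that it contracts along the prolonged (variational) flow $\delta\dot{\bm x}\in \tfrac{\partial \bm X_\Phi}{\partial \bm x}\delta\bm x$, and then integrate this infinitesimal contraction along a minimizing curve to obtain the finite-distance estimates in the three regimes. First I would establish the nonsmooth chain rule along trajectories. Fix any Carath\'eodory trajectory $(\bm x(\phi),\delta\bm x(\phi))$ of the prolonged system, which is absolutely continuous since $\bm X_\Phi$ is bounded and Lipschitz in $\bm x$. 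Because $V$ is regular and locally Lipschitz, $\phi\mapsto V(\bm x(\phi),\delta\bm x(\phi))$ is absolutely continuous, and following the Clarke calculus used in \cite{Cortes} its derivative exists for \emph{a.e.} $\phi$ and belongs to the set-valued Lie derivative $\mathcal L_{\bm X_\Phi}V$ of \eqref{eq:Kers} (the two summands there being precisely $\tfrac{\partial_C V}{\partial \bm x}\dot{\bm x}$ and $\tfrac{\partial_C V}{\partial \delta\bm x}\delta\dot{\bm x}$). Hence $\tfrac{{\rm d}}{{\rm d}\phi}V(\bm x(\phi),\delta\bm x(\phi))\le \sup\mathcal L_{\bm X_\Phi}V\le -\alpha(V)$ \emph{a.e.}, which is the scalar differential inequality that drives everything.

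Next I would apply the comparison principle to $\dot v\le -\alpha(v)$ with $v(\phi):=V(\bm x(\phi),\delta\bm x(\phi))$. When $\alpha\equiv 0$, $v$ is non-increasing, so $v(\phi)\le v(\phi_0)$; when $\alpha$ is of class~$\mathcal K$, comparison with $\dot w=-\alpha(w)$ yields a class-$\mathcal{KL}$ bound and $v(\phi)\to 0$; when $\alpha(s)=\lambda s$, Gr\"onwall's inequality gives $v(\phi)\le v(\phi_0)e^{-\lambda(\phi-\phi_0)}$. Through the sandwich \eqref{eq:VIneq}, each of these transfers to the Finsler element, so that $F(\bm x(\phi),\delta\bm x(\phi))^p$ is bounded by $(c_2/c_1)$ times the corresponding decay factor times $F(\bm x(\phi_0),\delta\bm x(\phi_0))^p$.

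The finite-distance step connects these infinitesimal estimates to the $\delta$-S, $\delta$-AS and $\delta$-ES notions. Given $\bm x_1,\bm x_2\in D$, take a piecewise $C^1$ curve $\bm\gamma(s)$ from $\bm x_1$ to $\bm x_2$ that realizes $d_F$ up to an arbitrary tolerance. Since the hypothesis \eqref{eq:KerIneq} is imposed uniformly over the whole tangent bundle ${\rm T}\mathcal M_\Phi$, every infinitesimal displacement contracts; transporting $\bm\gamma$ along a measurable selection of the prolonged flow, each transported element $F(\bm\gamma_\phi(s),\partial_s\bm\gamma_\phi(s))$ obeys the pointwise bound from the previous step. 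Integrating over $s\in[0,1]$ and taking the infimum over curves then bounds $d_F(\bm r_\phi(\bm x_1),\bm r_\phi(\bm x_2))$ by the decayed initial length, yielding $\delta$-S in case~i), $\delta$-AS in case~ii), and $\delta$-ES in case~iii).

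The hard part will be this last step in the set-valued setting: the prolonged flow is not single-valued, so for a trajectory emanating from $\bm x_1$ I must exhibit a matching trajectory from $\bm x_2$ together with an absolutely continuous family of intermediate trajectories whose endpoints sweep the transported geodesic, and argue that the Finsler length of this family is differentiable in $\phi$ with its derivative governed \emph{a.e.} by \eqref{eq:KerIneq}. The uniform, all-direction contraction on ${\rm T}\mathcal M_\Phi$ is exactly what renders the conclusion independent of the chosen selection, but rigorously justifying the interchange of the curve integral with the $\phi$-derivative, and the measurability of the selected variation, is the delicate technical point; by contrast, the comparison-principle estimates of the second paragraph are entirely routine.
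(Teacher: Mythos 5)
Your proposal is correct and takes essentially the same route as the paper: the paper disposes of this lemma by invoking \emph{Theorem 1} and \emph{Remark 2} of \cite{Forni}, remarking only that \eqref{eq:KerIneq} holds \emph{a.e.} along absolutely continuous solutions and \emph{a.e.} along the geodesic (or piecewise smooth) curve, which is precisely the nonsmooth chain-rule, comparison-principle, and curve-integration argument you reconstruct. The set-valued matching difficulty you flag at the end is not treated in any greater rigor by the paper itself, so your attempt is, if anything, more explicit than the published justification.
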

 
Based on \emph{Theorem 1} and \emph{Remark 2} in \cite{Forni}, \emph{Lemma~\ref{Le:GGInv}} can be deduced, which owes its provability to the fact that absolutely continuous solutions cannot realize measurable states change in infinitesimal length of evolution. To be more specific, the distance continuous decrease between any two solutions preserves, since the contraction behavior \eqref{eq:KerIneq} exists \emph{a.e.} along the semi-flow of the system, and \emph{a.e.} along the geodesic curve, or in the relaxed sense, along a certain piecewise smooth curve. 
\begin{Remark}
\label{Re:Weak1}
In \emph{Lemma~\ref{Le:GGInv}}, we expose the incremental stability actually within the strong sense, where all the solutions are required to be $\delta$-S, $\delta$-AS, or $\delta$-ES. And one can also investigate within the weak sense, where only at least one solution needs to possess the properties \romannumeral1-\romannumeral3) therein, by replacing \eqref{eq:KerIneq} with
\begin{align}
\label{eq:KerIneqinf}
\inf {\mathcal L_{{\bm X_\Phi}}}V\left( {\bm x,\delta \bm x} \right) \le  - \alpha \left( {V\left( {\bm x,\delta \bm x} \right)} \right). 
\end{align}
Moreover, the candidate Finsler-Lyapunov function may obey a time-varying form (or the phase-varying form), rendered as $V \left( {t, \bm x,\delta \bm x} \right)$ (or alternatively $V \left( {\phi, \bm x,\delta \bm x} \right)$), where the set-valued Lie derivative becomes
\begin{align}
 {\mathcal L\!_{{\bm X}_\phi}}V\left( {t,\bm x,\delta \bm x} \right): = & \frac{{\partial_C \!V\left( {t,\bm x,\delta \bm x}\! \right)}}{{\partial t}} + \frac{{\partial_C \!V\left({t,\bm x,\delta \bm x} \!\right)}}{{\partial \bm x}}{\bm X_\Phi}\left( {\bm x,\phi} \right)\nonumber\\
 &+   \frac{{\partial_C \! V\left( {t,\bm x,\delta \bm x} \right)}}{{\partial \delta \bm x}}\frac{{\partial {\bm X_\Phi}\left( {\bm x,\phi} \right)}}{{\partial \bm x}}\delta \bm x. \nonumber
\end{align}
\end{Remark}

\begin{Remark}
\label{Re:CalcofGG} Let us demonstrate the calculation of the set-valued Lie derivative. Although the definition \eqref{eq:Kers} enjoys the beauty of consistency \emph{w.r.t.} its single-valued version, we still need the following version to complete the set calculation:
\begin{align}
 {\mathcal L_{{\bm X_\Phi}}}V\left( {\bm x,\delta \bm x} \right) = \left\{ {a \in {\mathbb R}\left| {\exists \bm v \in {\hat{\bm X}_\Phi}{\rm{,}}} \right.} \right. \nonumber\\ 
 \left. {{\rm{s}}{\rm{.t}}{\rm{. }~}a = \left\langle {\bm  v,\bm  \zeta } \right\rangle ,{\rm{ for~}}\forall \bm  \zeta  \in {\partial _C}V\left( {\bm x,\delta \bm x} \right)} \right\}, \nonumber
\end{align}
where ${\hat{\bm X}_\Phi}\!: = \!\left[ {{\bm X_\Phi}^T\!\left( {\phi, \bm x} \right),{{\left[ {{{\partial {\bm X_\Phi}\left( {\phi, \bm x} \right)\cdot\delta \bm x} \mathord{\left/
 {\vphantom {{\partial {\bm X_\Phi}\left( {\phi, \bm x} \right)\cdot\delta \bm x} {\partial \bm x}}} \right.
 \kern-\nulldelimiterspace} {\partial \bm x}}} \right]}^T}} \right]$.
When $V\left( {\bm x,\delta \bm x} \right)$ is continuously differentiable at $\left( {\bm x,\delta \bm x} \right)$, then ${\mathcal L_{{\bm X_\Phi}}}V\left( {\bm x,\delta \bm x} \right) = \{ \nabla V\left( \bm x \right) \cdot {\bm v}, \bm v \in {\hat{\bm X}_\Phi}\}$.
By virtue of \emph{Lemma~1} in \cite{Bacciotti1999}, the set-valued Lie derivative ${\mathcal L_{{\bm X_\Phi}}}V\left( {\bm x,\delta \bm x} \right)$ exists almost everywhere, and is a closed and bounded interval in $\mathbb R$ and possibly empty, as long as $V \left( {\bm x,\delta \bm x} \right)$ is regular and locally Lipschitz, and $\bm X_\Phi$ satisfies condition \romannumeral1) of \emph{Theorem~\ref{th:Theorem1}}. 
According to \cite{KamalapurkarB}, since $V \left( {\bm x,\delta \bm x} \right)$ is selected to be regular, $\sup {\mathcal L_{{\bm X_\Phi}}}V\left( {\bm x,\delta \bm x} \right)$ can also be relaxed as $\mathop {\min }_{\bm \zeta  \in {\partial _C}V\left( {\bm x,\delta \bm x} \right)} \mathop {\max }_{\bm v \in {{\rm D}_{{\bm X_\Phi }}}} \left\langle {\bm v,\bm \zeta } \right\rangle$.
And we adopt the convention $\max\left( \emptyset \right) = - \infty$.
Moreover, the partial derivative ${{{\partial {\bm X_\Phi}\left( {\phi, \bm x} \right)} \mathord{\left/
 {\vphantom {{\partial {\bm X_\Phi}\left( {\phi, \bm x} \right)} {\partial \bm x}}} \right.
 \kern-\nulldelimiterspace} {\partial \bm x}}}$ can be calculated via the contingent derivative ${\rm D}{\bm X_\Phi }\left( {\phi, \bm x,\bm y} \right)$ at $(\phi ,\bm x,\bm y) \in {\rm gph} \bm X_\Phi$, defined geometrically from the choices of contingent cones \eqref{eq:TangentCone} to the graphs, such that, in local coordinate,
\begin{align}
{\rm{gph}} {{\rm D}{\bm X_\Phi }\left( {\phi, \bm x, \bm y} \right)} = \hat T_{{\rm{gph}} {{\bm X_\Phi }} }\left( {\phi, \bm x, \bm y} \right), \nonumber
\end{align}
where $\bm x \in \mathcal M_\Phi$ and $\bm y \in {\rm T}_{\bm x}\mathcal M_\Phi$. In the light of \cite{Ledyaev2007}, the notion of contingent cone can be adapt to Finsler manifold:
\begin{align}
&\hat T_{{\rm{gph}} {{\bm X_\Phi }} }\left( {\phi, \bm x, \bm y} \right)= \{ {\bm v_\phi =1, \bm v_x \in {\rm T}_{\bm x}\mathcal M_\Phi,  \bm v_y \in {\rm T}_{\bm y}{\rm T}_{\bm x}\mathcal M_\Phi | } \nonumber\\ & {{\mathop {\lim \inf }\limits_{h \to {0^ + }} \underline d{{\left(\left(\phi+h, {\rm exp}_{\bm x}\left(h \bm v_x \right), {\rm exp}_{\bm y}\left(h \bm v_y \right) \right),{\rm{gph}} {{\bm X_\Phi }} \right)} \mathord{\left/
 {\vphantom {{\left( {x + hv,{\rm{gph}} {{\bm X_\Phi }}} \right)} h}} \right.
 \kern-\nulldelimiterspace} h} = 0} } \}. \nonumber
\end{align} 
where $h \in \mathbb R_+$, ${\rm exp}_{\star}(h \bm v)$ is the exponential map which is the end of the geodesic emanating from $\star$ and along $h \bm v$ with a length equaling $|h \bm v|$.
Furthermore, if $\bm v \in {\rm D}{\bm X_\Phi }\left( {\phi, \bm x, \bm y} \right)( 0, \delta {\bm x})$, and the ambient space of $\mathcal M_\Phi$ is available, we have 
\begin{align}
\mathop {\lim \inf }\limits_{h \searrow 0 , \bm u' \to  \delta {\bm x}} \underline d\left( {\bm v,\frac{{\bm X_\Phi\left(\phi, {\rm exp}_{\bm x}(h \bm u') \right) - \bm y}}{h}} \right) = 0,\nonumber
\end{align}
where $\bm u' \in {\rm T}_{\bm x}\mathcal M_\Phi$, $h \in \mathbb R_+$. Moreover, according to the Hopf-Rinow theorem \cite{Bao}, the universal existence of the exponential map can be ensured by the forward geodesic completeness of the manifold $\mathcal M_\Phi$.
\end{Remark}

Different from the uniform contraction analysis discussed in \emph{Lemma~\ref{Le:GGInv}}, the promising method to attack the set-valued version contraction problem is to encode the condition in \emph{Lemma~\ref{Le:NoRfsolu}} into the Finsler structure. Based on these two points, the following Lemma is established.

\begin{Lemma}
\label{Le:GGssInv}
Consider the phase-based system \eqref{eq:ProbFormAuto}, if there exists a regular and locally Lipschitz
candidate Finsler-Lyapunov function $V \left( {\bm x,\delta \bm x} \right)$ satisfying the inequality \eqref{eq:VIneq}, and in local coordinate, there exists
\begin{align}
\sup {\mathcal L_{{\bm X_\Phi}}}V\left( {\bm x,\delta \bm x} \right) \le  - \alpha \left( {V\left( {\bm x,\delta \bm x} \right)} \right),\label{eq:CarIneqset}
\end{align}
for all $\left(\bm x,\delta {{\bm x}}\right) \in \mathcal{I}_{c}(\phi):= \mathbb{R}_{\ge \phi_0} \to {\rm T}\mathcal M_\Phi$, where $\pi (\mathcal{I}_{c}(\phi)) =\mathcal M_\Phi \backslash {\bm {\mathcal {F}}}_{c} (\phi)$,
and the customized funnel (donut, $\mathcal{I}_{c}(\phi):= \mathcal S^1 \to {\rm T}\mathcal M_\Phi$, if it is periodic) ${\bm {\mathcal {F}}}_{c}: \mathcal{S}^1 \to {\rm{comp}}(\mathcal M _\phi)$ being a compact, convex-valued, bounded, and path-connected, set-valued map uniformly contained in $D$, which also satisfies
\begin{align}\label{eq:ContainDF}
{\bm X_\Phi }\left( {\phi ,\bm x} \right) \subseteq {\rm D}{{\bm {\mathcal {F}}}_{c}}\left( {\phi ,\bm x} \right)\left( 1 \right),
\end{align}
for \emph{a.e.} $\phi \in \mathbb R_{\ge \phi_0}$ and $\forall \bm x \in D$, where ${\rm D}{{\bm {\mathcal {F}}}_{c}}$ is the contingent derivative of the funnel ${{\bm {\mathcal {F}}}_{c}}$, $\phi_0$ is the initial phase angle of the solution, and $\bm x_{in}$ is the initial state,
where $\pi: {\rm T}\mathcal M_\Phi \to \mathcal M_\Phi$ is the canonical projection,
then the phase-based system is 
\begin{enumerate}[i)]
\item stable \emph{w.r.t.} ${\bm {\mathcal {F}}}_{c}(\phi)$ in $\mathcal M_\Phi$, if $\alpha \left( s \right) = 0$ for $s \ge 0$;
\item asymptotically stable \emph{w.r.t.} ${\bm {\mathcal {F}}}_{c}(\phi)$ in $\mathcal M_\Phi$, if $\alpha$ can be selected from class~$\mathcal K$ function;
\item exponentially stable \emph{w.r.t.} ${\bm {\mathcal {F}}}_{c}(\phi)$ in $\mathcal M_\Phi$, if $\alpha \left( s \right) := \lambda s$ for $s \ge 0$, with the positive constant $\lambda \in \mathbb R_+$.
\end{enumerate}


\end{Lemma}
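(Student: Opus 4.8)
The plan is to treat the funnel $\bm{\mathcal F}_c(\phi)$ as the set-valued, phase-varying counterpart of the invariant periodic Rf-solution $R_\ast$ appearing in \emph{Lemma~\ref{Le:NoRfsolu}}, and to establish stability \emph{w.r.t.} $\bm{\mathcal F}_c(\phi)$ (in the sense of \emph{Definition~\ref{def:Setasymp}}, with the static target promoted to the moving tube $\bm{\mathcal F}_c(\phi)$) by a contraction argument that mirrors \emph{Lemma~\ref{Le:GGInv}}, but with the contraction measured relative to the funnel rather than between arbitrary trajectory pairs. The first step is to show that $\bm{\mathcal F}_c(\phi)$ is strongly forward invariant for \eqref{eq:ProbFormAuto}. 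This is exactly what the viability-type inclusion \eqref{eq:ContainDF} encodes: since $\bm X_\Phi(\phi,\bm x) \subseteq \mathrm D \bm{\mathcal F}_c(\phi,\bm x)(1)$ holds \emph{a.e.} in $\phi$ and for all $\bm x \in D$, every selection of $\bm X_\Phi$ points into (or tangent to) the tube at its boundary, so by the Nagumo/viability characterization through contingent cones \eqref{eq:TangentCone} no Carath\'eodory trajectory emanating from inside $\bm{\mathcal F}_c(\phi_0)$ can leave it. Hence once a trajectory reaches the funnel it remains absorbed, and the only contraction we must exhibit is the one driving exterior trajectories toward it.

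For the contraction step I would work on $\mathcal{I}_c(\phi)$, i.e. over base points $\bm x \notin \bm{\mathcal F}_c(\phi)$, and use the anisotropic distance $\underline d_F(\bm r_\phi, \bm{\mathcal F}_c(\phi))$ of \eqref{eq:FinslerHaus} as the contraction measure. For a point $\bm r_\phi$ outside the funnel, let $\bm\gamma(s):[0,1]\to\mathcal M_\Phi$ be a minimizing geodesic realizing $\underline d_F$, with its endpoint on $\partial\bm{\mathcal F}_c(\phi)$ and virtual displacement $\delta\bm x = \mathrm d\bm\gamma/\mathrm ds$; existence of such geodesics is guaranteed by the forward geodesic completeness invoked via the Hopf--Rinow theorem in \emph{Remark~\ref{Re:CalcofGG}}. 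Because $V$ is sandwiched by $F^p$ in \eqref{eq:VIneq}, integrating the candidate Finsler--Lyapunov function along $\bm\gamma$ yields a quantity equivalent to $\underline d_F^{\,p}$, and the hypothesis \eqref{eq:CarIneqset}, read through the set-valued Lie derivative \eqref{eq:Kers} and the Clarke generalized gradient as in \emph{Remark~\ref{Re:CalcofGG}}, gives $\sup\mathcal L_{\bm X_\Phi} V \le -\alpha(V)$ at each base point of the geodesic. Exactly as in the derivation of \emph{Lemma~\ref{Le:GGInv}} from \cite{Forni}, the fact that absolutely continuous solutions realize no measurable state change in infinitesimal phase length lets us propagate this pointwise infinitesimal contraction into a monotone estimate on $V$ integrated along the geodesic, hence on $\underline d_F(\bm r_\phi, \bm{\mathcal F}_c(\phi))$, valid \emph{a.e.} in $\phi$ while the trajectory stays outside the funnel.

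It then remains to invoke the comparison principle for the scalar differential inequality $\dot v \le -\alpha(v)$, where $v$ denotes $V$ integrated along the minimizing geodesic to the funnel. The three cases follow in the familiar way: $\alpha\equiv 0$ forces $v$, and therefore $\underline d_F$, to be nonincreasing, giving stability \emph{w.r.t.} $\bm{\mathcal F}_c(\phi)$; a class-$\mathcal K$ function $\alpha$ drives $v\to 0$, yielding asymptotic stability; and $\alpha(s)=\lambda s$ gives the exponential decay $v(\phi)\le v(\phi_0)e^{-\lambda(\phi-\phi_0)}$, hence exponential stability. In every case the forward invariance from Step~1 closes the argument, since any trajectory that meets the funnel in finite phase stays inside thereafter.

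The hardest part will be making the second step rigorous in the presence of both the moving target and the nonsmoothness. Differentiating $\underline d_F(\bm r_\phi,\bm{\mathcal F}_c(\phi))$ along the flow requires a Danskin/envelope-type argument to handle the $\inf$ defining the nearest point, together with control of the \emph{motion} of that nearest point on $\partial\bm{\mathcal F}_c(\phi)$; here the viability inclusion \eqref{eq:ContainDF} is essential, since it guarantees that the funnel boundary moves compatibly with the flow, so that the exterior contraction is not cancelled by the target's own displacement. The anisotropy of $F$ (only positive, not absolute, homogeneity in \emph{Definition~\ref{def:Finsler structure}}) further means one must track the \emph{direction} in which the geodesic approaches the funnel, and the nonsmoothness of $V$ forces the estimate to be read through $\sup\mathcal L_{\bm X_\Phi}V$ and the Clarke gradient rather than a classical chain rule; both are accommodated by the regularity and local Lipschitz hypotheses on $V$ and by the min--max relaxation of $\sup\mathcal L_{\bm X_\Phi}V$ recorded in \emph{Remark~\ref{Re:CalcofGG}}.
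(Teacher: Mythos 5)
Your Step~1 (forward invariance of the funnel from \eqref{eq:ContainDF}, via viability/tangency) and your Step~3 (comparison with the scalar inequality $\dot v \le -\alpha(v)$) coincide with the paper's proof. The gap is in your core step, and it is the one you flag yourself: you commit to differentiating $\underline d_{F}(\bm r_\phi,{\bm {\mathcal {F}}}_{c}(\phi))$ along the flow by a Danskin/envelope argument, handling a minimizing geodesic that is re-selected at every phase together with the motion of its foot point on $\partial {\bm {\mathcal {F}}}_{c}(\phi)$. This does not go through as described. The hypothesis \eqref{eq:CarIneqset} controls the evolution of $V(\bm x,\delta \bm x)$ only when $\delta \bm x$ is transported by the variational flow of $\bm X_\Phi$; it says nothing about $V$ evaluated on the tangent field of a geodesic freshly minimized at each phase, which is what your envelope derivative would produce. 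Likewise, \eqref{eq:ContainDF} delivers the qualitative fact that the funnel is strongly forward invariant, not the quantitative ``compatibility of the boundary motion'' you invoke to argue that the target's own displacement cannot cancel the contraction; that claim is asserted, not proved, and nothing in the hypotheses yields it directly.

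The paper's proof avoids all of this with a device your proposal is missing: fix the connecting curve \emph{once} and transport it by the flow. Concretely, connect the exterior trajectory to a trajectory started inside the funnel by a curve $\bm \gamma_{\phi_0}(s)$ of initial conditions, and let $\bm \gamma_{\phi}(s)$ be its image under the flow. Then (i) $\underline d_{F}(\bm r_\phi,{\bm {\mathcal {F}}}_{c}(\phi)) \le \int_0^{\mu(\phi)} F\left(\bm \gamma_\phi(s), \partial_s \bm \gamma_\phi(s)\right){\rm d}s$, where $\mu(\phi)$ is the parameter at which the transported curve crosses $\partial {\bm {\mathcal {F}}}_{c}(\phi)$; (ii) by the forward invariance established in Step~1 the crossing parameter is monotone, so every $s$ up to the crossing has remained outside the funnel for its entire history and the integration limit can be frozen at $\mu(\phi_0)$; (iii) by \eqref{eq:VIneq} the integrand is dominated by $c_1^{-1/p}V^{1/p}$, and for each fixed $s$ the pair $\left(\bm \gamma_\phi(s),\partial_s \bm \gamma_\phi(s)\right)$ is exactly a flow-transported variation lying in the region where \eqref{eq:CarIneqset} holds, so the pointwise contraction propagates into decay of the integral exactly as in \emph{Lemma~\ref{Le:GGInv}} and \emph{Theorem 1} of \cite{Forni}. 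The set distance is thus bounded above by a decaying quantity, and no derivative of the distance function, no nearest-point selection, and no envelope theorem are ever needed. Your middle paragraph gestures at this Forni-style propagation, but the rigorous route you ultimately commit to is the direct-differentiation one, and that is precisely the part left unresolved.
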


\begin{proof}
Consider an arbitrary Carath\'eodory trajectory ${\bm r_\phi(\phi_0,\bm x_e) }$ outside the funnel ${\bm {\mathcal {F}}}_{c}(\phi)$ for every $\phi$ accumulated on $\mathbb R_{\ge \phi_0}$.
According to \cite{Donchev2005,Aubin1987}, the subset relationship \eqref{eq:ContainDF} indicates that the system $({{\bm {\mathcal {F}}}_{c}(\phi)}, \bm X_\Phi)$ is strongly forward invariant, which suggests that every solution ${\bm r_\phi(\phi_0,\bm x_{in}) }$ starts from $(\phi_0, \bm x_{in}) \in {\rm gph} {{\bm {\mathcal {F}}}_{c}(\phi)}$ remains in ${{\bm {\mathcal {F}}}_{c}(\phi)}$.
Furthermore, the length of an oriented curve connecting ${\bm r_\phi(\phi_0,\bm x_e) }$ to ${\bm r_\phi(\phi_0,\bm x_{in}) }$ can be computed by
\begin{align}
d_F\left( {{{\bm r_\phi(\phi_0,\bm x_e) }},{\bm r_\phi(\phi_0,\bm x_{in}) }} \right) = \int_0^1 {F\left( {\bm \gamma_\phi \left( s \right),\frac{{{\partial}\bm \gamma_\phi \left( s \right)}}{{{\partial}s}}} \right){\rm d}s},\nonumber
\end{align}
where $\bm \gamma_\phi \left( 0 \right) ={\bm r_\phi(\phi_0,\bm \gamma_{\phi_0} \left( 0 \right))}$, $\bm \gamma_\phi \left( 1 \right) = {\bm r_\phi(\phi_0,\bm \gamma_{\phi_0} \left( 1 \right)) }$, $\bm \gamma_{\phi_0} \left( 0 \right) = \bm x_e$ and $\bm \gamma_{\phi_0} \left( 1 \right) = \bm x_{in}$. 
Thus, there exists a scalar function $\mu\left(\phi\right) \in \left(0, 1\right)$, such that $\bm \gamma_{\phi} \left( \mu \left(\phi\right)\right) \in \partial {\bm {\mathcal {F}}}_{c}\left(\phi\right)$. And it is obvious that
\begin{align}
\underline d_{F}\left( {{\bm r_\phi(\phi_0,\bm x_e)},{{\bm {\mathcal {F}}}_{c}(\phi)}} \right) \le  \int_0^{\mu\left(\phi\right)} {F\left( {\bm \gamma_{\phi} \left( s \right),\frac{{{\partial}\bm \gamma_{\phi} \left( s \right)}}{{{\partial}s}}} \right){\rm d}s}.\nonumber
\end{align}
Since the system $({{\bm {\mathcal {F}}}_{c}(\phi)}, \bm X_\Phi)$ is strongly forward invariant, $\mu\left(\phi\right)$ is monotonically non-decreasing, further there exists
\begin{align}
&\underline d_{F}\left( {{\bm r_\phi(\phi_0,\bm x_e)},{{\bm {\mathcal {F}}}_{c}(\phi)}} \right) \nonumber\\  \le &{c_1}^{ - \frac{1}{p}} \int_0^{\mu\left(\phi_0\right)} {V\left( {\bm \gamma_{\phi} \left( s \right),\frac{{{\partial}\bm \gamma_{\phi} \left( s \right)}}{{{\partial}s}}} \right)^{\frac{1}{p}}{\rm d}s},\nonumber
\end{align}

The remaining analysis is then tantamount to that of \emph{Lemma~\ref{Le:GGInv}} and \emph{Theorem 1} in \cite{Forni}, which further shows that $\underline d_{F}({{\bm r_\phi(\phi_0,\bm x_e)},{{\bm {\mathcal {F}}}_{c}(\phi)}})$ is 0-S, 0-AS, or 0-ES, respectively.
\end{proof}

\begin{Remark}
\label{Le:Practical}
From a practical perspective of maintaining periodic patterns, \emph{Lemma~\ref{Le:GGssInv}} indicates that we can develop a periodic funnel satisfying \eqref{eq:ContainDF} but not necessarily along the evolution of $\bm X_{\Phi}$ as the target, which enhances desired trajectory design freedom comparing to elaborately selecting an  exact single-valued solution subject to the physical constraints.
\end{Remark}

\begin{Remark}
Similar to \emph{Remark~\ref{Re:Weak1}}, in the weak sense, one can use \eqref{eq:KerIneqinf} instead of \eqref{eq:CarIneqset}, and use the condition
${\bm X_\Phi }\left( {\phi ,\bm x} \right) \cap D{{\bm {\mathcal {F}}}_{c}}\left( {\phi ,\bm x} \right)\left( 1 \right) \ne \emptyset$ 
instead of \eqref{eq:ContainDF}.
If the following three conditions are satisfied: ${ {\overline {\bm{\mathcal { F}}}}_{c}} := \mathop {\lim \sup }_{\phi  \to  + \infty } {{\bm {\mathcal {\bm F}}}_{c}}\left( \phi  \right)$, the phase angle $\phi$ completely relies on the states $\bm x$, and $d_{F\!H}\left({{ \bm{\mathcal { F}}}_{c}}(\phi),{ {\overline {\bm{\mathcal { F}}}}_{c}}\right)$ is 0-S, 0-AS, or 0-ES in the phase coordinate, then in the strong sense, we can use ${\bm X_\Phi }\left( {\phi ,\bm x} \right) \subseteq {\hat T_{ { \overline{\mathcal {\bm F}}}_{c}}}(\bm x)$ instead of \eqref{eq:ContainDF}, or in the weak sense, we can use ${\bm X_\Phi }\left( \phi, \bm x \right) \cap {\hat T_{ { \overline{\mathcal {F}}}_{c}}}(\bm x) \ne \emptyset$, for \emph{a.e.} $\phi \in \mathbb R_{\ge \phi_0}$ and $\forall \bm x \in D$.
\end{Remark}

\begin{Theorem}[Single-valued contraction]
\label{th:Finnal1} Consider the system defined in \eqref{eq:ProbFormAuto}, if there exists a phase angle $\phi_0 \in \mathcal S^1$ and a compact nonempty set $R_0 \subset D$ satisfying that ${\mathcal R}^{*}_{\phi}( \bm X_\Phi,R_0) \subset {\mathcal R}^{*}_{\phi+2\pi}( \bm X_\Phi,R_0)$, and $\overline{\mathcal R}_0 := \bigcup_{\phi \ge {\phi_0}} {\mathcal R}^{*}_{\phi}( \bm X_\Phi,R_0)$ is closed and bounded, and there exists a regular and locally Lipschitz
candidate Finsler-Lyapunov function $V \left( {\bm x,\delta \bm x} \right)$ satisfying \eqref{eq:VIneq} and \eqref{eq:KerIneq}, 
the phase-based system is 

\begin{enumerate}[i)]
\item stable \emph{w.r.t.} some specific periodic single-valued solution contained in $\overline{\mathcal R}_0$, if we can set $\alpha \left( s \right) = 0$ for $s \ge 0$;

\item asymptotically stable \emph{w.r.t.} some specific periodic single-valued solution contained in $\overline{\mathcal R}_0$, if $\alpha$ is of class~$\mathcal K$;

\item  exponentially stable \emph{w.r.t.} some specific periodic single-valued solution contained in $\overline{\mathcal R}_0$, if $\alpha \left( s \right) := \lambda s$ for $s \ge 0$, with $\lambda$ being a positive constant.
\end{enumerate}



\end{Theorem}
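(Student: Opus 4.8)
The plan is to split the assertion into an existence statement — there is a single-valued $2\pi$-periodic solution lying in $\overline{\mathcal R}_0$ — and a stability statement relating every solution to it, and to extract these from \emph{Theorem~\ref{th:PeriodicRf}} and \emph{Lemma~\ref{Le:GGInv}} respectively. First I would apply \emph{Theorem~\ref{th:PeriodicRf}} with $\varphi = 2\pi$: the hypotheses are met verbatim, since ${\mathcal R}^{*}_{\phi}(\bm X_\Phi,R_0) \subset {\mathcal R}^{*}_{\phi+2\pi}(\bm X_\Phi,R_0)$ is exactly the required nesting and $\overline{\mathcal R}_0$ is closed and bounded. This produces a compact $R_\infty \subset \overline{\mathcal R}_0$ whose Rf-solution is $2\pi$-periodic, ${\mathcal R}^{*}_{\phi}(\bm X_\Phi,R_\infty) = {\mathcal R}^{*}_{\phi+2\pi}(\bm X_\Phi,R_\infty)$; equivalently $R_\infty$ is a fixed set of the Poincar\'e map $\mathcal P_{2\pi}^{\bm X_\Phi}$, and by the union identity \eqref{eq:unionA} the single-point return ${\mathcal R}^{*}_{\phi_0+2\pi}(\bm X_\Phi,\cdot)$ sends each point of $R_\infty$ into a compact subset of $R_\infty$.

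Next I would read off the incremental behaviour from the Finsler-Lyapunov data. Because $V$ is regular and locally Lipschitz and satisfies both \eqref{eq:VIneq} and the contraction estimate \eqref{eq:KerIneq}, \emph{Lemma~\ref{Le:GGInv}} yields $\delta$-S, $\delta$-AS, or $\delta$-ES on $\mathcal M_\Phi$ according to whether $\alpha \equiv 0$, $\alpha$ is of class $\mathcal K$, or $\alpha(s)=\lambda s$. In the asymptotic and exponential cases every pair of trajectories, hence every pair of Rf-solutions, converges; the periodic Rf-solution of the first step is therefore the common limit, and since a uniformly contracting flow cannot keep an invariant phase cross-section with nonempty interior, this limit must be single-valued. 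Thus the periodic Rf-solution collapses to a single-valued $2\pi$-periodic orbit $\bm r^{\ast}_\phi \subset \overline{\mathcal R}_0$, and $\delta$-AS (resp. $\delta$-ES) applied to the pair $(\bm r_\phi(\bm x_0),\bm r^{\ast}_\phi)$ gives the asymptotic (resp. exponential) stability w.r.t. $\bm r^{\ast}_\phi$ asserted in ii) and iii). Equivalently, $\delta$-ES renders some iterate of $\mathcal P_{2\pi}^{\bm X_\Phi}$ contractive w.r.t. a single point, so the multivalued contraction fixed-point theorem invoked before \emph{Theorem~\ref{th:PeriodicRf}} delivers $\bm r^{\ast}_\phi$ directly.

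The stable case i) is the delicate one, since $\delta$-S provides no contraction and neither the collapse argument nor the Banach-type fixed point is available. Here I would obtain the single-valued periodic orbit topologically: by \emph{Remark~\ref{Re:CCB}} and the standing \emph{Condition A}, the single-point return map $\bm x \mapsto {\mathcal R}^{*}_{\phi_0+2\pi}(\bm X_\Phi,\bm x)$ is upper semicontinuous with nonempty compact values, and by the first step it sends $R_\infty$ into subsets of $R_\infty$. Passing to a nonempty convex compact domain on which this operator admits a Kakutani-type self-map then furnishes $\bm x_\ast$ with $\bm x_\ast \in {\mathcal R}^{*}_{\phi_0+2\pi}(\bm X_\Phi,\bm x_\ast)$, i.e. a single-valued $2\pi$-periodic solution in $\overline{\mathcal R}_0$. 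Stability w.r.t. this orbit is then immediate from the $\delta$-S estimate $d(\bm r_\phi(\bm x_1),\bm r_\phi(\bm x_2)) \le \beta(d(\bm x_1,\bm x_2))$ of \emph{Lemma~\ref{Le:GGInv}} with $\beta$ of class $\mathcal K$, specialised to $\bm x_2$ on the periodic orbit.

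The principal obstacle I anticipate is exactly this existence step in case i): securing a convex compact set left invariant by the Poincar\'e operator so that a Kakutani--Fan--Glicksberg fixed-point theorem applies, since $R_\infty$ need not be convex and the reachable cross-sections of a convex-valued inclusion need not be convex either. The resolution should lean on the convexity and upper semicontinuity embedded in \emph{Condition A} together with the monotone nesting guaranteed by \emph{Theorem~\ref{th:PeriodicRf}} to show that the relevant convexified cross-section stays forward invariant under $\bm X_\Phi$. Once existence is established, the three stability conclusions are routine specialisations of \emph{Lemma~\ref{Le:GGInv}}.
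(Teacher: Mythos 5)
For conclusions ii) and iii) your route is the paper's route, almost word for word: the paper's entire proof is the single sentence following the theorem, which combines \emph{Theorem~\ref{th:PeriodicRf}} (existence of a periodic Rf-solution from the nested, bounded cross-sections), \emph{Lemma~\ref{Le:GGInv}} (to get $\delta$-AS or $\delta$-ES from \eqref{eq:VIneq} and \eqref{eq:KerIneq}), and the remark following the incremental stability definition, which states that a common limit of all Rf-solutions under uniform contraction must be unique and single-valued --- your ``incompressible cross-section'' collapse argument is exactly that remark, and your observation that $\delta$-ES makes an iterate of the Poincar\'e map contractive so that Nadler's theorem applies matches the paper's discussion preceding \emph{Theorem~\ref{th:PeriodicRf}}. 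So in the two contractive cases the proposal and the paper agree.

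Case i) is where you depart, and your diagnosis of the difficulty is correct; indeed it exposes a gap in the paper's own proof. With $\alpha \equiv 0$, \emph{Lemma~\ref{Le:GGInv}} yields only $\delta$-S, the collapse remark is inapplicable (it presupposes that all Rf-solutions converge to the candidate), and Nadler's theorem needs a contraction; yet the paper's proof simply asserts that \emph{Theorem~\ref{th:PeriodicRf}} supplies ``at least one periodic single-valued solution'', when that theorem only produces a periodic \emph{set-valued} Rf-solution $R_\infty$. Your attempted repair is, as you yourself flag, not yet a proof: the return map $\bm x \mapsto {\mathcal R}^{*}_{\phi_0+2\pi}(\bm X_\Phi,\bm x)$ is upper semicontinuous with compact values, but reachable sets of a convex-valued inclusion need not be convex, so Kakutani--Fan--Glicksberg does not apply as stated. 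The standard way to close this is to use that, under \emph{Condition A}, the solution set (hence each return set) through a point is an $R_\delta$ set, therefore acyclic, and to invoke a fixed-point theorem for upper semicontinuous acyclic-valued maps (Eilenberg--Montgomery) on a compact convex set left invariant by the return map; constructing that invariant convex set (for instance from the $\delta$-S estimate around $R_\infty$) is the remaining work. In short, you reproduce the paper where the paper is sound, and you have correctly located, though not closed, the hole in case i) that the paper's one-line proof passes over.
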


By implementing \emph{Theorem \ref{th:PeriodicRf}}, we know there exists at least one periodic single-valued solution, and according to \emph{Lemma~\ref{Le:GGInv}} and \emph{Remark~\ref{Re:SuitGraph}}, all the Rf-solutions converges to a certain single-valued solution, which together verify \emph{Theorem \ref{th:Finnal1}}.

\begin{Theorem}[Set-valued contraction]
\label{th:Finnal2} In the system \eqref{eq:ProbFormAuto}, if there exists a phase angle $\phi_0 \in \mathcal S^1$ and a compact nonempty set $R_0 \subset D$ satisfying that ${\mathcal R}^{*}_{\phi}( \bm X_\Phi,R_0) \subset {\mathcal R}^{*}_{\phi+2\pi}( \bm X_\Phi,R_0)$, and $\overline{\mathcal R}_0$ is closed and bounded, and there exists a regular and locally Lipschitz
candidate Finsler-Lyapunov function $V \left( {\bm x,\delta \bm x} \right)$ satisfying the inequalities \eqref{eq:VIneq} and \eqref{eq:CarIneqset}, 
and the funnel ${{\bm {\mathcal {F}}}_{c}}$ is itself a periodic Rf-solution, the phase-based system with initial sets containing ${\bm {\mathcal {F}}}_{c}$ is 
\begin{enumerate}[i)]
\item graphically stable \emph{w.r.t.} the funnel ${{\bm {\mathcal {F}}}_{c}}$, if we can set $\alpha \left( s \right) = 0$ for $s \ge 0$;

\item graphically asymptotically stable \emph{w.r.t.} the funnel ${{\bm {\mathcal {F}}}_{c}}$, if $\alpha$ can be selected from class~$\mathcal K$ function;

\item  graphically exponentially stable \emph{w.r.t.} the funnel ${{\bm {\mathcal {F}}}_{c}}$, if $\alpha \left( s \right) := \lambda s$ for $s \ge 0$, with $\lambda$ being a positive constant.
\end{enumerate}
\end{Theorem}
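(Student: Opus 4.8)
The plan is to mirror, \emph{mutatis mutandis}, the two-line modular argument used for \emph{Theorem~\ref{th:Finnal1}}, but with two substitutions: the uniform incremental contraction of \emph{Lemma~\ref{Le:GGInv}} is replaced by the funnel-relative contraction of \emph{Lemma~\ref{Le:GGssInv}}, and the single-valuedness conclusion of \emph{Remark~\ref{Re:SuitGraph}} is replaced by the graphical equivalence of \emph{Lemma~\ref{Le:NoRfsolu}}. First I would fix the nominal periodic pattern. The monotonicity hypothesis ${\mathcal R}^{*}_{\phi}( \bm X_\Phi,R_0) \subset {\mathcal R}^{*}_{\phi+2\pi}( \bm X_\Phi,R_0)$ together with the closedness and boundedness of $\overline{\mathcal R}_0$ are exactly the hypotheses of \emph{Theorem~\ref{th:PeriodicRf}} (with $\varphi=2\pi$), so a $2\pi$-periodic Rf-solution exists and the framework is non-vacuous; combined with the standing assumption that the funnel ${\bm {\mathcal {F}}}_{c}$ is itself a periodic Rf-solution, I would set $R_\ast:={\bm {\mathcal {F}}}_{c}$ as the common nominal object required by \emph{Lemma~\ref{Le:NoRfsolu}}.

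Next I would supply the contraction input. Since $V(\bm x,\delta\bm x)$ is regular and locally Lipschitz and satisfies \eqref{eq:VIneq} and \eqref{eq:CarIneqset}, and since the funnel meets the containment \eqref{eq:ContainDF} (hence $({\bm {\mathcal {F}}}_{c}(\phi),\bm X_\Phi)$ is strongly forward invariant), \emph{Lemma~\ref{Le:GGssInv}} applies verbatim and yields that, for every Carath\'eodory trajectory $\bm r_\phi$ started outside the funnel, the anisotropic distance $\underline d_{F}(\bm r_\phi,{\bm {\mathcal {F}}}_{c}(\phi))$ is $0$-S, $0$-AS, or $0$-ES according to the three choices of $\alpha$. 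This is, up to a reconciliation step, the left-hand premise of equivalences i)--iii) in \emph{Lemma~\ref{Le:NoRfsolu}}: since the $\varepsilon$-graph $G\mathcal S_\phi^{\varepsilon_g}(R_\ast)$ contains the same-phase cross-section $\{(\phi,\bm x):\bm x\in{\bm {\mathcal {F}}}_{c}(\phi)\}$, one has $\underline d_{F}((\phi,\bm r_\phi),G\mathcal S_\phi^{\varepsilon_g}(R_\ast))\le\underline d_{F}(\bm r_\phi,{\bm {\mathcal {F}}}_{c}(\phi))$, so the fixed-phase decay transfers to the $\varepsilon$-graph measure.

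I would then close by invoking \emph{Lemma~\ref{Le:NoRfsolu}}. The hypothesis that every admissible initial set contains ${\bm {\mathcal {F}}}_{c}$ supplies condition~($\ast$): the funnel is convex-valued and path-connected, its $\varepsilon$-graph sits inside $G\mathcal S_\phi^{\varepsilon_g}(R_0)$ and $G\mathcal S_\phi^{\varepsilon_g}(R_1)$ by the union decomposition \eqref{eq:unionA} over initial sections (so that $R_\ast\subset R_0$ forces $\mathcal R^*_\phi(\bm X_\Phi,R_\ast)\subset\mathcal R^*_\phi(\bm X_\Phi,R_0)$), and it is the same $R_\ast$ for every enclosing pair. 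The region $G_D$ on which the trajectories of \emph{Lemma~\ref{Le:NoRfsolu}} live coincides with the funnel exterior $\mathcal M_\Phi\setminus{\bm {\mathcal {F}}}_{c}(\phi)$ on which \eqref{eq:CarIneqset} is assumed, so the $0$-S/$0$-AS/$0$-ES just obtained is precisely the premise of i)--iii). Applying the ``$\Rightarrow$'' direction upgrades the trajectory-to-funnel convergence to $\delta$-GS, $\delta$-GAS, or $\delta$-GES of the Rf-solutions relative to the funnel, which is the claimed graphical stability w.r.t. ${\bm {\mathcal {F}}}_{c}$; the anisotropy of $d_F$ causes no difficulty because $R_\ast\subset R_0$ makes the reverse Hausdorff direction vanish, so $d_{F\!H}$ collapses to the one-sided funnel distance already controlled.

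The hard part will be matching the two contraction vocabularies: \emph{Lemma~\ref{Le:GGssInv}} certifies decay of a \emph{fixed-phase} cross-section distance, whereas \emph{Definition~\ref{def:Del-GStability}} (through \emph{Lemma~\ref{Le:NoRfsolu}}) measures a \emph{translation-independent} $\varepsilon$-graph distance that deliberately quotients out phase-sliding along the periodic funnel. The delicate point is to confirm that the phase-window $\varepsilon_g\mathbb B$ and the admissible time-translation do not inflate the estimate; this rests on the funnel being periodic and on the phase accumulating monotonically with bounded speed (\emph{Assumption~\ref{as:Assumption1}}), so that neighbouring phases yield neighbouring cross-sections with a Lipschitz modulus, and on the inclusion ${\bm {\mathcal {F}}}_{c}\subset R_0$ collapsing the Hausdorff metric to the direction the contraction already handles.
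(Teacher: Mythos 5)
Your proposal is correct and follows essentially the same route as the paper, whose entire proof is a single sentence citing exactly the three ingredients you assemble: \emph{Theorem~\ref{th:PeriodicRf}} for existence of the periodic Rf-solution, \emph{Lemma~\ref{Le:GGssInv}} for the funnel-relative contraction, and \emph{Lemma~\ref{Le:NoRfsolu}} to upgrade trajectory-to-funnel convergence to graphical stability. Your additional work---verifying condition~($\ast$) via ${\bm {\mathcal {F}}}_{c}\subset R_0$, noting the one-sided collapse of $d_{F\!H}$, and reconciling the fixed-phase distance of \emph{Lemma~\ref{Le:GGssInv}} with the $\varepsilon$-graph distance of \emph{Definition~\ref{def:Del-GStability}}---makes explicit the glue steps the paper leaves implicit, but does not change the approach.
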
 

Based on \emph{Theorem \ref{th:PeriodicRf}}, we know there exists at least one periodic Rf-solution, and according to \emph{Lemma~\ref{Le:NoRfsolu}} and \emph{Lemma~\ref{Le:GGssInv}}, all the Rf-solutions graphically converge to the periodic Rf-solution ${{\bm {\mathcal {F}}}_{c}}$, which together verify \emph{Theorem \ref{th:Finnal2}}.

\section{Application in Biomimetic Robot System}
Biomimetic robot systems can be described by the hybrid dynamical systems, and the \emph{hybrid zero dynamics (HZD)} method lends itself to locomotion nonlinear feedback controller designs. In this section, we employ the geometrical pattern to attack this problem with the help of the differential inclusion and the incremental stability analysis.
Although the jumps in hybrid systems are not concerned in our previous analyses, the hybrid dynamics analysis is practically available because of the graphical closeness discussed in \emph{Definition~\ref{def:Del-GStability}}. In order to maintain the periodic pattern, similar to \cite{Li2016}, it is essential to prioritize the continuous phase angle over the jump counter, such that we use ${d_{F\!H}}( {G\mathcal S_{\phi,j_0}^{\varepsilon} \left(R_0\right),G\mathcal S_{\phi,j_1}^{\varepsilon} \left(R_1\right)})$ instead of ${d_{F\!H}}( {G\mathcal S_{\phi}^{\varepsilon} \left(R_0\right),G\mathcal S_{\phi}^{\varepsilon} \left(R_1\right)})$ to extend \emph{Definition \ref{def:Del-GStability}}. The Rf-solution interval $\mathcal S_{\phi,j}^{\varepsilon}$ is parameterized by the phase angle $\phi \in \mathbb R_{\ge \phi_0}$ and the jump counter $j \in \mathbb N$. Since there probably exists instantaneous increases in $\phi$ after a jump, we emphasize that only the physically present continuous evolution is taken into account of the interval $\varepsilon$ consumption.

The nonlinear input-affine system used in dynamic walking studies \cite{MorrisB,Chevallereau2009,Hamed,Reher2020}, associated with the discrete instantaneous impact, can be formulated as a hybrid dynamics system: 
\begin{equation}
 \label{eq:Hybridwalk}
\Sigma (t,j):\left\{ \begin{array}{l}
 \dot{\bm x} = \bm f\left( \bm x \right) +  G\left( {\bm x} \right) \bm u,~~~~{\bm x} \notin \mathcal S \\ 
 {{\bm x}^ + } =\bm \Delta \left( {\bm x} \right),~~~~~~~~~~~~~~{\bm x} \in \mathcal S \\ 
 \end{array} \right.
\end{equation}
where the state $\bm x := {\left( {{\bm q^\top},{{\dot {\bm q}}^\top}} \right)^\top}
 \in  \mathcal X \subset {\mathbb R^{2n}},$ the generalized coordinate vector $\bm q\in \mathcal Q$ and $\mathcal X = {\rm T}\mathcal Q \subset {\mathbb R^{n}}$, with ${\rm T} \mathcal Q$ being the tangent bundle of $\mathcal Q$, ${\bm x^ + }(t) := \mathop {\lim }_{\tau  \searrow t} \bm x\left( \tau  \right)$ indicates the next value of the state after the jump, 
 the control input $\bm u \in \mathcal U \subset \mathbb R^m$, the Lipschitz continuous function $\bm f:\mathcal X \to {\rm T} \mathcal X$, and ${\rm T} \mathcal X$ is the tangent bundle of $\mathcal X$, $G:\mathcal X \to \mathbb R ^{2n \times m}$, and the reset map $\bm \Delta:  \mathcal X \to  \mathcal X$. The codimension-1 submanifold termed \emph{guard} or \emph{switching surface}, defined as $\mathcal S:=\left\{\bm x \in \mathcal X | s\left( \bm x \right) = 0 \right\}$, with $s: \mathcal X \to \mathbb R$ satisfying $\nabla_{\bm x} s \left(\bm x\right) \ne 0$, is the space where jump happens, which is naturally chosen as the Poincar\'e section to construct the Poincar\'e map. Several basic assumptions about dynamic walking are resorted to for rigorous analysis, such as being complete but neither eventually continuous nor eventually discrete (\emph{c.f.} \cite{Goebel2011} for their definitions), and possessing no \emph{Zeno} or \emph{beating} behaviors (\emph{c.f.} \cite{Haddad2006} for their definitions).
And the desired output function is given by ${\bm y_d}\left( {{\phi _d}\left( \bm q \right),\bm \alpha } \right)$, with the strictly monotonic phase parameterization function ${\phi _d}: \mathcal Q \to \mathcal S^1$, $\bm \alpha \in \mathbb R^k$ being the associated parameter vector, and $\bm y_d: \mathcal S^1 \times \mathbb R^k \to \mathcal Q$ is the desired vector function to be controlled depending on the phase. 
The \emph{virtual constraint} can then be imposed by zeroing the output $\bm y\left(\bm q,\bm \alpha \right): = {\bm y_a}\left(\bm q \right) - {\bm y_d}\left( {{\phi _d}\left(\bm q \right),\bm \alpha } \right) \equiv \bm 0$, with $\bm y\left(\bm q,\bm \alpha \right)$ being the corresponding difference between actual and desired outputs, which also plays a central role in specifying the \emph{zero dynamics manifold}: $\mathcal Z: = \left\{ \bm x \in \mathcal X | \bm y\left( {\bm q,\bm \alpha} \right) = \bm 0,{\mathcal L_f}\bm y\left( {\bm q, \dot{\bm q},\bm \alpha} \right) = \bm 0 \right\}$. In order to make $\mathcal Z$ forward invariant and attractive, the feedback linearizing controller can be employed:
\begin{align}
 \label{eq:Hybridinput}
 &\bm u\left(\bm x \right) = {\left( {{\mathcal L_G}{\mathcal L_f}\bm y\left( {\bm x,\bm \alpha } \right)} \right)^{ - 1}}\left( { - \mathcal L_f^2\bm y\left( {\bm x,\bm \alpha } \right) - \bm \mu } \right), \\ 
 &\bm \mu  = \frac{1}{\kappa ^2}{K_p}\bm y\left( {\bm x,\bm\alpha } \right) + \frac{1}{\kappa }{K_d}{\mathcal L_f}\bm y\left( {\bm x,\bm \alpha } \right), \nonumber 
 \end{align}
 as long as ${{\mathcal L_G}{\mathcal L_f}\bm y\left( {\bm x,\bm \alpha } \right)} $ is invertible, where $\mathcal L_f$ and $\mathcal L_G$ represent the Lie derivative along $f$ and $G$, respectively. And the feedback gains $K_p, K_d \in \mathbb R^{n \times n}$ are positive definite, making the generated linear system locally exponentially stable. The parameter $\kappa \in \mathbb R_+$ can modulate the rate of convergence. When the impact invariance condition $\bm \Delta(\mathcal S \cap \mathcal Z) \subset \mathcal Z$ is satisfied, the system can then achieve periodic stable walking. 

In the feedback linearizing controller \eqref{eq:Hybridinput}, the stability is restricted to the local and periodic fashion, which narrows the applicability of the designed controller. And the uncertainty of the model and the intermittent unforeseeable interactions with the environment should be described more intrinsically. Thus, consider the differential inclusion hybrid system \eqref{eq:Hybridwalk}:
\begin{equation}
 \label{eq:HybridwalkDiff}
\Sigma^d (t,j):\left\{ \begin{array}{l}
 \dot{\bm x} \in \bm f^d \left( \bm x \right) + G^d \left( {\bm x} \right) \bm u,~~~{\bm x} \notin \mathcal S \\ 
 {{\bm x}^ + } \in \bm \Delta^d  \left( {\bm x} \right),~~~~~~~~~~~~~~{\bm x} \in \mathcal S \\ 
 \end{array} \right.
\end{equation}
where $\bm f^d: \mathcal X \to {\rm comp}\left({\rm T} \mathcal X\right)$, $ G^d: \mathcal X \to {\rm comp}\left(\mathbb R ^{2n \times m}\right)$, and $\bm \Delta^d:  \mathcal X \to  {\rm comp}\left(\mathcal X\right)$ are all nonempty, compact, convex-valued, and upper semi-continuous. 
And the corresponding nominal orbit is also uncertain, which is thus given by the funnel ${\bm {\mathcal {F}}}_{c}(\phi)$, resembling a set-valued version of zero dynamics manifold. By implementing \emph{Theorem \ref{th:Finnal2}}, we demonstrate the corresponding controller design in the rest of this section, and keep the discussion to the minimum for illustrative purpose only.
First, the differential dynamics is 
\begin{equation}
 \label{eq:HybridwalkDiffDelta}
\Sigma _\delta ^d(t,j):\left\{ {\begin{array}{*{20}{c}}
   {\dot \delta {\bm x} \in F_\delta ^d\left( {{\bm x},{\bm u}} \right)\delta {\bm x} + G ^d\left( {\bm x} \right)\delta {\bm u},~{\bm x} \notin \mathcal S}  \\
   {\delta {{\bm x}^ + } \in \bm \Delta _\delta ^d\left( {{\bm x}} \right)\delta {\bm x}, ~~~~~~~~~~~~~~~~~{\bm x} \in \mathcal S}  \\
\end{array}} \right.
\end{equation}
where $F_\delta ^d\left( {{\bm x},{\bm u}} \right):= {{\partial {\bm f^d}} \mathord{\left/
 {\vphantom {{\partial {\bm f^d}} {\partial \bm x}}} \right.
 \kern-\nulldelimiterspace} {\partial \bm x}} + \sum\nolimits_{i = 1}^m {\left( {{{\partial \bm g_i^d} \mathord{\left/
 {\vphantom {{\partial \bm g_i^d} {\partial \bm x}}} \right.
 \kern-\nulldelimiterspace} {\partial \bm x}}} \right)}  \cdot {u_i}$, $g_i^d$ is the $i$\,th column of $G^d$, and $u_i$ is the $i$\,th entry of $\bm u$, moreover, $\bm \Delta _\delta ^d\left( {{\bm x}} \right):= {{\partial {\bm \Delta^d}} \mathord{\left/
 {\vphantom {{\partial {\bm \Delta^d}} {\partial \bm x}}} \right.
 \kern-\nulldelimiterspace} {\partial \bm x}}$. 
Then we use the phase parameterization function ${\phi _d}\left( \bm q \right)$ as the phase angle $\phi$ to describe the dynamics, and invoke the continuous part of \eqref{eq:HybridwalkDiff} to generate ${{{\rm d}\phi } \mathord{\left/{\vphantom {{{\rm d}\phi } {d{\rm d}t}}} \right.\kern-\nulldelimiterspace} {{\rm d}t}}$:
\begin{equation}
{\rm D}_\phi \left( {{\bm x},{\bm u}} \right): = \frac{{{\rm d}\phi }}{{{\rm d}t}} \in {\left[ {\begin{array}{*{20}{c}}
   {{{\left( {{\nabla _{\bm q}}{\phi _d}} \right)}^\top}} & {{\bm 0^\top}}  \\
\end{array}} \right]}\left[ {\bm f^d \left( \bm x \right) + G^d \left( {\bm x} \right) \bm u} \right].
\nonumber
\end{equation}
In phase coordinate, the dynamics \eqref{eq:HybridwalkDiffDelta} can be recast as
\begin{equation}
\Sigma _\delta ^d(\phi,j):\left\{ {\begin{array}{*{20}{c}}
   {\dot \delta {\bm x} \in \bar F_\delta ^d\left( {{\bm x},{\bm u}} \right)\delta {\bm x} + \bar G ^d\left( {\bm x},{\bm u}\right)\delta {\bm u},~{\bm x} \notin \mathcal S}  \\
   {\delta {{\bm x}^ + } \in \Delta _\delta ^d\left( {{\bm x}} \right)\delta {\bm x}, ~~~~~~~~~~~~~~~~~~~~~{\bm x} \in \mathcal S}  \\
\end{array}} \right.\nonumber
\end{equation}
where $\dot \delta {\bm x}$ should be interpreted as ${{{\rm d}\delta {\bm x}} \mathord{\left/
 {\vphantom {{{\rm d}\delta x} {{\rm d}\phi }}} \right.
 \kern-\nulldelimiterspace} {{\rm d}\phi }}$, $\bar F_\delta ^d\left( {{\bm x},{\bm u}} \right):= {\rm D}_\phi^{-1}\left( {{\bm x},{\bm u}} \right) F_\delta ^d\left( {{\bm x},{\bm u}} \right)$, and $\bar G_\delta ^d\left( {{\bm x},{\bm u}} \right):= {\rm D}_\phi^{-1}\left( {{\bm x},{\bm u}} \right) G_\delta ^d\left( {{\bm x}} \right)$. In order to satisfy the inequality \eqref{eq:KerIneq}, taking $\bar f_\delta ^d\left( {{\bm x},{\bm u}} \right):= {\rm D}_\phi^{-1}\left( {{\bm x},{\bm u}} \right) f_\delta ^d\left( {{\bm x}} \right)$, we define the differential controller $\delta \bm u$ as
 \begin{equation}
 \label{eq:deltauBelong}
 \delta \bm u  \in  - {\left( {\bm B{\bm B^\top}} \right)^{ - 1}}\left( {A + \kappa_\alpha \alpha \left( V \left( \bm x, \delta \bm x \right)\right)} \right){\bm B^\top},\\ 
\end{equation}
where $A: = \frac{{\partial V}}{{\partial \bm x}}\left( {{{\bar {\bm f}}^d}\left( \bm x \right) + {{\bar G}^d}\left( \bm x \right)\bm u} \right) + \frac{{\partial V}}{{\partial \delta \bm x}}\bar F_\delta ^d\left( \bm x \right)\delta \bm x$, the row vector $\bm B: = \frac{{\partial V}}{{\partial \delta \bm x}}{{\bar G}^d}\left( \bm x \right)$, and $\kappa_\alpha \in \mathbb R_+$ controls the convergence rate. We can choose a specific value  for $\delta \bm u$ subject to \eqref{eq:deltauBelong}, making $\sup {\partial _C}V\left( {\bm x,\delta \bm x} \right)$ as small as possible, and the limit is denoted as $\delta \bm u_b$. Moreover, \emph{w.r.t.} the funnel ${\bm {\mathcal {F}}}_{c}(\phi)$, it is assumed that we can find a specific control law $\bm u^*(\bm x, \phi)$ to make the continuous evolution invariant in ${\bm {\mathcal {F}}}_{c}(\phi)$ with arbitrary initial condition therein. Then, we use the path integral to calculate the control input $\bm u_b$ according to $\delta \bm u_b$. For $\forall \phi \in {\rm dom}{\bm {\mathcal {F}}}_{c}$, choosing a point $\bm x^* (\phi)$ on the boundary $\partial{\bm {\mathcal {F}}}_{c}(\phi)$, we have 
 \begin{align}
&\bm u_b\left( {\bm c,{\bm u^ * },\phi ,s} \right) \nonumber \\ 
= &{\bm  u^ * }\left( {{\bm  x^ * }\left( \phi  \right),\phi } \right) + \int_0^s {\delta {\bm u_b}\left( {\bm c\left( {\phi ,\mu } \right),\frac{{\partial \bm  c\left( {\phi ,\mu } \right)}}{{\partial \mu }},\bm u_b} \right)} {\rm d}\mu, \nonumber
\end{align}
where $\bm c (\phi,\cdot)$ is a smooth phase-varying curve. This specific curve can be numerically approximated and designed to converge to a geodesic connecting $\bm x$ and $\bm x^*$ \cite{Chaffey2018,Wang2020}.
Then we can use $\bm u_b\left( {\bm c,{\bm u^ * },\phi ,1} \right)$ as the continuous part input.
Moreover, since there exists a state jump in every single step, it may be necessary to determine a start phase angle $\phi_s$ and an end phase angle $\phi_e$ for every step, and we need to elaborately design the funnel so as to satisfy that $\bm \Delta^d  \left({\bm {\mathcal {F}}}_{c}(\phi_e+\varepsilon_{\phi} \mathbb B)+\varepsilon_{\mathcal {F}} \mathbb B\right)\subset {\bm {\mathcal {F}}}_{c}(\phi)$, for $\forall \phi \in \phi_s +\varepsilon_{\phi} \mathbb B$, where $\varepsilon_{\phi} \in \mathbb R_+$ indicates the unachieved part in the phase angle, and $\varepsilon_{\mathcal {F}} \in \mathbb R_+$ indicates the unachieved part in the states, both of which can be chosen according to the convergence behavior achieved by the continuous part controller, that is, as long as the continuous dynamics convergence is sufficiently fast, we can choose arbitrarily small constants.

\section{Conclusion}
In this paper, we study both the average behavior and the retention of the geometrically periodic pattern observed in the geometrically periodic system.
Notwithstanding the illustrative example in robotic dynamic locomotions,
the established theory also has other obvious potential applications in multi-agent systems, morphological dynamics and boundary known uncertain systems by the virtue of its set-valued nature.
There also exist theoretical extensions: \romannumeral1) the averaging method of multiple geometrical labels and the corresponding manifold development, \romannumeral2) the contraction analysis of periodic orbits with different topologies or fuzzy Finsler structures, and \romannumeral3) the extension of this work to the loop space to explore more topological properties of the periodic patterns.

}

\end{document}